\newcommand{\real}[1][]{\ensuremath{\mathbb{R}^{#1}}}
\newcommand{\naturalnumber}{\ensuremath{\mathbb{N}}}
\newcommand{\family}[1]{\ensuremath{\mathcal{#1}}}
\newcommand{\rbracket}[1]{\ensuremath{\left( #1\right)}}
\newcommand{\sbracket}[1]{\ensuremath{\left\lbrack #1\right\rbrack}}
\newcommand{\cbracket}[1]{\ensuremath{\left\lbrace #1\right\rbrace}}
\newcommand{\set}[2]{\ensuremath{\cbracket{#1:#2}}}
\newcommand{\indicator}[1]{\ensuremath{\mathds{1}{\lbrace{#1}\rbrace}}}
\newcommand{\mat}[1]{\ensuremath{\bm{\mathrm{#1}}}}
\DeclareMathOperator*{\argmax}{arg\,max}
\newcommand{\converge}[1][]{\ensuremath{\xrightarrow{#1}}}
\newcommand{\prob}[2][\rbracket]{\ensuremath{\mathbb{P}#1{#2}}}
\newcommand{\expect}[2][\rbracket]{\ensuremath{\mathbb{E}#1{#2}}}
\newcommand{\dd}[1]{\ensuremath{\mathrm{d}{#1}}}
\newtheorem{prop}{Proposition}
\newtheorem{lemma}{Lemma}
\newtheorem{theorem}{Theorem}
\newtheorem{remark}{Remark}
\title{Testing for sufficient follow-up\\ in  cure models with categorical covariates}
\author[a]{Tsz Pang Yuen\thanks{t.p.yuen@uva.nl}}
\author[a]{Eni Musta\thanks{e.musta@uva.nl}}
\author[b]{Ingrid Van Keilegom\thanks{ingrid.vankeilegom@kuleuven.be}\thanks{I. Van Keilegom acknowledges funding from the FWO and F.R.S. - FNRS (Excellence of Science programme, project ASTeRISK, grant no. G0I3422N), and from the FWO (senior research projects fundamental research, grant no. G047524N).}}
\affil[a]{Korteweg-de Vries Institute for Mathematics, University of Amsterdam, Netherlands}
\affil[b]{ORSTAT, KU Leuven, Belgium}
\date{}
\begin{document}
\maketitle
\begin{abstract}
In survival analysis, estimating the fraction of `immune' or `cured' subjects who will never experience the event of interest, requires a sufficiently long follow-up period. A few statistical tests have been proposed to test the assumption of sufficient follow-up, i.e. whether the right extreme of the censoring distribution exceeds that of the survival time of the uncured subjects. However, in practice the problem remains challenging. To address this, a relaxed notion of `practically' sufficient follow-up has been introduced recently, suggesting that the follow-up would be considered sufficiently long if the probability for the event occurring after the end of the study is very small. All these existing tests do not incorporate covariate information, which might affect the cure rate and the survival times. 
We extend the test for `practically' sufficient follow-up to  settings with 
categorical covariates. While a straightforward intersection-union type test could reject the null hypothesis of insufficient follow-up only if such hypothesis is rejected for all covariate values, in practice this approach is overly conservative and lacks power.
To improve upon this, we propose a novel test procedure that relies on the test decision for one properly chosen covariate value. 
Our approach relies on the assumption that 
the conditional density of the uncured survival time is a non-increasing function of time in the tail region. 
We show that both methods yield tests of asymptotically level $\alpha$ and investigate their finite sample performance through simulations. The practical application of the methods is illustrated using a skin melanoma dataset.
\end{abstract}

\section{Introduction}\label{sec:intro}
The analysis of censored survival data containing subjects who will never experience the event of interest has become popular from both methodological and application point of view. The subjects who are immune to the event of interest are referred to as `cured', while the others, who will eventually experience the event of interest, are referred to as `susceptible'.
Cure models have been developed to analyze such data and utilized to model the effect of covariates on the cure probabilities
and the conditional survival function of the susceptible subjects.
For instance, 
advancements in cancer treatments have led to a growing proportion of patients being cured, making cure models particularly relevant in this context \citep{LB2019}. Other applications of cure models can be found in fertility studies and economics. 

Two major families of cure models have been proposed in the literature, namely, mixture cure models and promotion time models. The mixture cure models were first introduced by \cite{boag_maximum_1949,berkson_survival_1952}.
To incorporate covariates into the model, fully parametric models assuming a logistic model for the cure probability and various parametric assumptions for the conditional survival function of the susceptible subjects were proposed \citep{farewell_model_1977,farewell_use_1982}.
Extensions to semi-parametric models for the conditional survival function, using Cox proportional hazards models \citep{peng_nonparametric_2000,sy_estimation_2000} and accelerated failure time models \citep{li_semi-parametric_2002,zhang_new_2007,vKP2024}, were later introduced.
\cite{PvK2020,burke_likelihood-based_2021} studied estimation approaches for the cure probability under a parametric structure, without imposing assumptions on the conditional survival function of the susceptible subjects.
The use of semi-parametric single-index models for the cure probability, while retaining the Cox proportional hazards assumption for the susceptible subjects, were also proposed \citep{amico_singleindexcox_2019,eni_musta_single-index_2024}.
Nonparametric models for the cure probability have also been introduced \citep{XP2014,LCJK2017}.
Mixture cure models that account for interval censoring or the presence of mismeasured covariates have also been studied \citep{kim_cure_2008, ma_mixed_2010, pal_new_2024, musta_eni_simulation-extrapolation_2021}.
The second class of cure models, the promotion time model, was originally proposed by \cite{yakovlev_stochastic_1996}. This model adopts the Cox proportional hazards framework to allow for a cure fraction.
Since then, various extensions of this model have been studied \citep{zeng_semiparametric_2006, liu_semiparametric_2009, beyhum_jad_extension_2022, lou_semiparametric_2025}.
For a comprehensive review of cure models, we refer to \cite{MZ1996}, \cite{AK2018}, \cite{PY2022} and \cite{MRSZ2024}.
Among the two families of cure models, the present study focuses on the mixture cure model.

Estimation of cure probabilities is difficult because  the cured subjects cannot be distinguished from the censored uncured ones. This requires careful statistical analysis in order to avoid overestimation of the cure fraction. 
One common assumption is that the %sample 
study has a sufficiently long follow-up, in the sense that the support of the uncured survival time is included in the support of the censoring time \citep{MZ1992,MZ1994}. The importance of such assumption underlines the need for a reliable statistical test. A procedure for testing the hypothesis of insufficient follow-up in the absence of covariate information was proposed by \cite{MZ1996} and \cite{MRS2023}. Other methods have also been developed, 
see for example \cite{S2000}, \cite{SO2023} and \cite{XEK2024}.
More recently, a relaxed notion of sufficient follow-up was introduced by \cite{YM2024}, which is characterized by the quantile of the uncured survival time distribution instead of its right end point. Under the assumption that the uncured survival time distribution has a non-increasing density in the tail region, a test procedure based on the smooth Grenander estimator was proposed.

While the existing procedures 
test the hypothesis of insufficient follow-up without accounting for covariate information, 
there is a lack of tests 
that address this issue in presence of covariates. A common assumption of sufficient follow-up over all covariate values is made when we apply the semiparametric mixture cure models \citep{PvK2020,vKP2024,MPvK2022,MPvK2024}
or the nonparametric estimators of \cite{XP2014} and \cite{LCJK2017} to study the effect of covariates on the cure fraction. Accounting for covariates in testing sufficient follow-up is crucial because sufficient follow-up
in the absence of covariates (i.e. under the unconditional setting) does not imply sufficient
follow-up
over all covariate values (i.e. under the conditional setting) as illustrated in Remark~\ref{remark1} below. Hence, relying on just the unconditional test might lead to wrong conditional estimates of cure probabilities. This underscores the necessity for a procedure for testing insufficient follow-up over all covariate values.

Based on the relaxed notion of sufficient follow-up introduced by \cite{YM2024}, we first formulate the extended hypothesis of sufficient follow-up for all covariate values.
Focusing on categorical covariates, we extend the test statistic from the unconditional to the conditional setting and introduce two testing procedures. 
The first method is related to
the intersection-union tests \citep{B1982,BH1996} and rejects the null hypothesis of overall insufficient follow-up only if such hypothesis is rejected by all the individual tests for each covariate value. This method is straightforward to implement but in practice leads to a very conservative procedure and lacks power to detect sufficient follow-up, particularly in settings with many possible covariate values. Instead, the second method relies on choosing appropriately one covariate value, the one for which sufficient follow-up is less likely, and deciding just based on the conditional test decision for that covariate value.
The asymptotic properties of the tests are investigated and their finite sample performance is studied through simulation studies.
We focus solely on accounting for categorical covariates in this paper since continuous covariates need to be handled differently. Moreover, this is typically  the most relevant case in practical applications since even continuous covariates are often discretized.

At first glance, the extension from the unconditional to the conditional setting may appear straightforward, especially for the first method based on the intersection-union type test. However, because of lack of power, the first method is not the focus of this paper. Instead, the novelty consists in introducing the second method, which
is inspired by a max-type test that considers the maximum of individual test statistics over all covariate values. Although the asymptotic properties of the max-type test can be established using the Delta method, its practical performance was very poor.
Motivated by
the idea of the Delta method for the maximum function, the second method first selects an appropriate covariate value at which the follow-up is less likely to be sufficient. This selection is a novel aspect of the present work and will be discussed further later.
Accurate estimation of this covariate value is the main challenge of the method and using bootstrap samples for this estimation is not straightforward. Even if the individual test procedure for fixed covariate value performs well, choosing an inappropriate covariate value can lead to an incorrect test decision.%

The paper is organized as follows. Section~\ref{sec:model_desc} contains the model description. In Section~\ref{sec:proc}, we formulate the problem of testing insufficient follow-up over all covariate values and introduce two test procedures.
The asymptotic properties of the test statistics are presented in Section~\ref{sec:asym_prop} and the finite sample performance of the two methods is investigated
through simulation studies in Section~\ref{sec:simulations}. We analyze a skin melanoma dataset in Section~\ref{sec:app} to illustrate the methods in practice.
The conclusion and discussion are presented in Section~\ref{sec:discuss}.
The proofs are relegated to Appendix~\ref{sec:appendix}.
Additional simulation results are collected in the Supplementary Material.
Software in the form of R code is available on the GitHub repository \url{https://github.com/tp-yuen/cureSFUTest}.

\section{Model description}
\label{sec:model_desc}
Let $T$ be a non-negative random variable denoting the time to the event of interest, which equals infinity if the subject is cured. Suppose $B=\indicator{T<\infty}$ is the uncure status, where $B=1$ if the subject is susceptible to the event of interest (uncured) and $B=0$ if the subject is non-susceptible to the event (cured). The fraction of cured subjects and the survival function of the uncured subjects can depend on various features of the subjects, denoted by a set of covariates ${X}$. We note that the cure probability and the survival time of the uncured subjects do not necessarily depend on the same set of covariates; however, we assume both depend on the same set of covariates for simplicity. Let
\begin{equation}
	\label{eqn:def_p}
	p({x})=\prob{B=1|{X}={x}}
\end{equation} be the conditional probability of being uncured and $F(t|{x})=\prob{T \leq t|{X}={x}}$ be the conditional distribution of $T$.
We denote the survival function of $F(t|{x})$ by $S(t|{x})$.
%In the presence of cured subjects,  
For the covariate values under which cure is attainable, the survival function {$S(t|{x})$} is improper since $\lim_{t\converge\infty}S(t|{x})=1-p({x})$ is strictly positive. Suppose $F_{u}(t|{x})=\prob{T\leq t|{X}={x}, B=1}$ is the conditional distribution function of $T$ given that the subject is uncured and $S_{u}(t|{x})$ is the corresponding survival function. The survival function under the mixture cure model is given by
\begin{equation}
	\label{eq:mcm}
	S(t|{x}) = 1 - p({x}) + p({x})S_{u}(t|{x}),
\end{equation}
where $1 - p({x})$ and $S_{u}(t|{x})$ are called the incidence and the latency, respectively.

Let $C$ be the non-negative censoring time with conditional distribution function $G(\cdot|x)$. Under the random right censoring assumption, we observe the follow-up time $Y=\min(T,C)$ with distribution function $H$ and the censoring indicator $\Delta=\indicator{T\leq C}$. In the presence of right censoring, the uncure status $B$ is a latent variable and the cured individuals cannot be distinguished from the censored but uncured subjects. We further assume that $C$ and $T$ are conditionally independent given the covariates ${X}$.
The observations are independent and identically distributed (i.i.d.) realizations of $(X, Y, \Delta)$, denoted by $\cbracket{(X_i, Y_i, \Delta_i), i=1,\dots,n}$.

In this paper we consider the case in which the covariates $X$ are categorical. Without loss of generality, we assume that $X$ is univariate with finite support $\family{X}$ of cardinality $|\family{X}|=d$.
The case with multiple categorical covariates can be reduced to this setting;
see, for example, the simulation Setting~\ref*{enum:sim_exp_unif_2cov} in Section~\ref{sec:sim_settings}. To estimate the incidence $1-p(x)$, we can first split the sample into $d$ subsamples corresponding to different values 
of $X$. We denote each subsample by $\cbracket{(Y_{xi}, \Delta_{xi}), i=1,\dots,n_x}$, for $x\in\family{X}$, where $n_x=\sum_{i=1}^{n}\indicator{X_i=x}$. 
Let 
\[
\hat{H}_{k,n}(t|x)=\sum_{i=1}^{n_x}\frac{\indicator{Y_{xi}\leq t,\Delta_{xi}=k}}{n_x}
\]
be an empirical estimator of
$H_k(t|x)=\prob{Y_{x1}\leq t, \Delta_{x1}=k}$, for $k=0,1$. Also let $\hat{H}_n(t|x)=\sum_{i=1}^{n_x}\indicator{Y_{xi}\leq t}/n_x$ be the empirical estimator of $H(t|x)=\prob{Y_{x1}\leq t}=H_0(t|x)+H_1(t|x)$.
The Kaplan--Meier estimator (KME) \citep{KM1958} can be used to estimate the conditional survival function for each subsample:
\begin{equation}
	\label{eq:kme}
	{
		\hat{S}_n(t|x)=\Prodi_{s\leq t}\left(
		1 - \frac{\hat{H}_{1,n}(ds|x)}{1-\hat{H}_n(s{-}|x)}
		\right),
	}
\end{equation}
where $\prodi_{s\in A}$ denotes the product-integral over the set $A$ and $\hat{H}_{1,n}(ds|x)$ is the measure associated with $\hat{H}_{1,n}(\cdot|x)$.
We denote $\hat{F}_{n}(t|x)=1-\hat{S}_{n}(t|x)$ for the KME of $F(t|x)$.
Then, the incidence can be estimated by $1-\hat{p}_n(x)=\hat{S}_n(Y_{x(n_x)}|x)$, as proposed by \cite{MZ1992},
where $Y_{x(n_x)}=\max_iY_{xi}$ is the maximum survival time for the subsample corresponding to $X=x$. 

\section{Testing procedures}\label{sec:proc}
Let $\uptau_{G}(x)=\sup\set{t}{G(t|x)<1}$ and $\uptau_{F_{u}}(x)=\sup\set{t}{F_{u}(t|x)<1}$, where $F_u(t|x)=1-S_u(t|x)$.
For each $x\in\family{X}$, the assumption of sufficient follow-up, $\uptau_{F_{u}}(x) \le \uptau_{G}(x)$, is crucial to guarantee that the incidence estimator does not overestimate the cure rate \citep{MZ1992,MZ1994}. 
In the unconditional setting, \cite{MZ1992,MZ1994} considered testing $H_0:\uptau_{F_{u}} \geq \uptau_{G}$ versus $H_1: \uptau_{F_{u}} < \uptau_{G}$. More recently, \cite{YM2024} proposed a more relaxed formulation of sufficient follow-up, which was characterized by the quantile of the uncured survival time distribution. Specifically, the authors considered testing
\begin{equation}
	\label{eq:suff_follow_up_hyp_uncond}
	H_0: q_{1-\epsilon} \geq \uptau_{G}
	\quad\text{versus}\quad
	H_1: q_{1-\epsilon} < \uptau_{G},
\end{equation}
where $q_{1-\epsilon}$ is the $(1-\epsilon)$-quantile of $F_u$ in the unconditional setting and $\epsilon$ is small (for example 1\%).
Such hypotheses are practically motivated by the idea that the follow-up is considered sufficient if the probability of the event occurring after the end of the study is negligible. In a similar context, \cite{SO2023} used simulations to study the performance of estimation and inference for mixture cure model parameters as follow-up time increased. Their findings suggested that the follow-up can be considered sufficient when the probability of the event happening after the study ends is less than 1\%. Specifically, the results showed that improvements in mean squared error and the coverage of the confidence interval for the cure fraction estimate diminish as the follow-up time exceeds the 99\%-quantile of $F_u$.
It is natural to extend the hypotheses in \eqref{eq:suff_follow_up_hyp_uncond} to the conditional setting for testing the sufficient follow-up assumption for fixed $x\in\family{X}$:
\begin{equation}
	\label{eq:suff_follow_up_hyp_fix_cov}
	H_{0x}: q_{1-\epsilon}(x) \geq \uptau_{G}(x)
	\quad\text{versus}\quad 
	H_{1x}: q_{1-\epsilon}(x) < \uptau_{G}(x),
\end{equation}
where $q_{1-\epsilon}(x)=\inf\set{t}{F_{u}(t|x)\geq 1-\epsilon}$.
The test procedure proposed by \cite{YM2024} can then be applied to the corresponding subsample with covariate value $x$ for testing \eqref{eq:suff_follow_up_hyp_fix_cov}.

\begin{remark}
	\label{remark1} Note that it is possible for the follow-up to be sufficient in the unconditional setting but not in the conditional setting. For example, consider a Bernoulli covariate $X\sim B(1,0.5)$ and  $\family{X}=\lbrace0,1\rbrace$ with $F_u(t|0)=1-e^{-t}$ ($q_{0.99}(0)\approx4.6052$), $F_u(t|1)=1-e^{-2t}$ ($q_{0.99}(1)\approx2.3059$), $\uptau_{G}=\uptau_{G}(k)=3$, $k=0,1$. This gives $F_u(t)=0.5(F_u(t|0)+F_u(t|1))$ ($q_{0.99}\approx2.7721$). Then, with $\epsilon=0.01$, we have $q_{1-\epsilon}(1)<\uptau_{G}(1)=\uptau_{G}(0)<q_{1-\epsilon}(0)$, while $q_{1-\epsilon}<\uptau_{G}$. Thus, the follow-up is sufficient for $x=1$ but insufficient for $x=0$, although it remains sufficient in the unconditional setting. 
\end{remark}

Therefore, when estimating the cure rate that depends on covariates, it is crucial to test the sufficient follow-up assumption that holds for all $x\in\family{X}$, i.e., $q_{1-\epsilon}(x) < \uptau_{G}(x)$ for all $x\in\family{X}$. This is tantamount to considering the follow-up as sufficient over all covariate values if, for all $x\in\family{X}$, the conditional probability for the event to happen after the end of the study $\uptau_{G}(x)$ is negligible, %given $X=x$, 
i.e. $S_u(\uptau_{G}(x)|x)<\epsilon$. Therefore we consider the following hypotheses for testing the overall sufficient follow-up assumption:
\begin{equation}
	\label{eq:suff_follow_up_hyp_all_cov}
	H_0: q_{1-\epsilon}(x) \geq \uptau_{G}(x)\quad\text{for some}\ x\in\family{X}
	\quad\text{versus}\quad 
	H_1: q_{1-\epsilon}(x) < \uptau_{G}(x)\quad\text{for all}\ x\in\family{X}.
\end{equation}

Note that the condition of sufficient follow-up for a fixed $x\in\family{X}$ in the conditional setting guarantees that the cure rate estimator determined from the KME (using the subsample corresponding to the value of $x$) does not overestimate the true cure rate for that particular covariate value $x$, so testing $H_{0x}$ in \eqref{eq:suff_follow_up_hyp_fix_cov} is sufficient for this purpose.
We, however, emphasize that the follow-up is required to be sufficient over all covariate values when we are studying the effect of covariate on cure rate using mixture cure models \citep{PK2023,MPvK2022}. In such cases, one would need to
consider testing $H_0$ in \eqref{eq:suff_follow_up_hyp_all_cov}.

\subsection*{Main idea of the test}
Our test relies on the assumption that, for any $x\in\family{X}$, $F_u(t|x)$ has a density function $f_u(t|x)$, which is non-increasing with respect to $t$ in the tail region $[a_x,\tau_{F_u}(x))$ for some $0\leq a_x<\tau_G(x)$.
As in \cite{YM2024}, let $0<\eta<\epsilon$ be a very small number compared to $\epsilon$
and $\tau>q_{1-\epsilon}(x)$ for all $x\in\family{X}$ be such that $F_u(\tau|x)\geq 1-\eta$  for all $x\in\family{X}$. This essentially means that the probability of an event happening after $\tau$ is negligible.
By the monotonicity of $f_u(\cdot|x)$, we have
	\[
	\epsilon-\eta=\int_{q_{1-\epsilon}(x)}^{\tau}f_u(t|x)\dd{t}
	\le
	f_u(q_{1-\epsilon}(x)|x)(\tau-q_{1-\epsilon}(x)).
	\]
	Furthermore, since $f(t|x)=p(x)f_u(t|x)$, where $p(x)$ is the conditional uncure probability defined in~\eqref{eqn:def_p} and $p(x)\ge F(\uptau_G(x)|x)$,
we have a lower bound on $f(q_{1-\epsilon}(x)|x)$
\[
f(q_{1-\epsilon}(x)|x)
\geq\frac{(\epsilon-\eta)p(x)}{\tau-q_{1-\epsilon}(x)}
\geq\frac{(\epsilon-\eta)F(\uptau_{G}(x)|x)}{\tau-q_{1-\epsilon}(x)}.
\]
This means that under $H_0: q_{1-\epsilon}(x) \geq \uptau_{G}(x)\ \text{for some}\ x\in\family{X}$, we have
\begin{equation}
	\label{eq:insuff_follow_up_conseq}
	f(\uptau_{G}(x)|x)\geq f(q_{1-\epsilon}(x)|x)
	%	\geq\frac{(\epsilon-\eta)p(x)}{\tau-\uptau_{G}(x)}
	\geq\frac{(\epsilon-\eta)F(\uptau_{G}(x)|x)}{\tau-\uptau_{G}(x)}
	\quad\text{for some}\ x\in\family{X}.
\end{equation}

To test the hypotheses in \eqref{eq:suff_follow_up_hyp_all_cov}, we split the sample into subsamples corresponding to different values
of $X$ and adopt the procedure proposed by \cite{YM2024} for each subsample. 
The smoothed Grenander estimator with boundary correction of $f(t|x)$ is used; see, for example, \cite{LM2017,GJ2014} for references. The estimator is given by:
\begin{equation}
	\label{eq:sg}
	\hat{f}_{nb_x}(t|x) = 
	\int_{0\vee(t-b_x)}^{(t+b_x)\wedge Y_{x(n_x)}}
	\frac{1}{b_x}k^B_{b_x,t}\rbracket{
		\frac{t-u}{b_x}
	}\hat{f}_{n}^{G}(u|x)
	\dd{u},
	\quad t\in[a_x,Y_{x(n_x)}],
\end{equation}
where $\hat{f}_n^G(\cdot|x)$ is the left derivative of the least concave majorant on $[a_x,Y_{x(n_x)}]$ of the KME $\hat{F}_n(\cdot|x)$, $k^B_{b_x,t}$ is the boundary kernel defined below and $b_x$ is the bandwidth parameter which can be different for each $x\in\family{X}$ and/or can depend on the subsample size $n_x$ instead of $n$. For $v\in[-1,1]$, the boundary kernel $k^B_{b_x,t}(v)$ (see \cite{DGL2013}) is defined by 
\[
k^B_{b_x,t}(v)=
\begin{cases}
	\phi\rbracket{\frac{t-a_x}{b_x}}k(v) + \psi\rbracket{\frac{t-a_x}{b_x}}vk(v)&\quad t\in[a_x,b_x],\\
	k(v)&\quad t\in(b_x,Y_{x(n_x)}-b_x),\\
	\phi\rbracket{\frac{Y_{x(n_x)} - t}{b_x}}k(v) - \psi\rbracket{\frac{Y_{x(n_x)}-t}{b_x}}vk(v)&\quad t\in[Y_{x(n_x)}-b_x,Y_{x(n_x)}],
\end{cases}
\]
where $k$ is a symmetric, twice continuously differentiable kernel with support $[-1,1]$ such that $\int k(u)\dd{u}=1$ and its first derivative is bounded uniformly. 
Here, for $s\in[-1,1]$, the coefficients $\phi(s)$ and $\psi(s)$ (see \cite{ZK1998}) are determined by
\begin{equation}
	\label{eqn:phi,psi}
	\begin{split}
		\phi(s)\int_{-1}^{s}k(v)\dd{v} &+ \psi(s)\int_{-1}^{s}vk(v)\dd{v}=1,\\
		\phi(s)\int_{-1}^{s}vk(v)\dd{v} &+ \psi(s)\int_{-1}^{s}v^{2}k(v)\dd{v}=0.
	\end{split}
\end{equation}
Below we introduce two methods for testing $H_0$ in \eqref{eq:suff_follow_up_hyp_all_cov}.
\paragraph{First method.}
From \eqref{eq:insuff_follow_up_conseq}, $H_0$ entails the following:
\[
f(\uptau_{G}(x)|x)-\frac{(\epsilon-\eta)F(\uptau_{G}(x)|x)}{\tau-\uptau_{G}(x)}\geq0,\quad\text{for some }x\in\family{X}.
\] 
Since $0<\eta<\epsilon$ is negligible, $H_0$ is rejected if
\begin{equation}
	\label{eq:test_discrete}
	\hat{f}_{nb_x}(Y_{x(n_x)}|x)-\frac{\epsilon \hat{F}_{n}(Y_{x(n_x)}|x)}{\tau-Y_{x(n_x)}} 
	< {r_n^{-1}}{\xi_{x,\alpha}}\quad\text{for all}\ x\in\family{X},
\end{equation}
where $\xi_{x,\alpha}$ is the $\alpha$-quantile of the limit distribution of $r_n\lbrace\hat{f}_{nb_x}(Y_{x(n_x)}|x) - {f}(\uptau_G(x)|x)\rbrace$ for some sequence $r_n\to\infty$. This means that we test the sufficient follow-up assumption in each subsample determined by $x$ separately (using different critical values) and at the end reject the combined null hypothesis $H_0$ if $H_{0x}$ was rejected for all values of $x$. Since the individual tests, for each $x\in\family{X}$, have an asymptotic level of $\alpha$, the test with rejection region given in \eqref{eq:test_discrete} also has an asymptotic level of $\alpha$ (see Theorem~\ref{theorem:test_discrete_level}\ref{enum:test_M1_level} in Section~\ref{sec:asym_prop}).
Indeed, if under $H_0$,
$q_{1-\epsilon}(z) \geq \uptau_{G}(z)$ for some $z\in\family{X}$, we have
\begin{align*}
	\begin{split}
		&\prob{\bigcap_{x\in\family{X}}\cbracket{
				\hat{f}_{nb_x}(Y_{x(n_x)}|x)-\frac{\epsilon \hat{F}_{n}(Y_{x(n_x)}|x)}{\tau-Y_{x(n_x)}} 
				< {{r_n^{-1}}\xi_{x,\alpha}}
			}\bigg| H_0}\\
		&\leq\prob{
			\hat{f}_{nb_{z}}(Y_{z(n_z)}|z)-\frac{\epsilon \hat{F}_{n}(Y_{z(n_z)}|z)}{\tau-Y_{z(n_z)}} 
			< {{r_n^{-1}}\xi_{z,\alpha}\bigg| H_0}},
	\end{split}
\end{align*}
where the limit of the probability on the right-hand side of the above inequality is bounded above by $\alpha$.
This test is conservative at first sight because, after conditioning on the covariates, 
the subsamples are independent,
which gives
\begin{equation}
	\label{eq:level_product}
	\begin{aligned}
		&\prob{\bigcap_{x\in\family{X}}\cbracket{
				\hat{f}_{nb_x}(Y_{x(n_x)}|x)-\frac{\epsilon \hat{F}_{n}(Y_{x(n_x)}|x)}{\tau-Y_{x(n_x)}} 
				< {{r_n^{-1}}\xi_{x,\alpha}}
			}\bigg| H_0, X_1,\dots,X_n}\\
		&=\prod_{x\in\family{X}}\prob{
			\hat{f}_{nb_x}(Y_{x(n_x)}|x)-\frac{\epsilon \hat{F}_{n}(Y_{x(n_x)}|x)}{\tau-Y_{x(n_x)}} 
			< {{r_n^{-1}}\xi_{x,\alpha}}
			\bigg| H_0, X_1,\dots,X_n}.
	\end{aligned}
\end{equation}
We note that the worst-case scenario for the level occurs when follow-up is insufficient for only one of the values of $x\in\mathcal{X}$. Then asymptotically, we expect that only one of the probabilities on the right-hand side of \eqref{eq:level_product} is bounded by $\alpha$ and the others converge to one. This also illustrates that, under $H_1$, the power of the test is expected to be the product of the powers for the individual tests for each value of $x\in\family{X}$. In a scenario where the powers of the individual tests for each $x\in\family{X}$ are not close to one, we expect the test procedure to have low power and would further decrease as the number of covariates $|\family{X}|$ increases.
This method relates to the intersection-union tests \citep{B1982,BH1996}, which can be used to test the null hypothesis that an unknown parameter belongs to the union of certain subsets of the parameter space, against the alternative that the parameter belongs to the intersection of the complements of those subsets.

To approximate the critical values, we apply a smoothed bootstrap procedure for each subsample. Let $\hat{f}_{nb_0^x}(\cdot|x)$ be the smoothed Grenander estimator for $f(\cdot|x)$ defined in \eqref{eq:sg} with bandwidth $b_0^x$. Let $\hat{F}_{n}(\cdot|x)$ and $\hat{G}_{n}(\cdot|x)$ be the KMEs for $F(\cdot|x)$ and $G(\cdot|x)$, respectively, based on the subsample corresponding to $X=x$.
Since $\hat{f}_{nb_0^x}(\cdot|x)$ is defined on $[a_x,Y_{x(n_x)}]$, we generate $T_i^\star$ as follows: draw a standard uniform random variable $U_i$; if $U_i\le\hat{F}_n(a_x|X_i)$, set $T_i^\star=\hat{F}_n^{-1}(U_i|X_i)$ with $\hat{F}_n^{-1}(p|X_i)=\inf\lbrace{t\in[0,a_x]:\hat{F}_n(t|X_i)\ge p}\rbrace$; otherwise, generate $T_i^\star$ from $\hat{f}_{nb_0^x}(\cdot|X_i)$ on $[a_x,Y_{x(n_x)}]$.
The censoring time $C_i^\star$ is generated from $\hat{G}_n(\cdot|X_i)$. 
The bootstrap sample is $\lbrace(X_i, Y_i^\star, \Delta_i^\star), i=1,\dots,n\rbrace$, where $Y_i^\star=\min(T_i^\star,C_i^\star)$ and $\Delta_i^\star=\indicator{T_i^\star\leq C_i^\star}$.
This bootstrapping method is suitable for estimating the conditional variance of $\hat{f}_{nb_x}(t|x)$ given $X_1,\dots,X_n$, as discussed in \cite{B1994} based on the principle of ancillarity, which is essential for approximating the critical values for the test.

\paragraph{Second method.}
For simplifying the notation, let
\[
T(x)={f}(\uptau_{G}(x)|x)-\frac{\epsilon {F}(\uptau_{G}(x)|x)}{\tau-\uptau_{G}(x)}
\quad\text{and}\quad
\hat{T}_n(x)=\hat{f}_{nb_x}(Y_{x(n_x)}|x)-\frac{\epsilon\hat{F}_{n}(Y_{x(n_x)}|x)}{\tau-Y_{x(n_x)}}.
\]
From \eqref{eq:insuff_follow_up_conseq}, $H_0$ implies that
\[\max_{x\in\family{X}}\left\lbrace
f(\uptau_{G}(x)|x)-\frac{(\epsilon-\eta)F(\uptau_{G}(x)|x)}{\tau-\uptau_{G}(x)}
\right\rbrace
\geq0.\]
Since $0<\eta<\epsilon$ is negligible, one possible decision rule is: 
\[
H_0 \text{ is rejected if } \max_{x\in\family{X}}\hat{T}_n(x)<r_n^{-1}\xi_{\alpha},
\]
where $\xi_{\alpha}$ is the $\alpha$-quantile of the asymptotic distribution of
$r_n\lbrace\max_{x}\hat{T}_n(x)-\max_{x}{T}(x)\rbrace$.
The asymptotic distribution can be obtained using the traditional Delta method if $T(x)\neq T(y)$ for $x\neq y$, or using the Delta method for Hadamard directional differentiable maps \citep{FS2019}. Specifically, for each $x\in\family{X}$, the asymptotic distribution of $r_n\lbrace\hat{T}_n(x)-T(x)\rbrace$
is the same as that
of $r_n\lbrace \hat{f}_{nb_x}(\uptau_{G}(x)|x)-{f}(\uptau_{G}(x)|x)\rbrace$. 
Denote 
$x_\ast=\argmax_{x\in\family{X}}T(x)$
and assume for simplicity that $x_\ast$ is unique.
Then, applying the Delta method for the maximum function, the limit distribution of 
$r_n\lbrace\max_{x\in\family{X}}\hat{T}_n(x)-\max_{x\in\family{X}}{T}(x)\rbrace$
is the same as that of $r_n\lbrace\hat{f}_{nb_x}(\uptau_{G}(x_\ast)|x_\ast)-{f}(\uptau_{G}(x_\ast)|x_\ast)\rbrace$.
Hence, $\xi_{\alpha}$ is equivalent to $\xi_{x_\ast,\alpha}$  used in the first method. 

While this method is asymptotically sound, the negative bias and variability of the smooth Grenander estimator at the right boundary, $\hat{f}_{nb_x}(Y_{x(n_x)}|x)$, in finite sample sizes make it challenging to estimate $\max_{x\in\family{X}}T(x)$ using $\max_{x\in\family{X}}\hat{T}_n(x)$, leading to poor practical performance.
Therefore, we do not pursue this test procedure further. Instead, we introduce a method for estimating $x_\ast$ using bootstrap samples, denoted by $\hat{x}_{\ast,n}$. By selecting the corresponding subsample with $x=\hat{x}_{\ast,n}$, we then apply the procedure from \cite{YM2024} on this subsample for testing $H_{0x_\ast}$ and hence also $H_0$, since $H_0$ implies
$H_{0x_\ast}$.

To obtain the bootstrap samples, we can apply the same bootstrap procedure as described in the first method.
Using the bootstrap subsample corresponding to $X=x$, we compute the bootstrap estimate 
\[
\hat{T}_n^\star(x)=\hat{f}_{nb_x}^{\star}(Y_{x(n_x)}^\star|x)-\frac{\epsilon \hat{F}_{n}^{\star}(Y_{x(n_x)}^\star|x)}{\tau-Y_{x(n_x)}^\star},
\]
where $Y_{x(n_x)}^\star$ is the maximum survival time for the bootstrap subsample corresponding to $X=x$. With the bootstrap estimate $\hat{T}_n^\star(x)$, we introduce an estimator for $x_\ast$ as follows:
Let $Q_{1-\gamma,n}^\star(x)$ be the $(1-\gamma)$-quantile of $\hat{T}_n^\star(x)$, for $\gamma\in(0,1)$. The estimator \[\hat{x}_{\ast,n}=\argmax_{x\in\family{X}}Q_{1-\gamma,n}^\star(x)\]
is then used to estimate $x_\ast$. We will show its consistency for $x_\ast$ given that the bootstrap is consistent.
While the above holds theoretically for any $\gamma\in(0,1)$, in practice, we choose a small $\gamma$, say $\gamma=0.025$, to estimate $x_\ast$ better based on higher quantiles. 

With the selected subsample corresponding to $x=\hat{x}_{\ast,n}$, $H_0$ is rejected if
\begin{equation}
	\label{eq:test_M2}
	\hat{T}_n(\hat{x}_{\ast,n})<r_n^{-1}\xi_{\hat{x}_{\ast,n},\alpha},
\end{equation}
where $\xi_{x,\alpha}$ is the $\alpha$-quantile of the limit distribution of $r_n\lbrace\hat{f}_{nb_x}(Y_{x(n_x)}|x) - {f}(\uptau_G(x)|x)\rbrace$. $\xi_{x,\alpha}$ is approximated from the bootstrap estimate $r_n\lbrace\hat{f}_{nb_x}^\star(Y_{x(n_x)}^\star|x) - \hat{f}_{nb_0^x}(Y_{x(n_x)}|x)\rbrace$,
where $\hat{f}_{nb_0^x}(\cdot|x)$ is the smoothed Grenander estimator
used to generate the bootstrap sample.
Essentially, using the estimate $\hat{x}_{\ast,n}$, we apply the procedure from \cite{YM2024} on the subsample corresponding to $x=\hat{x}_{\ast,n}$ for testing $H_0$. 
Since the tests for each subsample have an asymptotic level of $\alpha$ and $\hat{x}_{\ast,n}$ is consistent for $x_\ast$, the test with decision rule given in \eqref{eq:test_M2} also has an asymptotic level of $\alpha$ (see Theorem~\ref{theorem:test_discrete_level}\ref{enum:test_M2_level} in Section~\ref{sec:asym_prop}). Specifically, under $H_0: q_{1-\epsilon}(x)\geq\uptau_{G}(x)$ for some $x\in\family{X}$, we have
\[
\begin{split}
	&\prob{\hat{T}_n(\hat{x}_{\ast,n})< {{r_n^{-1}}\xi_{\hat{x}_{\ast,n},\alpha}}\bigg| H_0}\\
	&\leq
	\prob{\hat{T}_n(\hat{x}_{\ast,n})< {{r_n^{-1}}\xi_{\hat{x}_{\ast,n},\alpha},\ \hat{x}_{\ast,n}=x_{\ast}}\bigg| H_0}
	+
	\prob{\hat{x}_{\ast,n}\neq x_{\ast}\bigg| H_0}\\
	&\leq
	\prob{\hat{T}_n(x_\ast)-T(x_\ast)< {{r_n^{-1}}\xi_{x_\ast,\alpha}}\bigg| H_0}
	+
	\prob{\hat{x}_{\ast,n}\neq x_{\ast}\bigg| H_0},
\end{split}
\]
where the limit of the first probability on the right-hand side of the last inequality is bounded above by $\alpha$, while the second converges to zero by the consistency of $\hat{x}_{\ast,n}$. We note that $x_\ast$ can be non-unique and we consider this more general case in the proof of Theorem~\ref{theorem:test_discrete_level}\ref{enum:test_M2_level} below.

We note that the bootstrap samples are generated using the smoothed Grenander estimator, $\hat{f}_{nb_0^x}(\cdot|x)$, without imposing additional constraints related to $H_0$ or $H_1$. The fact that this is sufficient to determine the critical value for the prespecified significance level $\alpha$ follows from Proposition 2 of \cite{YM2024} in the unconditional setting.
From there one can see that, using the inequality \eqref{eq:insuff_follow_up_conseq} which holds under $H_0$, we only need to estimate quantiles of $\hat{f}_{nb_x}(Y_{x(n_x)}|x)- f(\tau_G(x)|x)$, 
using quantiles of the bootstrap estimates
$
\hat{f}_{nb_x}^\star(Y_{x(n_x)}^\star|x)-\hat{f}_{nb_0^x}(Y_{x(n_x)}|x)
$.

\section{Asymptotic properties}
\label{sec:asym_prop}
The asymptotic distribution of the smoothed Grenander estimator of $f$ in the interior of the support under the unconditional setting has been studied by \cite{LM2017}. \cite{YM2024} further extended the result to the estimator of $f(\uptau_G)$ at the right boundary $\uptau_{G}$. Based on such a result, we study the asymptotic normality of the smoothed Grenander estimator $\hat{f}_{nb_x}(Y_{x(n_x)}|x)$ in the conditional setting. We consider the following assumptions, which are required for establishing the asymptotic results for the test statistics:
\begin{enumerate}[label=(A\arabic*), series=assumptions_discrete]
	\item\label{assumption:kernel_time}
	The kernel function $k$ is a symmetric, twice continuously differentiable kernel with support $[-1,1]$ such that $\int k(u)\dd{u} = 1$ and its first derivative is bounded uniformly.
	\item\label{assumption:discrete_cov}
	The probability mass function of the discrete covariate $X$, $\prob{X=x}=\rho_x\in(0,1)$, has a finite support $\family{X}$.
	\item\label{assumption:uncured_survival_discrete_cov}
	The first derivative of $F_u(t|x)$ with respect to $t$ exists for each $x\in\family{X}$, denoted by $f_u(t|x)$. Furthermore, for each $x\in\family{X}$, $f_u(t|x)$ is non-increasing and twice continuously differentiable on $[a_x,\uptau_{F_{u}}(x)]$, for some $ 0 \leq  a_x  <\uptau_{G}(x)$, and satisfies $0<\inf_{t\in[a_x,\uptau_{F_{u}}(x)]}|f_u^\prime(t|x)|\leq\sup_{t\in[a_x,\uptau_{F_{u}}(x)]}|f_u^\prime(t|x)|<\infty$.
	\item\label{assumption:censoring_discrete_cov}
	The conditional distribution of the censoring time $C$, denoted by $G(\cdot|x)$, is continuous on $[0,\uptau_G(x))$ and has a positive mass at $\uptau_G(x)$ for each $x\in\family{X}$; i.e., $\Delta G(\uptau_G(x))=1-G(\uptau_G(x){-}|x)>0$. Furthermore, for each $x\in\family{X}$, the first derivative of $G(t|x)$ with respect to $t$ exists, denoted by $g(t|x)$, and $g(t|x)$ is continuous on $[0,\uptau_G(x))$ for all $x\in\family{X}$.
	\item\label{assumption:positive_cure_prob}
	The incidence $1-p(x)$ satisfies $1-p(x) <1$ for all $x\in\family{X}$, meaning that cure does not happen with probability one, and $F_u(\uptau_G(x)|x)<1$ for all $x\in\family{X}$, meaning that there is positive probability for the event to happen after the end of the study.
\end{enumerate}
Assumption~\ref*{assumption:kernel_time} is common under kernel density estimation and is required in Theorem~\ref*{thm:sg_discrete_normality}. Assumptions~\ref*{assumption:discrete_cov} and \ref*{assumption:censoring_discrete_cov} are necessary in Theorem~\ref*{thm:sg_discrete_normality} since its proof requires the strong representation of the conditional Kaplan--Meier estimator in Lemma~\ref*{lemma:strong_rep_brownian_disc_x} and the order of the supremum distance between the conditional Kaplan--Meier estimator and its least concave majorant in Lemma~\ref*{lemma:lcm_kme_kme_unif_dist_order}.
Assumption~\ref*{assumption:uncured_survival_discrete_cov} is required in Theorem~\ref*{thm:sg_discrete_normality} and Lemma~\ref*{lemma:lcm_kme_kme_unif_dist_order}. 
We note that the monotonicity and twice continuously differentiability assumptions of $f_u(\cdot|x)$ in the right tail are needed in Lemma~\ref*{lemma:lcm_kme_kme_unif_dist_order}. In practice, since $a_x$ is unknown, we need to identify an interval where the non-increasing assumption is reasonable. This can be done by comparing the Kaplan-Meier curve and its least concave majorant (see Section~\ref{sec:app}).
Assumption~\ref*{assumption:positive_cure_prob} is required in Theorem~\ref*{thm:sg_discrete_normality} and Lemma~\ref*{lemma:max_surv_time_discrete_conv}.
Note that \ref*{assumption:positive_cure_prob} implies that $\uptau_{G}(x)<\uptau_{F_{u}}(x)$, which is necessary for constructing the critical region for testing $H_0: q_{1-\epsilon}(x)\geq \uptau_{G}(x)$ for some $x\in\family{X}$ with an asymptotic level $\alpha$. This means that, even if the follow-up is `practically' sufficient, the event is still possible after the end of the study and is usually quite realistic in practice.

\begin{theorem}
	\label{thm:sg_discrete_normality}
	Suppose that
	Assumptions~\ref*{assumption:kernel_time}--\ref*{assumption:positive_cure_prob} hold.
	Let $c_x\in(0, \infty)$.
	\begin{enumerate}[label={(\roman*)}]
		\item\label{enum:sg_discrete_normality_cond}
		If $b_x=c_xn_x^{-1/5}$, then we have,
		conditional on $\mat{X}_1^n=(X_1,\cdots,X_n)$,
		\begin{equation}
			\label{eq:sg_discrete_normality_cond}
			\left.
			n^{2/5}\left\lbrace{
				\hat{f}_{nb_x}(Y_{x(n_x)}|x) - f(\uptau_{G}(x)|x) 
			}\right\rbrace\middle\vert\mat{X}_1^n
			\right.
			\converge[d]
			N(\rho_x^{-2/5}\mu_x, \rho_x^{-4/5}\sigma_x^2)
		\end{equation}
		almost surely with respect to $\mat{X}_1^n$;
		\item\label{enum:sg_discrete_normality_uncond}
		If $b_x=c_xn_x^{-1/5}$, then we have (without conditioning on $\mat{X}_1^n$)
		\begin{equation}
			\label{eq:sg_discrete_normality_uncond}
			n^{2/5}\left\lbrace{
				\hat{f}_{nb_x}(Y_{x(n_x)}|x) - f(\uptau_{G}(x)|x) 
			}\right\rbrace
			\converge[d]
			N(\rho_x^{-2/5}\mu_x, \rho_x^{-4/5}\sigma_x^2),
		\end{equation}
		\item\label{enum:sg_discrete_normality_uncond_undersmooth}
		If $b_x=c_xn_x^{-\kappa}$ with $1/5<\kappa<1/3$ (undersmoothing), then we have
		\begin{equation}
			\label{eq:sg_discrete_normality_uncond_undersmooth}
			n^{(1-\kappa)/2}\left\lbrace{
				\hat{f}_{nb_x}(Y_{x(n_x)}|x) - f(\uptau_{G}(x)|x) 
			}\right\rbrace
			\converge[d]
			N(0, \rho_x^{-(1-\kappa)}\sigma_x^2),
		\end{equation}
	\end{enumerate}
	where
	\[
	%		\begin{split}
		\mu_x = \frac{1}{2}c_x^{2}f^{\prime\prime}(\uptau_{G}(x)|x)\int_{0}^{1}v^{2}k_{B}(v)\dd{v},\
		\sigma_x^2 = \frac{f(\uptau_{G}(x)|x)}{c_x\rbracket{1 - G(\uptau_{G}(x){-}|x)}}\int_{0}^{1}k_{B}^{2}(v)\dd{v},
		%		\end{split}
	\]
	with $k_{B}(v)=\phi(0)k(v)-\psi(0)vk(v)$, and the coefficients $\phi(0)$ and $\psi(0)$ are defined as in~\eqref{eqn:phi,psi}.
\end{theorem}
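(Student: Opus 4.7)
The strategy is to condition on $\mat{X}_1^n$, reduce each claim to the unconditional asymptotic normality of the smoothed Grenander estimator at the right boundary proved by \cite{YM2024}, and then transport from the subsample rate $n_x$ to the global rate $n$ via $n_x/n\to\rho_x$.

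Conditionally on $\mat{X}_1^n$, the subsample $\{(Y_{xi},\Delta_{xi}):i=1,\dots,n_x\}$ is i.i.d.\ from the conditional distribution of $(Y,\Delta)$ given $X=x$, and assumptions \ref{assumption:kernel_time}--\ref{assumption:positive_cure_prob} reduce to exactly the unconditional hypotheses used in \cite{YM2024} for sample size $n_x$ and bandwidth $b_x=c_xn_x^{-1/5}$. Using the conditional strong representation of the KME (Lemma~\ref{lemma:strong_rep_brownian_disc_x}), the sup-norm bound on the distance between the KME and its least concave majorant (Lemma~\ref{lemma:lcm_kme_kme_unif_dist_order}), and the tail rate $\uptau_G(x)-Y_{x(n_x)}=O_{\mathbb{P}}(n_x^{-1})$ (Lemma~\ref{lemma:max_surv_time_discrete_conv}), I would mimic the YM2024 decomposition
\begin{equation*}
\hat{f}_{nb_x}(Y_{x(n_x)}|x) - f(\uptau_G(x)|x) = B_{n_x}(x) + V_{n_x}(x) + R_{n_x}(x),
\end{equation*}
where $B_{n_x}(x)$ is the deterministic boundary-kernel bias against $f(\cdot|x)$ near $\uptau_G(x)$, $V_{n_x}(x)$ is the stochastic part obtained by replacing the Grenander slope with the empirical slope and linearising via the strong representation, and $R_{n_x}(x)$ collects the LCM-versus-KME error plus the boundary fluctuation at $Y_{x(n_x)}$. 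Under $b_x=c_xn_x^{-1/5}$, Taylor expansion of $f(\cdot|x)$ near $\uptau_G(x)$ and the defining equations \eqref{eqn:phi,psi} of the boundary coefficients give $n_x^{2/5}B_{n_x}(x)\to\mu_x$; the Gaussian approximation in Lemma~\ref{lemma:strong_rep_brownian_disc_x} together with the variance calculation of the boundary-kernel integral gives $n_x^{2/5}V_{n_x}(x)\converge[d] N(0,\sigma_x^2)$; and Lemmas~\ref{lemma:lcm_kme_kme_unif_dist_order}--\ref{lemma:max_surv_time_discrete_conv} yield $n_x^{2/5}R_{n_x}(x)=o_{\mathbb{P}}(1)$. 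This establishes the conditional limit at the $n_x$-scale almost surely in $\mat{X}_1^n$; since $n_x/n\to\rho_x$ almost surely by the strong law applied to $\indicator{X_i=x}$, writing $n^{2/5}=(n/n_x)^{2/5}\,n_x^{2/5}$ and applying Slutsky produces the mean $\rho_x^{-2/5}\mu_x$ and variance $\rho_x^{-4/5}\sigma_x^2$ of claim~\ref{enum:sg_discrete_normality_cond}.

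Part~\ref{enum:sg_discrete_normality_uncond} follows from~\ref{enum:sg_discrete_normality_cond} by dominated convergence of conditional CDFs (bounded by $1$) against the law of $\mat{X}_1^n$, since the stated Gaussian limit does not depend on the conditioning. For part~\ref{enum:sg_discrete_normality_uncond_undersmooth}, with $b_x=c_xn_x^{-\kappa}$ and $1/5<\kappa<1/3$, the bias is $B_{n_x}(x)=O(b_x^2)=O(n_x^{-2\kappa})$, so $\sqrt{n_xb_x}\,B_{n_x}(x)=O(n_x^{(1-5\kappa)/2})=o(1)$; the stochastic term is of order $(n_xb_x)^{-1/2}\asymp n_x^{-(1-\kappa)/2}$ with the same asymptotic variance $\sigma_x^2$; and the remainder $R_{n_x}(x)$ is still of smaller order provided $\kappa<1/3$, as this is exactly the threshold at which the $O_{\mathbb{P}}(n_x^{-2/3})$ LCM--KME discrepancy is beaten by $(n_xb_x)^{-1/2}$. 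Rescaling from $n_x^{(1-\kappa)/2}$ to $n^{(1-\kappa)/2}$ via $n_x/n\to\rho_x$ gives the stated variance $\rho_x^{-(1-\kappa)}\sigma_x^2$ and centred limit.

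The main obstacle is the boundary behaviour: the estimator is evaluated at the random endpoint $Y_{x(n_x)}$, which lies within $O_{\mathbb{P}}(n_x^{-1})$ of $\uptau_G(x)$ and sits atop a KME whose monotone rearrangement has non-trivial fluctuation precisely in this shrinking tail window. The delicate point is to show that the boundary-kernel construction of $k^B_{b_x,t}$, together with assumption~\ref{assumption:censoring_discrete_cov} on the positive atom $\Delta G(\uptau_G(x))>0$, cancels the leading-order boundary artefact so that only $\mu_x$ and $\sigma_x^2$ survive the limiting procedure, and that the random-endpoint substitution $t=Y_{x(n_x)}$ costs nothing at order $n_x^{2/5}$. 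These are exactly the tasks handled by Lemmas~\ref{lemma:strong_rep_brownian_disc_x}--\ref{lemma:max_surv_time_discrete_conv}, so once they are in place the remainder of the proof is organisational.
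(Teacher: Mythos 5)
Your decomposition, bias/variance/remainder bookkeeping, and the rate arguments for parts~\ref{enum:sg_discrete_normality_cond} and \ref{enum:sg_discrete_normality_uncond_undersmooth} mirror the paper's proof: the paper likewise splits off the boundary term at $Y_{x(n_x)}$ versus $\uptau_G(x)$ (controlled by Lemma~\ref{lemma:max_surv_time_discrete_conv}), handles the LCM-versus-KME discrepancy with Lemma~\ref{lemma:lcm_kme_kme_unif_dist_order} under exactly the threshold $\kappa<1/3$ you identify, and obtains part~\ref{enum:sg_discrete_normality_cond} by applying the result of \cite{YM2024} to the conditionally i.i.d.\ subsample and rescaling via $n_x/n\to\rho_x$. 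Where you genuinely diverge is part~\ref{enum:sg_discrete_normality_uncond}: you deduce it from part~\ref{enum:sg_discrete_normality_cond} by integrating the almost-surely convergent conditional distribution functions against the law of $\mat{X}_1^n$ (bounded convergence), which is legitimate here precisely because the conditional limit law is deterministic. The paper does \emph{not} take this route. It states that an unconditional analogue of the strong approximation in Lemma~\ref{lemma:strong_approx_cond} is unavailable and instead builds an unconditional i.i.d.\ representation of the conditional Kaplan--Meier estimator (Lemma~\ref{lemma:strong_rep_brownian_disc_x}, itself resting on a chain of DKW-type and modulus-of-continuity lemmas), proves weak convergence of the local process $\hat{W}_n(\cdot|x)$ in $\ell^\infty([0,1])$ via Theorem~2.11.23 of \cite{VW1996} (envelope, Lindeberg, equicontinuity and bracketing-entropy checks), and treats the random bandwidth $b_x=c_xn_x^{-1/5}$ by a continuous-mapping argument. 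Your shortcut buys a dramatic reduction in length for Theorem~\ref{thm:sg_discrete_normality}\ref{enum:sg_discrete_normality_uncond} itself; what the paper's heavier route buys is the unconditional representation and process-level convergence, which are reused in the proof of Theorem~\ref{theorem:test_discrete_level} and in Lemma~\ref{lemma:lcm_kme_kme_unif_dist_order}, so the machinery is not wasted in the paper even if it is not strictly needed for this statement.

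One caveat you should make explicit: for the bounded-convergence step to apply, the conditional convergence in part~\ref{enum:sg_discrete_normality_cond} must hold almost surely for the \emph{full} statistic evaluated at the random endpoint $Y_{x(n_x)}$, not merely for the idealized estimator at $\uptau_G(x)$. That requires conditional (given $\mat{X}_1^n$, a.s.) versions of the negligibility claims for your remainder $R_{n_x}(x)$, i.e.\ of Lemmas~\ref{lemma:max_surv_time_discrete_conv} and \ref{lemma:lcm_kme_kme_unif_dist_order}. These do hold --- the proof of Lemma~\ref{lemma:max_surv_time_discrete_conv} already proceeds through the conditional probability, and Lemma~\ref{lemma:lcm_kme_kme_unif_dist_order} can be run conditionally off Lemma~\ref{lemma:strong_approx_cond} --- but an unconditional $o_{\mathbb{P}}(1)$ statement does not by itself yield the a.s.\ conditional one, so this needs to be said. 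With that addition your argument is complete and correct.
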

While the first statement of the theorem follows immediately from the result about the smoothed Grenander estimate in the unconditional setting (see Theorem~4.4 in \cite{LM2017}), obtaining the second statement is more technical because of the lack of a strong approximation result for the conditional Kaplan--Meier estimator with categorical covariates as well as the randomness of the subsample size $n_x$.
We also note that the bandwidth is allowed to depend on the subsample size $n_x$ instead of $n$. If $n$ is used, 
the mean and variance of the normal distribution in \eqref{eq:sg_discrete_normality_uncond} become $\mu_x$ and $\sigma_x^2/\rho_x$, respectively, and similarly for the limiting variance in \eqref{eq:sg_discrete_normality_uncond_undersmooth}.
With the asymptotic normality of the smoothed Grenander estimator $\hat{f}_{nb_x}(Y_{x(n_x)}|x)$, and assuming that $\eta$ (defined before \eqref{eq:insuff_follow_up_conseq}) decays faster than a certain rate, we can show that the levels of the tests in \eqref{eq:test_discrete}
and in \eqref{eq:test_M2} are asymptotically bounded by $\alpha$.
\begin{theorem}
	\label{theorem:test_discrete_level}
	Suppose that Assumptions~\ref*{assumption:kernel_time}--\ref*{assumption:positive_cure_prob} hold, and $b_x=c_xn_x^{-\kappa}$ for some $c_x\in(0,\infty)$ and $\kappa\in(1/5,1/3)$. If $0<\eta_n<\epsilon$ is such that $n^{(1-\kappa)/2}\eta_n\converge 0$ as $n\converge\infty$, then, under $H_0: q_{1-\epsilon}(z) \geq \uptau_{G}(z)$ for some $z\in\family{X}$, 
	\begin{enumerate}[label={(\roman*)}]
		\item\label{enum:test_M1_level}
		for the first method using \eqref{eq:test_discrete}, we have
		\[
		\lim_{n\converge\infty}\prob{\bigcap_{x\in\family{X}}\cbracket{
				\hat{T}_n(x)< n^{-(1-\kappa)/2}\xi_{x,\alpha}
			}\middle\vert H_0}\leq\alpha;
		\]
		\item\label{enum:test_M2_level}
		for the second method using \eqref{eq:test_M2},
		if the bootstrap is consistent, i.e., for each $x\in\family{X}$
		\[
		\sup_{t\in\real}\left\vert
		\mathbb{P}_n^\star\left({
			n^{(1-\kappa)/2}\lbrace\hat{T}_n^\star(x)-\tilde{T}_n(x)\rbrace
			\leq t
		}\right)
		-\Phi\left(\frac{t}{\rho_x^{-(1-\kappa)/2}\sigma_x}\right)
		\right\vert\converge[\mathbb{P}]0,
		\]
		then, we have
		\[
		\lim_{n\converge\infty}\prob{
			\hat{T}_n(\hat{x}_{\ast,n})< {{n^{-(1-\kappa)/2}}\xi_{\hat{x}_{\ast,n},\alpha}}
			\middle\vert H_0}\leq\alpha,
		\]
	\end{enumerate}
	where $\tilde{T}_n(x)$ equals $\hat{T}_n(x)$ except $\hat{f}_{nb_x}(Y_{x(n_x)}|x)$ being replaced by $\hat{f}_{nb_0^x}(Y_{x(n_x)}|x)$, $\xi_{x,\alpha}$ is the $\alpha$-quantile of the normal distribution in \eqref{eq:sg_discrete_normality_uncond_undersmooth}, $\mathbb{P}_n^\star$ is the conditional probability given the original observations, and $\Phi$ is the distribution function of the standard normal distribution.
\end{theorem}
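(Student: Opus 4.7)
The plan is to reduce both parts to the one-dimensional asymptotic normality of Theorem~\ref{thm:sg_discrete_normality}\ref{enum:sg_discrete_normality_uncond_undersmooth} after showing that every other component of $\hat{T}_n(x)-T(x)$ is asymptotically negligible on the scale $n^{-(1-\kappa)/2}$.

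For part~(i), under $H_0$ I would fix any $z\in\family{X}$ with $q_{1-\epsilon}(z)\geq\uptau_G(z)$ and dominate the intersection probability by the single-$z$ probability $\prob{\hat{T}_n(z)<n^{-(1-\kappa)/2}\xi_{z,\alpha}\mid H_0}$. The decomposition
\[
\hat{T}_n(z)-T(z)=\bigl\{\hat{f}_{nb_z}(Y_{z(n_z)}|z)-f(\uptau_G(z)|z)\bigr\}-\epsilon\biggl\{\frac{\hat{F}_n(Y_{z(n_z)}|z)}{\tau-Y_{z(n_z)}}-\frac{F(\uptau_G(z)|z)}{\tau-\uptau_G(z)}\biggr\}
\]
splits the difference into the smoothed-Grenander part and a KME part. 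Assumption~\ref{assumption:censoring_discrete_cov} together with Lemma~\ref*{lemma:max_surv_time_discrete_conv} gives $\prob{Y_{z(n_z)}=\uptau_G(z)}\to 1$, so the KME part is $O_\mathbb{P}(n^{-1/2})=o_\mathbb{P}(n^{-(1-\kappa)/2})$ for $\kappa<1/3$. Inequality~\eqref{eq:insuff_follow_up_conseq} applied with $\eta_n$ in place of $\eta$ yields the bias bound $-T(z)\leq\eta_n F(\uptau_G(z)|z)/(\tau-\uptau_G(z))$, also $o(n^{-(1-\kappa)/2})$ by hypothesis. Combining these with Theorem~\ref{thm:sg_discrete_normality}\ref{enum:sg_discrete_normality_uncond_undersmooth} gives the limiting value $\Phi(\xi_{z,\alpha}/(\rho_z^{-(1-\kappa)/2}\sigma_z))=\alpha$ by the definition of $\xi_{z,\alpha}$.

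For part~(ii) the crucial new step is the consistency $\prob{\hat{x}_{\ast,n}\in\family{X}_\ast}\to 1$, where $\family{X}_\ast=\argmax_{x\in\family{X}}T(x)$. The bootstrap uniform CDF consistency combined with the continuity of the normal limit yields, via a standard quantile-convergence argument, $Q^\star_{1-\gamma,n}(x)-\tilde{T}_n(x)=n^{-(1-\kappa)/2}\rho_x^{-(1-\kappa)/2}\sigma_x z_{1-\gamma}+o_\mathbb{P}(n^{-(1-\kappa)/2})$, while $\tilde{T}_n(x)\converge[\mathbb{P}]T(x)$ follows by the consistency of the pilot smoothed Grenander estimator and of the KME. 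Hence $Q^\star_{1-\gamma,n}(x)\converge[\mathbb{P}]T(x)$ for each $x\in\family{X}$, and since $\family{X}$ is finite the $\argmax$ concentrates on $\family{X}_\ast$. The split
\begin{align*}
&\prob{\hat{T}_n(\hat{x}_{\ast,n})<n^{-(1-\kappa)/2}\xi_{\hat{x}_{\ast,n},\alpha}\mid H_0} \\
&\qquad\leq\prob{\hat{x}_{\ast,n}\notin\family{X}_\ast\mid H_0}+\sum_{x_\ast\in\family{X}_\ast}\prob{\hat{T}_n(x_\ast)<n^{-(1-\kappa)/2}\xi_{x_\ast,\alpha},\,\hat{x}_{\ast,n}=x_\ast\mid H_0}
\end{align*}
then reduces matters to applying the argument of part~(i) to each $x_\ast\in\family{X}_\ast$, which is legitimate since $T(x_\ast)\geq T(z)\geq-O(\eta_n)$ for every $x_\ast\in\family{X}_\ast$.

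The delicate point, and what I expect to be the main obstacle, is the non-unique case $|\family{X}_\ast|>1$: a crude union bound over the sum above gives the wrong aggregate level $|\family{X}_\ast|\alpha$. To close the argument one must either show that the selection rule inside $\argmax$ asymptotically settles on a single element of $\family{X}_\ast$ (for instance by exploiting a deterministic ordering of the normal shifts $\rho_x^{-(1-\kappa)/2}\sigma_x z_{1-\gamma}$ among the ties in $T$), or exploit the joint law of $(\hat{T}_n(x_\ast),\hat{x}_{\ast,n})$: on $\{\hat{x}_{\ast,n}=x_\ast\}$, the bootstrap upper quantile $Q^\star_{1-\gamma,n}(x_\ast)$ dominates the others, which biases $\hat{T}_n(x_\ast)$ upward on that event and should recover the aggregate level $\alpha$. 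This coupling between $\hat{x}_{\ast,n}$ and $\hat{T}_n(\hat{x}_{\ast,n})$ is where I expect the bulk of the technical effort to lie.
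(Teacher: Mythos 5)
Your part~(i) is essentially the paper's argument: dominate the intersection by the single coordinate $z$ at which $H_0$ holds, use \eqref{eq:insuff_follow_up_conseq} with $\eta_n$ to absorb the bias $-T(z)\le \eta_n F(\uptau_G(z)|z)/(\tau-\uptau_G(z))=o(n^{-(1-\kappa)/2})$, show the ratio term involving $\hat F_n$ and $Y_{z(n_z)}$ is $o_{\mathbb{P}}(n^{-(1-\kappa)/2})$, and let Theorem~\ref{thm:sg_discrete_normality}\ref{enum:sg_discrete_normality_uncond_undersmooth} drive the limit to $\alpha$. The paper reaches the same conclusion by splitting the remainder into four explicit pieces controlled by Lemma~\ref{lemma:max_surv_time_discrete_conv} and Lemma~\ref{lemma:strong_rep_brownian_disc_x}\ref{enum:brownian_disc_x} rather than invoking $\prob{Y_{z(n_z)}=\uptau_G(z)}\to1$, but this is a cosmetic difference. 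Your consistency argument for $\hat{x}_{\ast,n}$ (quantile convergence from the bootstrap CDF consistency, $\tilde T_n(x)\converge[\mathbb{P}]T(x)$, finiteness of $\family{X}$) is also exactly the paper's.

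The one genuine gap is the tie case $|\family{X}_\ast|>1$, which you correctly flag as the main obstacle but leave unresolved between two candidate routes. The paper takes your first route, not the coupling route: it exhibits a single $\tilde x_\ast\in\family{X}_\ast$ with $\prob{\hat{x}_{\ast,n}=\tilde x_\ast\mid H_0}\to1$ and $\prob{\hat{x}_{\ast,n}=x\mid H_0}\to0$ for the other tied values, so that only one term in your sum contributes $\alpha$ and the rest vanish. The mechanism is a second-order expansion of the selection criterion,
\[
Q^\star_{1-\gamma,n}(x)=T(x)+Q_{1-\gamma}\bigl(\hat T_n^\star(x)-\tilde T_n(x)\bigr)+\bigl(\tilde T_n(x)-T(x)\bigr)
\approx T(x)+r_nR(x),
\]
where $R(x)$ collects the normal-quantile shift $\rho_x^{-(1-\kappa)/2}\sigma_x z_{1-\gamma}$ together with the pilot-estimator bias of order $(b_0^x)^2 f^{\prime\prime}(\uptau_G(x)|x)$; among the ties in $T$ the argmax therefore settles on the maximizer of $R$ over $\family{X}_\ast$. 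Two things are needed to make this rigorous that your sketch does not supply: (a) the bias contribution of $\hat f_{nb_0^x}$ must be kept in $R(x)$ alongside the variance term, since $b_0^x$ is not undersmoothed, and (b) a non-degeneracy assumption ruling out $R(\cdot)$ being constant on $\family{X}_\ast$ (the paper explicitly excludes the case where $X$ affects neither the survival nor the censoring densities and all $\rho_x$ coincide). Without (b) the argument does not close, and a union bound would indeed only give level $|\family{X}_\ast|\alpha$, as you note.
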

We comment on the consistency of the smoothed bootstrap procedure described in Section~\ref{sec:proc} when an undersmoothed bandwidth $b_x=c_xn_x^{-\kappa}$, $\kappa\in(1/5,1/3)$ is used. This requires bootstrap versions of Assumptions~\ref*{assumption:uncured_survival_discrete_cov}--\ref*{assumption:censoring_discrete_cov} to hold, meaning that conditionally on the original observations, these assumptions with $f(\cdot|x)$ and $G(\cdot|x)$ replaced by their bootstrap version, $\hat{f}_{n,b_0^x}(\cdot|x)$ and $\hat{G}_n(\cdot|x)$, hold on an event with probability converging to one. Specifically, it can be shown by arguments similar to those used in the proof of Lemma~5 in the supplement of \cite{DGL2013} and the proof of Lemma~2.1 in \cite{K2008} that the bootstrap version of Assumption~\ref*{assumption:uncured_survival_discrete_cov} holds. For the bootstrap version of \ref*{assumption:censoring_discrete_cov}, we note that,
at the price of more complicated technical details,
the assumption of $G(\cdot|x)$ having a continuous density can be relaxed to $G(\cdot|x)$ being Lipschitz as in \cite{DA2002}. Then we can consider a continuous version of the KME $\hat{G}_n(\cdot|x)$ as in the proof of Theorem~3.6 in \cite{DGL2013}, which is a piecewise linear function.
Nevertheless, we do not expect such a modification to lead to an improvement in
the practical performance of the test.
Furthermore, if $b_x=c_xn_x^{-1/5}$ is used, similar results to Theorem~\ref{theorem:test_discrete_level} can be obtained provided that the bootstrap is consistent. This requires in addition that the second derivative of the smooth Grenander estimator $\hat{f}_{nb_0^x}$ (used to generate the bootstrap sample) at $\uptau_G(x)$ is consistent for $f''(\uptau_G(x)|x)$, which appears in the asymptotic bias. 
Since
estimation of the second derivative is
more difficult at the boundary, we chose to avoid it by using undersmoothing. 

\section{Simulation study}
\label{sec:simulations}
In this section, we investigate the finite sample performance of the testing procedures introduced in Section~\ref{sec:proc}. In order to include various scenarios with different survival time and censoring time distributions conditional on categorical covariates, we consider four settings as described below.

\subsection{Simulation settings}
\label{sec:sim_settings}
In Settings~\ref*{enum:sim_logistic_weibull_unif}, \ref*{enum:sim_exp_exp} and \ref*{enum:sim_exp_unif} below, a covariate $X$ having a Bernoulli distribution with parameter $\rho\in\lbrace0.3,0.5\rbrace$ is considered to study the effect of $\rho$ on the test performance. In Setting~\ref{enum:sim_exp_unif_2cov}, two dependent binary covariates, $X_1$ and $X_2$, are considered. The conditional distribution of the uncured event time is assumed to have a decreasing density. The censoring time $C$ given $X=x$ is generated as $C=\min(\tilde{C},\uptau_{G}(x))$, where $\tilde{C}$ may depend on the covariate value $x$ and has a support that includes $[0,\uptau_G(x)]$. To study the influence
of $\Delta G(\uptau_{G}(x))$, 
which affects the limit variance of the estimator $\hat{f}_{nb_x}(Y_{x(n_x)}|x)$, we consider $\tilde{C}$ with a uniform distribution on $[0,\zeta]$, where $\zeta\geq\uptau_{G}(x)$ is chosen such that $\Delta G(\uptau_{G}(x))\in\lbrace0,0.01\rbrace$ in
Setting~\ref*{enum:sim_exp_unif}.

For all four settings, 5 different $\uptau_G(x)$'s are considered to examine the empirical level and power of the tests. Specifically, $\uptau_{G}(x)$'s are chosen based on quantiles of $F_u(\cdot|x)$ for each $x\in\family{X}$. We use the 95, 97.5, 99, 99.5 and 99.9\% quantiles of $F_u(\cdot|x)$. For instance, in the scenarios where $\family{X}=\lbrace0,1\rbrace$, $\uptau_{G}(0)$ and $\uptau_{G}(1)$ being the 99.5\% or 99.9\% quantiles of the corresponding $F_u(\cdot|x)$ are considered as sufficient follow-up when $\epsilon=0.01$.

\begin{enumerate}[label={\textit{Setting \arabic*.}},align=left,wide,ref=\arabic*]
	\item\label{enum:sim_logistic_weibull_unif}
	The covariate $X$ has a Bernoulli distribution with parameter $\rho\in\lbrace0.3,0.5\rbrace$.
	The uncure fraction $p(x)$ is:
	\[
	p(x) = \frac{1}{1+\exp(0.6-1.3x)},\quad x=0,1.
	\]
	In particular, $p(0)\approx0.35$ and $p(1)\approx0.67$.
	The uncured subjects
	have a Weibull conditional distribution
	$
	F_u(t|x)=1-\exp\lbrace-\sigma\exp(x\beta)t^a\rbrace,\ x=0,1,
	$
	where $\sigma=1.2$, $\beta=-0.9$ and $a=0.8$.
	The censoring times given $X=x$ are generated using the aforementioned uniform distribution for $\tilde{C}$
	with $\Delta G(\uptau_{G}(x))=0.01$.
	\item\label{enum:sim_exp_exp}
	The covariate $X$ has a Bernoulli distribution with parameter $\rho\in\lbrace0.3,0.5\rbrace$;
	The uncure fraction $p(x)\in\lbrace0.3,0.7\rbrace$ for $x=0,1$;
	The uncured subjects 
	have an exponential conditional distribution
	$
	F_u(t|x)=1-\exp\lbrace-(5-0.5x)t\rbrace,\ x=0,1.
	$
	The censoring times given $X=x$ are generated with $\tilde{C}$ having the conditional distribution function:
	$
	G(t|x)=1-\exp\lbrace-(1+1.5x)t\rbrace,\ x=0,1.
	$
	\item\label{enum:sim_exp_unif_2cov}
	The covariates $(X_1,X_2)$ are two binary random variables:
	$X_1\sim B(1,0.4)$ and $X_2|X_1\sim B(1,\rho(X_1))$,
	where $\rho(0)=0.4$ and $\rho(1)=0.6$. Therefore, 
	the joint probability mass function of $(X_1,X_2)$ is $\mathbb{P}(X_1=x_1, X_2=x_2)=0.36/0.24/0.16/0.24$ if $(x_1,x_2)=(0,0)/(0,1)/(1,0)/(1,1)$ and
	$\family{X}{=}\lbrace(0,0),(0,1),(1,0),(1,1)\rbrace$.
	The uncure fraction $p(x_1,x_2)$ is:
	\[
	p(x_1,x_2)=\frac{1}{1+\exp(0.6-1.3x_1+0.3x_2)},\quad(x_1,x_2)\in\family{X}.
	\]
	Specifically, $p(x_1,x_2)\approx0.35/0.29/0.67/0.60$ when $(x_1,x_2)=(0,0)/(0,1)/(1,0)/(1,1)$.
	The uncured subjects 
	have an exponential conditional distribution
	\[
	F_u(t|x_1,x_2)=1-\exp\lbrace-(1-0.25x_1-0.3x_2)t\rbrace,\quad(x_1,x_2)\in\family{X}.
	\]
	The censoring times given $X_1=x_1$, $X_2=x_2$ are generated from
	$C=\min\lbrace\tilde{C},\uptau_{G}(x_1,x_2)\rbrace$ with $\tilde{C}$ being the aforementioned uniform distribution such that $\Delta G(\uptau_G(x_1,x_2))=0.01$. Here, $\uptau_{G}(x_1,x_2)$ are the 95, 97.5, 99, 99.5 and 99.9\% quantiles of the corresponding $F_u(t|x_1,x_2)$.
	\item\label{enum:sim_exp_unif}
	The covariate $X$ has a Bernoulli distribution with parameter $\rho\in\lbrace0.3,0.5\rbrace$.
	The uncure fraction $p(x)\in\lbrace0.4,0.6\rbrace$ for $x=0,1$.
	The uncured subjects given $X=x$ have a standard exponential distribution for $x=0,1$.
	The censoring times given $X=x$ are generated using the aforementioned uniform distribution for $\tilde{C}$
	with $\Delta G(\uptau_{G}(x))\in\lbrace0,0.01\rbrace$.
\end{enumerate}

In the above four settings, different conditional distributions for the uncured survival times and censoring times are considered. This allows us to cover various scenarios in which subjects corresponding to different values of the covariates
may have different uncured survival or censoring distributions, alongside the uncure fraction that varies with $x$.
In Setting~\ref*{enum:sim_logistic_weibull_unif}, the uncured survival time conditional on the covariate follows a Weibull distribution with a shape parameter of 0.8, which has a density that decreases faster than that of the standard exponential distribution.
In Setting~\ref*{enum:sim_exp_exp}, both the uncured survival time and the censoring time have exponential distributions with rate parameter that varies with the covariate, while a non-uniform censoring distribution is considered as compared with the other three settings.
Setting~\ref*{enum:sim_exp_unif_2cov} is a more realistic setting that consists of two binary covariates leading to four possible values of
$(x_1,x_2)$. Every sub-population corresponding to different values of
$(x_1,x_2)$ has a different uncure fraction and uncured survival time distribution.
Setting~\ref*{enum:sim_exp_unif}, in which the uncured subjects have
the same exponential distribution for all values of $x$, is used
to study the sole effect of $p(x)$, $\rho$, and $\Delta G(\uptau_G(x))$ on the test performance. Tables~S1--S7 in Section~S1.1 in the Supplementary Material report the censoring rates (at population and sub-population level) for all considered cases.

For the proposed test, $\epsilon=0.01$ is set, indicating that the follow-up is considered as sufficient when all $\uptau_{G}(x)$'s are greater than the 99\% quantiles of the corresponding $F_u(\cdot|x)$. The parameter $\tau$ is set to the maximum of the 99.99\% quantiles of $F_u(\cdot|x)$ over $x\in\family{X}$, i.e., $\tau=\max_{x\in\family{X}}q_{0.9999}(x)$. For the smoothed Grenander estimator $\hat{f}_{nb_x}$, the tri-weight kernel is used together with an undersmoothing bandwidth $b_x=Y_{x(n_x)}(n_x^{-7/30}\wedge0.5)$.
Such bandwidth of order $n^{-7/30}$ is chosen
since we want to undersmooth but also remain
close to the optimal bandwidth of order $n^{-1/5}$. This seems to perform well in practice and the purpose was to use a fixed bandwidth in order to avoid data-driven bandwidth selection procedures which would significantly increase the computational cost. Fixed bandwidths are common in the literature of smooth monotone estimators.
For the smoothed bootstrap procedure, the bandwidth $b_0^x=Y_{x(n_x)}(n_x^{-1/9}\wedge0.5)$ is used,
as proposed by \cite{GJ2024} in the monotone regression setting. We use 500 bootstrap samples to compute the critical values of the test procedures. 
Since the conditional density functions of the uncured survival times are decreasing in all four settings, we set $a_x=0$.
For the second method, $\gamma=0.025$ is used to compute
the estimator 
$\hat{x}_{\ast,n}$ of $x_\ast=\argmax_{x\in\family{X}}T(x)$. The significance level of the tests is set at 0.05.

For Setting~\ref*{enum:sim_logistic_weibull_unif}, a sample size of 1000 is considered. Two sample sizes (500 and 1000) are considered for Settings~\ref*{enum:sim_exp_exp} and \ref*{enum:sim_exp_unif}.
For Setting~\ref*{enum:sim_exp_unif_2cov}, since it consists of two binary covariates, meaning that there are 4 subsamples, a larger sample size of 2000 is considered.
Recall that 5 different $\uptau_{G}(x)$'s are considered for each $x\in\family{X}$, which leads to numerous scenarios ($5^{|\family{X}|}$ combinations of $\uptau_{G}(x)$'s). Instead of covering all combinations, we first group them into three main scenarios (Scenario A, B and C) and then consider some choices from each of these three scenarios to study the empirical level and power of the test procedures.
Denote the 99\% quantile of $F_u(\cdot|x)$ by $q_{0.99}(x)$.
Scenario~B consists of those cases with $q_{0.99}(x)\leq\uptau_{G}(x)$ for all $x\in\family{X}$ while $\uptau_{G}(x)=q_{0.975}(x)$ for at most one $x\in\family{X}$. This means that the follow-up is sufficient or nearly sufficient in the conditional setting ($H_{1x}$ in \eqref{eq:suff_follow_up_hyp_fix_cov}) for all but at most one $x\in\family{X}$. The cases under this scenario are considered as insufficient follow-up over all covariate values ($H_0$ in \eqref{eq:suff_follow_up_hyp_all_cov}). In practice, these borderline cases are considered as sufficient or nearly sufficient follow-up.
Alternatively, Scenario C consists of those cases with $q_{0.99}(x)<\uptau_{G}(x)$ for all $x\in\family{X}$, meaning that the follow-up is sufficient in the conditional setting ($H_{1x}$ in \eqref{eq:suff_follow_up_hyp_fix_cov}) for each $x\in\family{X}$. Therefore the cases under this scenario are considered as sufficient follow-up over all covariate values ($H_1$ in \eqref{eq:suff_follow_up_hyp_all_cov}) and hence are used for investigating the empirical power of the test. 
Scenario A consists of the remaining cases that are not in Scenarios B and C.
These cases are considered as insufficient follow-up ($H_0$) and are useful to examine the empirical level of the test.%

\subsection{Simulation results}
We report the rejection rate of $H_0$ for the first and second methods together with the rejection rate of $H_{0x}$ (insufficient follow-up for fixed $x$ in the conditional setting in \eqref{eq:suff_follow_up_hyp_fix_cov}). The rejection rates are computed from 500 replications. % for each selected cases.
The results for Settings~\ref*{enum:sim_exp_exp} and \ref*{enum:sim_exp_unif_2cov} are presented in this section, while the remaining results can be found in the Supplementary Material. 
An overall performance summary will be presented at the end of this section.%

\begin{table}[t]
	\begin{center}
		\caption{Rejection rate for Setting~\ref*{enum:sim_exp_exp} with $\rho=0.5$, $p(0)=0.7$, $p(1)=0.7$, $n=1000$.
				`Scenario' groups cases as A (insufficient follow-up, $H_0$), B (borderline, $H_0$), and C (sufficient follow-up, $H_1$).
				Columns under $\uptau_G(x)$ are follow-up lengths in quantiles of $F_u(\cdot|x)$. `Rej. rate of $H_{0x}$' gives rejection rates for fixed $x=0/1$ using \cite{YM2024}. Columns M1/M2 report rejection rates of $H_0$ for the first/second method. The significance level $\alpha$ is 0.05.
			\label{tab:sim_exp_exp_rho_05_p0_07_p1_07_n_1000}}
		\footnotesize
		\begin{tabular}{ccc@{\extracolsep{12pt}}cc@{\extracolsep{12pt}}cc}
\multirow{2}{*}{Scenario} & \multicolumn{2}{c}{$\uptau_G(x)$} & \multicolumn{2}{c}{Rej. rate of $H_{0x}$} & \multicolumn{2}{c}{Rej. rate of $H_0$} \\ \cline{2-3}\cline{4-5}\cline{6-7}
& $x=0$           & $x=1$           & $x=0$                  & $x=1$                 & M1             & M2             \\ \hline
\multirow{4}{*}{A} & 0.950 & 0.950 & 0.030 & 0.038 & 0.000 & 0.020 \\ 
   & 0.950 & 0.975 & 0.030 & 0.034 & 0.004 & 0.026 \\ 
   & 0.975 & 0.975 & 0.034 & 0.032 & 0.000 & 0.034 \\ 
   & 0.999 & 0.950 & 0.894 & 0.036 & 0.030 & 0.036 \\ 
   \hline
\multirow{7}{*}{B} & 0.975 & 0.990 & 0.034 & 0.142 & 0.006 & 0.032 \\ 
   & 0.975 & 0.999 & 0.034 & 0.936 & 0.032 & 0.036 \\ 
   & 0.990 & 0.975 & 0.064 & 0.034 & 0.002 & 0.046 \\ 
   & 0.990 & 0.990 & 0.064 & 0.136 & 0.012 & 0.066 \\ 
   & 0.990 & 0.999 & 0.064 & 0.930 & 0.060 & 0.064 \\ 
   & 0.995 & 0.975 & 0.172 & 0.032 & 0.008 & 0.058 \\ 
   & 0.999 & 0.990 & 0.894 & 0.150 & 0.140 & 0.156 \\ 
   \hline
\multirow{4}{*}{C} & 0.995 & 0.995 & 0.172 & 0.452 & 0.054 & 0.158 \\ 
   & 0.995 & 0.999 & 0.172 & 0.936 & 0.162 & 0.174 \\ 
   & 0.999 & 0.995 & 0.894 & 0.450 & 0.382 & 0.436 \\ 
   & 0.999 & 0.999 & 0.894 & 0.928 & 0.832 & 0.890 \\ 
   \hline
\end{tabular}

	\end{center}
\end{table}
\begin{table}[t]
	\begin{center}
		\caption{Rejection rate for Setting~\ref*{enum:sim_exp_exp} with $\rho=0.3$, $p(0)=0.7$, $p(1)=0.7$, $n=1000$.
				`Scenario' groups cases as A (insufficient follow-up, $H_0$), B (borderline, $H_0$), and C (sufficient follow-up, $H_1$).
				Columns under $\uptau_G(x)$ are follow-up lengths in quantiles of $F_u(\cdot|x)$. `Rej. rate of $H_{0x}$' gives rejection rates for fixed $x=0/1$ using \cite{YM2024}. Columns M1/M2 report rejection rates of $H_0$ for the first/second method. The significance level $\alpha$ is 0.05.
			\label{tab:sim_exp_exp_rho_03_p0_07_p1_07_n_1000}}
		\footnotesize
		\begin{tabular}{ccc@{\extracolsep{12pt}}cc@{\extracolsep{12pt}}cc}
\multirow{2}{*}{Scenario} & \multicolumn{2}{c}{$\uptau_G(x)$} & \multicolumn{2}{c}{Rej. rate of $H_{0x}$} & \multicolumn{2}{c}{Rej. rate of $H_0$} \\ \cline{2-3}\cline{4-5}\cline{6-7}
& $x=0$           & $x=1$           & $x=0$                  & $x=1$                 & M1             & M2             \\ \hline
\multirow{4}{*}{A} & 0.950 & 0.950 & 0.012 & 0.020 & 0.000 & 0.018 \\ 
   & 0.950 & 0.975 & 0.012 & 0.036 & 0.000 & 0.010 \\ 
   & 0.975 & 0.975 & 0.044 & 0.030 & 0.000 & 0.032 \\ 
   & 0.999 & 0.950 & 0.936 & 0.024 & 0.022 & 0.028 \\ 
   \hline
\multirow{7}{*}{B} & 0.975 & 0.990 & 0.044 & 0.218 & 0.010 & 0.044 \\ 
   & 0.975 & 0.999 & 0.044 & 0.896 & 0.040 & 0.046 \\ 
   & 0.990 & 0.975 & 0.092 & 0.034 & 0.008 & 0.058 \\ 
   & 0.990 & 0.990 & 0.092 & 0.222 & 0.018 & 0.086 \\ 
   & 0.990 & 0.999 & 0.092 & 0.896 & 0.084 & 0.092 \\ 
   & 0.995 & 0.975 & 0.184 & 0.034 & 0.012 & 0.080 \\ 
   & 0.999 & 0.990 & 0.936 & 0.222 & 0.208 & 0.244 \\ 
   \hline
\multirow{4}{*}{C} & 0.995 & 0.995 & 0.184 & 0.536 & 0.090 & 0.196 \\ 
   & 0.995 & 0.999 & 0.184 & 0.900 & 0.172 & 0.190 \\ 
   & 0.999 & 0.995 & 0.936 & 0.526 & 0.484 & 0.508 \\ 
   & 0.999 & 0.999 & 0.936 & 0.892 & 0.834 & 0.852 \\ 
   \hline
\end{tabular}

	\end{center}
\end{table}
\begin{table}[t]
	\begin{center}
		\caption{Rejection rate for Setting~\ref*{enum:sim_exp_exp} with $\rho=0.5$, $p(0)=0.3$, $p(1)=0.7$, $n=1000$.
				`Scenario' groups cases as A (insufficient follow-up, $H_0$), B (borderline, $H_0$), and C (sufficient follow-up, $H_1$).
				Columns under $\uptau_G(x)$ are follow-up lengths in quantiles of $F_u(\cdot|x)$. `Rej. rate of $H_{0x}$' gives rejection rates for fixed $x=0/1$ using \cite{YM2024}. Columns M1/M2 report rejection rates of $H_0$ for the first/second method.  The significance level $\alpha$ is 0.05.
			\label{tab:sim_exp_exp_rho_05_p0_03_p1_07_n_1000}}
		\footnotesize
		\begin{tabular}{ccc@{\extracolsep{12pt}}cc@{\extracolsep{12pt}}cc}
\multirow{2}{*}{Scenario} & \multicolumn{2}{c}{$\uptau_G(x)$} & \multicolumn{2}{c}{Rej. rate of $H_{0x}$} & \multicolumn{2}{c}{Rej. rate of $H_0$} \\ \cline{2-3}\cline{4-5}\cline{6-7}
& $x=0$           & $x=1$           & $x=0$                  & $x=1$                 & M1             & M2             \\ \hline
\multirow{4}{*}{A} & 0.950 & 0.950 & 0.032 & 0.038 & 0.000 & 0.028 \\ 
   & 0.950 & 0.975 & 0.032 & 0.038 & 0.004 & 0.026 \\ 
   & 0.975 & 0.975 & 0.010 & 0.032 & 0.000 & 0.030 \\ 
   & 0.999 & 0.950 & 0.874 & 0.040 & 0.032 & 0.048 \\ 
   \hline
\multirow{7}{*}{B} & 0.975 & 0.990 & 0.010 & 0.138 & 0.000 & 0.012 \\ 
   & 0.975 & 0.999 & 0.010 & 0.932 & 0.010 & 0.014 \\ 
   & 0.990 & 0.975 & 0.092 & 0.034 & 0.004 & 0.052 \\ 
   & 0.990 & 0.990 & 0.092 & 0.144 & 0.010 & 0.076 \\ 
   & 0.990 & 0.999 & 0.092 & 0.940 & 0.086 & 0.094 \\ 
   & 0.995 & 0.975 & 0.300 & 0.034 & 0.006 & 0.104 \\ 
   & 0.999 & 0.990 & 0.874 & 0.156 & 0.140 & 0.630 \\ 
   \hline
\multirow{4}{*}{C} & 0.995 & 0.995 & 0.300 & 0.436 & 0.132 & 0.270 \\ 
   & 0.995 & 0.999 & 0.300 & 0.922 & 0.270 & 0.298 \\ 
   & 0.999 & 0.995 & 0.874 & 0.458 & 0.394 & 0.798 \\ 
   & 0.999 & 0.999 & 0.874 & 0.938 & 0.822 & 0.864 \\ 
   \hline
\end{tabular}

	\end{center}
\end{table}
Tables~\ref{tab:sim_exp_exp_rho_05_p0_07_p1_07_n_1000}--\ref{tab:sim_exp_exp_rho_05_p0_03_p1_07_n_1000} show the rejection rate of $H_{0x}$ for $x\in\{0,1\}$
and the rejection rate of $H_0$ for both proposed methods and
selected cases 
of Setting~\ref*{enum:sim_exp_exp} with different $\rho$ and $p(x)$ when $n=1000$. Each column under $\uptau_G(x)$ indicates the length of the follow-up in terms of quantiles
of the corresponding $F_u(\cdot|x)$. 
For example, a case with 0.95 in the column of $x=0$ under $\uptau_G(x)$ means $\uptau_G(0)$ is the 95\% quantile of $F_u(\cdot|0)$. The values in the column $x=0/x=1$ under `Rej. rate of $H_{0x}$'  are the rejection rate of $H_{0x}$ for fixed $x=0/1$ when the procedure of \cite{YM2024} is used for the subsample corresponding to $x=0/1$.

The values in the column M1/M2 are the rejection rate of $H_0$ for the first/second method. From Tables~\ref{tab:sim_exp_exp_rho_05_p0_07_p1_07_n_1000}--\ref{tab:sim_exp_exp_rho_05_p0_03_p1_07_n_1000}, we see that both methods perform well in terms of empirical level for the cases under Scenario A for different $\rho$ and uncure fraction $p(x)$. Method 1 possesses a better control on level compared with Method 2 for the cases under Scenario B, especially for the last case under Scenario B in Table~\ref{tab:sim_exp_exp_rho_05_p0_03_p1_07_n_1000}. Method 2 has higher power than Method 1 for the cases under Scenario C. As discussed in Section~\ref*{sec:proc}, if the powers of the individual tests are not close to one for each $x\in\family{X}$, Method 1 is expected to have low power and hence being conservative. This is also observed in the simulation results, see for example the first two cases under Scenario C in Tables~\ref{tab:sim_exp_exp_rho_05_p0_07_p1_07_n_1000}--\ref{tab:sim_exp_exp_rho_05_p0_03_p1_07_n_1000}.

From Tables~\ref{tab:sim_exp_exp_rho_05_p0_07_p1_07_n_1000}--\ref{tab:sim_exp_exp_rho_03_p0_07_p1_07_n_1000}, we see that the rejection rate of $H_0$ for Method 2 is close to the rejection rate of $H_{0x_\ast}$, especially for the cases under Scenario A. However, we point out that for the scenarios where the follow-up is sufficient or nearly sufficient for each fixed $x\in\family{X}$, such as the last case under Scenario B, Method 2 has a rejection rate larger than the nominal level of 0.05.
This exacerbates when $p(0)=0.3$ (Table~\ref{tab:sim_exp_exp_rho_05_p0_03_p1_07_n_1000}), which results from the estimate $\hat{x}_{\ast,n}$ being $0$ more often (356 out of 500 replications instead of 32 when $p(0)=0.7$) while $x_\ast=1$, leading to wrong selection of
the subsample corresponding to $x=0$ instead of $x=1$ for the test decision.
For the effect of $\rho$, we note that it influences the subsample sizes and hence also
the performance of testing $H_{0x}$, with better level when the corresponding subsample size increases. This then
affects Method 1 through the individual test performance as explained in \eqref{eq:level_product}.
Its effect on Method 2 is more intricate, since it also influences $\hat{x}_{\ast,n}$ besides the individual test performance. 
For example, we consider the last case under Scenario B where $x_\ast=1$. In such case, Method 2 would have a rejection rate close to that for testing $H_{0x_\ast}$ if $\hat{x}_{\ast,n}$ estimated $x_\ast$ correctly most of the time. We observe from Table~\ref{tab:sim_exp_exp_rho_05_p0_07_p1_07_n_1000} ($\rho=0.5$) that Method 2 possesses a rejection rate (0.156) slightly larger than that of the individual test for $H_{0x_\ast}$ (0.150). In contrast, we see from Table~\ref{tab:sim_exp_exp_rho_03_p0_07_p1_07_n_1000} ($\rho=0.3$) that Method 2 has a rejection rate (0.244) larger than that for testing $H_{0x_\ast}$ (0.222). This mainly results from $\hat{x}_{\ast,n}$ being 1 less often when $\rho=0.3$ (in 455 out of 500 replications) as compared with the result when $\rho=0.5$ (in 468 out of 500 replications), which is probably caused by the underestimation of $f(\uptau_G(1)|1)$ when the subsample size corresponding to $x=1$ is small.

\begin{table}[t]
	\begin{center}
		\caption{Rejection rate for Setting~\ref*{enum:sim_exp_unif_2cov} with $n=2000$.
				`Scenario' groups cases as A (insufficient follow-up, $H_0$), B (borderline, $H_0$), and C (sufficient follow-up, $H_1$).
				Columns under $\uptau_G(x_1,x_2)$ are follow-up lengths in quantiles of $F_u(\cdot|x_1,x_2)$. `Rej. rate of $H_{0x}$' gives rejection rates for fixed $(x_1,x_2)$ using \cite{YM2024}. Columns M1/M2 report rejection rates of $H_0$ for the first/second method.  The significance level $\alpha$ is 0.05.	
			\label{tab:sim_res_exp_unif_2cov}}
		\footnotesize
		\begin{tabular}{ccccc@{\extracolsep{8pt}}cccc@{\extracolsep{8pt}}cc}
\multirow{2}{*}{Scenario} & \multicolumn{4}{c}{$\uptau_G(x_1,x_2)$} & \multicolumn{4}{c}{Rej. rate of $H_{0x}$} & \multicolumn{2}{c}{Rej. rate of $H_0$} \\ \cline{2-5}\cline{6-9}\cline{10-11}
& $(0,0)$  & $(0,1)$  & $(1,0)$ & $(1,1)$ & $(0,0)$    & $(0,1)$   & $(1,0)$   & $(1,1)$   & M1             & M2             \\ \hline
\multirow{7}{*}{A} & 0.950 & 0.950 & 0.950 & 0.950 & 0.006 & 0.016 & 0.004 & 0.012 & 0.000 & 0.018 \\ 
   & 0.950 & 0.975 & 0.950 & 0.950 & 0.006 & 0.016 & 0.004 & 0.018 & 0.000 & 0.016 \\ 
   & 0.975 & 0.975 & 0.950 & 0.950 & 0.006 & 0.014 & 0.006 & 0.014 & 0.000 & 0.018 \\ 
   & 0.975 & 0.975 & 0.975 & 0.950 & 0.006 & 0.014 & 0.008 & 0.016 & 0.000 & 0.022 \\ 
   & 0.975 & 0.975 & 0.975 & 0.975 & 0.006 & 0.014 & 0.008 & 0.008 & 0.000 & 0.012 \\ 
   & 0.975 & 0.990 & 0.975 & 0.975 & 0.006 & 0.178 & 0.008 & 0.008 & 0.000 & 0.060 \\ 
   & 0.999 & 0.999 & 0.950 & 0.999 & 0.694 & 0.824 & 0.008 & 0.934 & 0.002 & 0.418 \\ 
   \hline
\multirow{7}{*}{B} & 0.975 & 0.990 & 0.999 & 0.999 & 0.006 & 0.178 & 0.574 & 0.930 & 0.000 & 0.114 \\ 
   & 0.975 & 0.999 & 0.999 & 0.999 & 0.006 & 0.824 & 0.594 & 0.934 & 0.004 & 0.012 \\ 
   & 0.990 & 0.990 & 0.990 & 0.990 & 0.044 & 0.180 & 0.012 & 0.076 & 0.000 & 0.130 \\ 
   & 0.990 & 0.995 & 0.995 & 0.990 & 0.044 & 0.388 & 0.096 & 0.058 & 0.000 & 0.216 \\ 
   & 0.990 & 0.999 & 0.990 & 0.990 & 0.044 & 0.820 & 0.010 & 0.068 & 0.000 & 0.086 \\ 
   & 0.990 & 0.999 & 0.995 & 0.990 & 0.044 & 0.820 & 0.098 & 0.062 & 0.000 & 0.088 \\ 
   & 0.999 & 0.999 & 0.990 & 0.999 & 0.694 & 0.824 & 0.010 & 0.934 & 0.008 & 0.722 \\ 
   \hline
\multirow{5}{*}{C} & 0.995 & 0.995 & 0.995 & 0.995 & 0.168 & 0.386 & 0.098 & 0.236 & 0.000 & 0.340 \\ 
   & 0.995 & 0.999 & 0.995 & 0.995 & 0.168 & 0.830 & 0.092 & 0.240 & 0.002 & 0.270 \\ 
   & 0.999 & 0.999 & 0.995 & 0.995 & 0.694 & 0.824 & 0.098 & 0.252 & 0.008 & 0.730 \\ 
   & 0.999 & 0.999 & 0.999 & 0.995 & 0.694 & 0.824 & 0.578 & 0.238 & 0.074 & 0.766 \\ 
   & 0.999 & 0.999 & 0.999 & 0.999 & 0.694 & 0.824 & 0.578 & 0.938 & 0.320 & 0.790 \\ 
   \hline
\end{tabular}

	\end{center}
\end{table}
Next we consider Setting~\ref*{enum:sim_exp_unif_2cov}, in which two binary covariates are considered. Table~\ref{tab:sim_res_exp_unif_2cov} shows the rejection rate of $H_{0x}$ for
all values of $x$ and the rejection rate of $H_0$ for both the proposed methods and 
the selected cases of Setting~\ref*{enum:sim_exp_unif_2cov}.
Recall that $\family{X}=\lbrace(0,0),(0,1),(1,0),(1,1)\rbrace$ for this setting, and that each column under $\uptau_G(x_1,x_2)$ indicates the length of the follow-up in terms of the quantiles
of the corresponding $F_u(\cdot|x_1,x_2)$.
For example, a case with 0.95 in the column of $(0,0)$ under $\uptau_G(x_1,x_2)$ means $\uptau_G(0,0)$ is the 95\% quantile of $F_u(\cdot|0,0)$. 
For the columns under `Rej. rate of $H_{0x}$', each value is the rejection rate of $H_{0x}$ for the corresponding $x$ when the test procedure of \cite{YM2024} is applied to the subsample corresponding to $X=x$. We highlight that for the subsample corresponding to $(x_1,x_2)=(0,1)$, the rejection rate of $H_{0x}$ when $\uptau_G(0,1)=q_{0.99}(0,1)$ is around 0.18 which is higher than the nominal level of 0.05. This subpopulation has an expected sample size of $2000*0.24=480$ and an uncure fraction $p(0,1)$ of around 0.29. The small uncure fraction leads to a very high censoring rate and hence it is difficult to detect if the follow-up for this category is insufficient as indicated in \cite{YM2024}.

For the column M1/M2, each value is the rejection rate of $H_0$ for the first/second method. Both methods perform well in terms of empirical level for the cases under Scenario A, except for the last case for Method 2. For this case, $x_\ast=(1,0)$. The rejection rate of Method 2 would be close to that for testing $H_{0x_\ast}$ (0.008) if $\hat{x}_{\ast,n}=x_\ast$ in most of the experiments. The rejection rate of Method 2 differs from that of the individual test mostly resulting from $\hat{x}_{\ast,n}$ being $(1,0)$ less often (231/500 replications), which is probably due to the underestimation of $f(\uptau_{G}(x_\ast)|x_\ast)$ when the subsample size is relatively small ($\mathbb{P}(X_1=1, X_2=0)=0.16$).
Method 1 has better control on the level compared with Method 2	for the cases under Scenario B, while Method 2 possesses higher power than Method 1 for the cases under Scenario C. This suggests that Method 1 is more conservative which is coherent to the discussion in Section~\ref*{sec:proc} that a low power is expected when the powers of the individual tests for each $x\in\family{X}$ are not close to one, see for example the first few cases under Scenario C in Table~\ref*{tab:sim_res_exp_unif_2cov}.
We note that for the scenarios where the follow-up is sufficient or nearly sufficient in the conditional setting for each $x\in\family{X}$, such as the last case under Scenario B, Method 2 possesses a rejection rate far larger than the nominal level of 0.05.
For such case,
$x_\ast=(1,0).$
Method 2 would have a rejection rate close to that for testing $H_{0x_\ast}$ (0.01) if $\hat{x}_{\ast,n}$ estimated $x_\ast$ correctly in the majority of the replications. The discrepancy between the rejection rate of Method 2 and that of the individual test mainly results from $\hat{x}_{\ast,n}$ being $(1,0)$ less often (53/500 replications), which is probably caused by the underestimation of $f(\uptau_{G}(x_\ast)|x_\ast)$ when the subsample size is relatively small ($\mathbb{P}(X_1=1, X_2=0)=0.16$).

It remains intriguing to investigate the effect of sample size, $\Delta G(\uptau_G(x))$ and $p(x)$ on the test performance. For this we consider a simpler setting, Setting~\ref*{enum:sim_exp_unif}, in which the uncured subjects 
have the same conditional survival and censoring distribution for each $x$.
Complete results can be found in the Supplementary Materials. In summary, we observe a better control of level when sample size is larger, see for example Tables~S13--S14 (or Tables~S9--S12 for  Setting~\ref*{enum:sim_exp_exp}). The proposed methods still work well when the assumption of $\Delta G(\uptau_G(x))>0$ for all $x\in\family{X}$ is violated, especially when $n=1000$, (see for example Tables~S13 and S14). Both methods perform better in terms of level and power
when the uncure fraction $p(x)$ is larger (see Table~S16
or Tables~S11/S12 for Setting~\ref*{enum:sim_exp_exp}).

For Setting~\ref*{enum:sim_logistic_weibull_unif}, the uncured subjects given $X=x$ have a Weibull distribution with shape parameter 0.8, which has a density that decreases fast. As reported in \cite{YM2024}, % that 
the procedure for testing $H_{0x}$ under such scenarios 
does not perform very well
in terms of level control. 
Therefore we do not expect both Methods~1 and 2 to perform well
in terms of empirical level, especially for the cases under Scenario B for Setting~\ref*{enum:sim_logistic_weibull_unif} (see Table~S8).

\subsubsection*{Overall performance summary}%
In the simulation study, we grouped different cases into three main scenarios, namely Scenario~A, B and C. For the cases under Scenario~A, which are considered as insufficient follow-up, both methods control the level well. For the cases under Scenario~B, which represent borderline follow-up sufficiency, Method~1 maintains the nominal level, whereas Method~2 can reject more in some settings. For the cases under Scenario~C, which correspond to sufficient follow-up, Method~2 achieves substantially higher power, while Method~1 performs more conservatively in detecting follow-up sufficiency.%

The performance of each method can be influenced by the subsample size proportion $\rho$ and the uncure fraction $p(x)$. For the effect of $\rho$, it determines the subsample sizes and hence affects the test performance for $H_{0x}$, where the level is better maintained when the corresponding subsample size increases. This, in turns, influences Method~1 through the performance of individual tests. Its effect on Method~2 is more intricate, since it also affects $\hat{x}_{\ast,n}$ in addition to the individual test performances. When the subsample size corresponding to a particular covariate value is small, $\hat{x}_{\ast,n}$ may select the correct $x_\ast$ less often, leading to incorrect test decisions. This is probably caused by the underestimation of $f(\uptau_G(x)|x)$ under those scenarios.
Regarding the effect of $p(x)$, both methods perform better in terms of level and power when $p(x)$ is larger. However, when $p(x)$ is small for some $x$, leading to a very high censoring rate, the performance of both methods in maintaining the test level deteriorates, especially under the borderline cases.
The shape of the conditional density of uncured survival times, $f_u(\cdot|x)$, also influences the performance of both methods. For example, the procedure for testing $H_{0x}$, and consequently both methods for testing $H_0$, does not perform well in terms of level control when $f_u(\cdot|x)$ decreases fast.%

Overall we conclude that the first method performs well in terms of empirical level but has a lower power to detect sufficient follow-up. This is particularly problematic in scenarios with more than two possible covariate values and sufficient, but not extremely long follow-up. On the other hand, the second method possesses higher power, at the cost of also higher level in the borderline cases when follow-up is insufficient but very close to being sufficient. In practice, a lack of control for such scenarios is less problematic and the behaviour improves as the sample size or the uncure fraction increases.

\section{Real data application}
\label{sec:app}
Melanoma of the skin is one of the most common cancers in the United States. The 5-year survival rate for skin melanoma patients increased steadily from 1975 to 2019 \citep{SGJ2024}, which results from the improvement in melanoma treatments such as immunotherapy and targeted therapy. \cite{andersson_estimating_2014} studied the effect of age, cancer stage, gender, and anatomical site on the cure proportion of cutaneous malignant melanoma in Sweden using cure models. A study of Belgian patients using mixture cure models showed that skin melanoma has high cure proportions and that the mean uncured survival time is higher for females than for males \cite{silversmit_cure_2017}.%

We analyze a dataset of skin melanoma patients from the Surveillance, Epidemiology and End Results (SEER) database to illustrate the test procedures. The complete SEER database can be retrieved from the website \url{https://seer.cancer.gov/}. The data was extracted from the database `Incidence--SEER Research Data, 8 Registries, Nov 2023 Sub (1975--2021)' with follow-up until December 2021. We selected white patients with skin melanoma who were younger than 50 years at diagnosis, had regional or distant cancer stage, and were diagnosed between 2008 and 2021. This allows a maximum of 167 months (about 14 years) of follow-up. We further excluded the observations with zero or unknown follow-up time. The event time of interest is the time to death from skin melanoma. 
We included two binary covariates, namely gender (male/female) and cancer stage (regional/distant), into our analysis. Thus, $\family{X}$ consists of 4 different categories. This cohort has 2050 observations with follow-up ranging from 1 to 167 months and has a censoring rate of 76.49\%. Table~\ref{tab:seer_melanoma_data} reports the number of observations, maximum uncensored survival time, maximum survival time, censoring rate, and proportion of alive among censored observations for each subsample. We note that the proportions of patients alive among the censored observations for each subsample are at least 85\%, which suggests that at least 85\% of the censored patients did not die from other causes and thus were not subject to competing risks.
A similar skin melanoma dataset extracted from the SEER database was studied by \cite{TaiEtal2005}, which suggested the minimum follow-up time required to estimate the cure fraction of skin melanoma is 18.2 years, assuming a log-normal distribution for the uncured survival time.
\begin{table}[t]
	\begin{center}
		\caption{Sample size, proportion $\rho_x$, maximum uncensored survival time (months) $\tilde{Y}_{x(n_x)}$, maximum observed survival time (months) ${Y}_{x(n_x)}$, censoring rate, and proportion of patients alive among censored observations for each subsample of the melanoma data.\label{tab:seer_melanoma_data}}
		\small
		\begin{tabular}{lcccccc}
	Type & Sample size & $\rho_x$ & $\tilde{Y}_{x(n_x)}$ & ${Y}_{x(n_x)}$ & \makecell{Censoring\\ rate} & \makecell{Proportion of \\alive}  \\ 
	\hline
	all &  2050 & 1.0000 &   133 &   167 & 0.7649 & 0.9471 \\ 
	Distant and Female &   151 & 0.0737 &    89 &   166 & 0.5762 & 0.8966 \\ 
	Regional and Female &   746 & 0.3639 &   131 &   166 & 0.8579 & 0.9641 \\ 
	Distant and Male &   272 & 0.1327 &    98 &   167 & 0.5037 & 0.8613 \\ 
	Regional and Male &   881 & 0.4298 &   133 &   167 & 0.7991 & 0.9545 \\
	\hline
\end{tabular}
	\end{center}
\end{table}
\begin{figure}[ht]
	\centering
	\includegraphics[width=0.9\textwidth]{./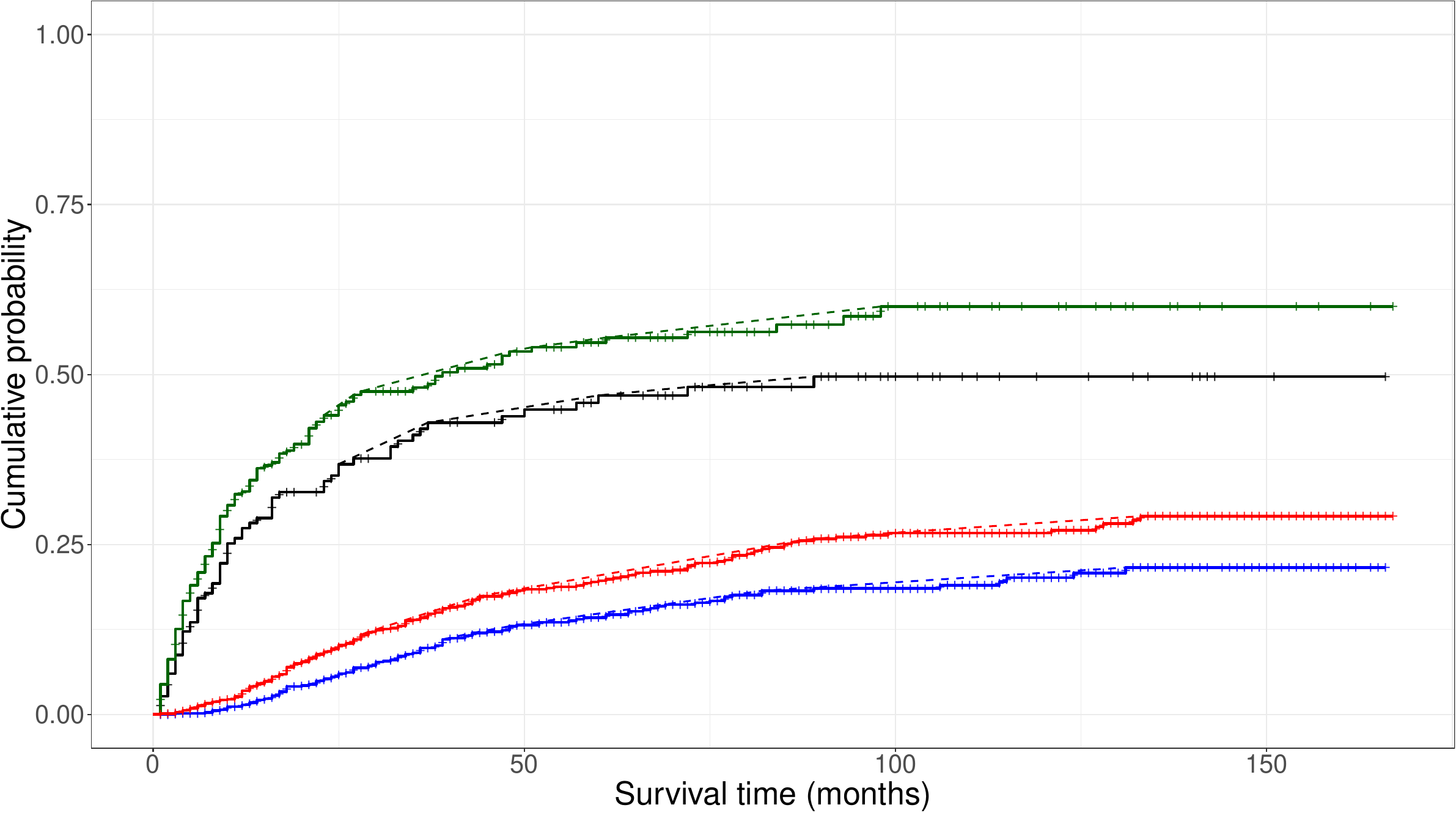}
	\caption{Kaplan--Meier curves (solid) and their least concave majorants (dashed) for each subsample of the melanoma data: Distant and Female (Black), Regional and Female (Blue), Distant and Male (Green), Regional and Male (Red).\label{fig:seer_melanoma_kme}}
\end{figure}%

Figure~\ref{fig:seer_melanoma_kme} depicts the Kaplan--Meier curves for each subsample. The curves for the subsamples corresponding to `Distant and Male' and `Distant and Female' categories exhibit a long plateau, while the remaining two curves do not. The Kaplan--Meier curves also support the assumption of decreasing conditional density in the tail for each subsample
(curves are close to concave and close to their least concave majorants), where we set $a_x=0$. We apply the proposed test procedures to this data for testing the null hypothesis of insufficient follow-up over all covariate values with $\epsilon=0.01$, indicating the follow-up is considered as sufficient over all covariate values if $q_{0.99}(x)<\uptau_G(x)$ for all $x\in\family{X}$. The distributions of $\hat{f}_{nb_x}(Y_{x(n_x)}|x)$ are approximated using the bootstrap procedure described in Section~\ref{sec:proc} with 1000 bootstrap samples and bandwidths as mentioned in Section~\ref{sec:sim_settings}.
From \cite{TaiEtal2005}, we infer from the fitted log-normal distribution of the uncured survival time for the skin melanoma patients that the 99.5\% quantile is approximately 373 months (around 31 years). Therefore, $\tau$ is set to 373 months meaning that the probability of death from skin melanoma is negligible after 31 years from diagnosis.
We also apply 
the $Q_n$ test \citep{MRS2023} using the parameter $\gamma=1$ to each subsample for testing whether the follow-up is insufficient in the conditional setting for each fixed $x\in\family{X}$.
We do not apply the $T_n$ test \citep{XEK2024} to each subsample because the Kaplan--Meier curves in Figure~\ref{fig:seer_melanoma_kme} do not exhibit long plateaus for two subgroups, suggesting that there is no strong evidence of sufficient follow-up for those subgroups.

Table~\ref{tab:seer_melanoma_result} reports the $p$-value of testing the null hypothesis of insufficient follow-up in the conditional setting for each fixed $x$. At the 5\% significance level, the proposed method $\hat{f}_{nb_x}$ does not reject $H_{0x}$ for categories `Regional and Female' and `Regional and Male'. Therefore Method 1 does not reject the hypothesis of insufficient follow-up over all covariate values $H_0$. For Method 2, $\hat{x}_{\ast,n}$ is the `Regional and Female' category, indicating that the test decision is determined based on the individual test corresponding to that category and hence $H_0$ is not rejected. Note that if a larger value of $\tau$ was chosen, the test would become more conservative meaning that it would still not reject the hypothesis of insufficient follow-up. The $Q_n$ test also does not reject the hypothesis of insufficient follow-up for the `Regional and Female' and `Regional and Male' categories, which is consistent with the results from the proposed method.

\begin{table}[t]
	\begin{center}
		\caption{$p$-values of testing $H_{0x}$ of the melanoma data
				using the $Q_n$ test \citep{MRS2023} and the proposed method.\label{tab:seer_melanoma_result}}
		\begin{tabular}{lcc}
	Type & $Q_n$ & $\hat{T}_{n}(x)$ \\ 
	\hline
%	All & 0.0422 & 0.3020 \\ 
	Distant and Female & 0.0004 & 0.0140 \\ 
	Distant and Male & 0.0075 & 0.0280 \\ 
	Regional and Female & 0.2373 & 0.2710 \\ 
	Regional and Male & 0.1780 & 0.3620 \\ 
	\hline
\end{tabular}

	\end{center}
\end{table}

\section{Discussion}\label{sec:discuss}
In this paper, we proposed new procedures for testing the null hypothesis of `practically' insufficient follow-up over all %discrete covariate values 
values of categorical covariates. The test relies on the assumption that the conditional density function of the survival times is non-increasing in the tail region and a smoothed Grenander estimator is considered to construct the test statistic. The challenge arises when combining the decisions of the individual tests for each covariate value. Method 1 follows the principle of intersection-union test and rejects the null hypothesis of overall insufficient follow-up if such hypothesis is rejected for all covariate values. On the other hand, Method 2 decides based on just one appropriately chosen covariate value for which sufficient follow-up is less likely. The asymptotic normality of the test statistics under the null hypothesis was established and the two proposed test procedures were shown to be asymptotically correct. Simulation studies showed that the procedures perform satisfactorily in finite samples, albeit the rejection rate can be higher than the nominal level under the borderline scenarios where the overall follow-up is nearly sufficient. 
As recommended in \cite{YM2024}, the choice of $\tau$ for the proposed method should be based on prior knowledge in the sense that the event happening after time $\tau$ is almost impossible (with a larger $\tau$ resulting in a more conservative test).
Given the challenging nature of the problem and the different handling of continuous and discrete covariates,  we focused only on accounting for categorical covariates. This is usually also the most relevant case in practical applications since often even continuous covariates are discretized. The study of the continuous covariate case remains interesting and will be considered in a future work.
A potential limitation of the proposed method is that the procedure may be less reliable in subsamples with small sample sizes. In our simulation study, the empirical level remained well controlled in a setting with an expected subsample size of 150. Nevertheless, developing methods that account for small subsample situations, for example by imposing semiparametric modeling assumptions, would be an interesting direction for future research.

\appendix
\section{Technical lemmas and proofs}
\label{sec:appendix}
\begin{lemma}[Lemma 1 of \cite{YM2024}]
	\label{lemma:strong_approx_cond}
	Suppose that Assumptions~\ref*{assumption:discrete_cov}--\ref*{assumption:censoring_discrete_cov} hold.
	Then, conditional on $\mat{X}_1^n=(X_1,\dots,X_n)$, we have, for $u>0$,
	\begin{align*}
		\begin{split}
			&\mathbb{P}\left[
			\sup_{t \leq \uptau_{G}(x)}{n_x}
			\left\vert
			\hat{F}_{n}(t|x)-F(t|x) - n_x^{-1/2}\cbracket{1-F(t|x)}W\circ L(t|x)
			\right\vert
			> K_{1}\log n_x + u
			\middle\vert\mat{X}_1^n
			\right]\\
			&\qquad< K_{2}e^{-K_{3}u},
		\end{split}
	\end{align*}
	where $K_{1}$, $K_{2}$ and $K_{3}$ are positive constants, $W$ is a Brownian motion, and 
	\[
	L(t|x) = \int_{0}^{t}
	\frac{\dd F(u|x)}{\rbracket{1-G(u{-}|x)}\rbracket{1-F(u|x)}^{2}}.
	\]
\end{lemma}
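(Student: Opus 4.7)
The plan is to reduce the conditional statement to the classical unconditional strong approximation for the Kaplan--Meier estimator. Observe that once we condition on $\mat{X}_1^n=(X_1,\dots,X_n)$, the value $n_x=\sum_{i=1}^n \indicator{X_i=x}$ is a deterministic function of the conditioning variables, and the subsample $\{(Y_{xi},\Delta_{xi}): i=1,\dots,n_x\}$ is an i.i.d.\ sample drawn from the conditional law of $(Y,\Delta)$ given $X=x$. Consequently, $\hat{F}_n(\cdot|x)$ coincides, conditionally on $\mat{X}_1^n$, with the ordinary Kaplan--Meier estimator built from $n_x$ i.i.d.\ observations with marginal $F(\cdot|x)$ and conditional censoring distribution $G(\cdot|x)$.

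With this reduction in hand, I would invoke a classical Hungarian-type strong embedding for the Kaplan--Meier estimator, such as the results due to Burke--Cs\"org\H{o}--Horv\'ath or Major--Rejt\H{o}. For an i.i.d.\ sample of size $m$ from a pair $(F,G)$ satisfying the requisite regularity conditions, these results assert the existence of a Brownian motion $W$ for which
\[
\mathbb{P}\!\left[\sup_{t\le\uptau_G}\, m\,\bigl|\hat{F}_m(t)-F(t)-m^{-1/2}(1-F(t))\,W(L(t))\bigr|>K_1\log m + u\right] < K_2 e^{-K_3 u}
\]
for all $u>0$, where $L$ is exactly the transform appearing in the lemma. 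Applying this with $m=n_x$ and $F\leftarrow F(\cdot|x)$, $G\leftarrow G(\cdot|x)$, and noting that the resulting bound holds pointwise in the realisation of $\mat{X}_1^n$, produces precisely the claimed conditional inequality.

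The main point requiring care is the uniformity of the constants $K_1,K_2,K_3$ in the conditioning. Inspection of the embedding proof shows that these constants are determined by a positive lower bound on $1-G(\cdot|x)$ on $[0,\uptau_G(x)]$ and by the qualitative behaviour of $L(\cdot|x)$ near $\uptau_G(x)$; both are features of the fixed conditional laws and do not depend on $\mat{X}_1^n$. Assumption~\ref*{assumption:censoring_discrete_cov} supplies the uniform lower bound on $1-G(t{-}|x)$ through the positive atom at $\uptau_G(x)$, which in turn makes the denominators in $L(\cdot|x)$ bounded below; Assumption~\ref*{assumption:discrete_cov} secures $\rho_x\in(0,1)$, so $n_x\to\infty$ almost surely and the $\log n_x$ rate is meaningful; and the smoothness in Assumption~\ref*{assumption:uncured_survival_discrete_cov} ensures continuity of $F(\cdot|x)$ on $[0,\uptau_G(x))$ required by the embedding.

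The hardest part of the underlying argument---constructing the Brownian motion $W$ and controlling the supremum remainder via a martingale decomposition of the Kaplan--Meier process---is entirely contained in the cited unconditional embedding. The genuine work specific to the present lemma therefore reduces to the conditioning step and the verification of uniformity of the constants, both of which are routine once the i.i.d.\ structure of each subsample is recognised.
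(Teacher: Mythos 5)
Your proposal is correct and follows essentially the same route as the paper: the paper states this result as Lemma~1 of \cite{YM2024} (the unconditional Major--Rejt\H{o}/Burke--Cs\"org\H{o}--Horv\'ath-type strong embedding for the Kaplan--Meier estimator) applied to the subsample $\{(Y_{xi},\Delta_{xi}),i=1,\dots,n_x\}$, which, conditionally on $\mat{X}_1^n$, is exactly an i.i.d.\ sample of size $n_x$ from the conditional laws $F(\cdot|x)$ and $G(\cdot|x)$ --- precisely your reduction. Your additional remarks on the uniformity of the constants $K_1,K_2,K_3$ in the conditioning (they depend only on the fixed conditional distributions, not on the realisation of $\mat{X}_1^n$) are accurate and consistent with how the paper later uses the same conditioning device in the proof of Lemma~\ref{lemma:strong_rep_brownian_disc_x}\ref{enum:iid_sum_gaussian_approx}.
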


\begin{proof}[Proof of Theorem~\ref{thm:sg_discrete_normality}]
	Recall that $k_{B}(v)=\phi(0)k(v)-\psi(0)vk(v)$ and
	\[
	\hat{f}_{nb,Y_{x(n_x)}}(Y_{x(n_x)}|x):=\hat{f}_{nb}(Y_{x(n_x)}|x)=\frac{1}{b}\int_{Y_{x(n_x)}-b}^{Y_{x(n_x)}}k^B_{b,Y_{x(n_x)}}\left(
	\frac{Y_{x(n_x)}-u}{b}
	\right)\dd{\hat{F}_n^G(u|x)},
	\]
	where $b$ is the bandwidth and the boundary kernel $k^B_{b,t}$ is defined as in Section~\ref{sec:proc}. In particular, $k^B_{b,Y_{x(n_x)}}(v)=k_B(v)$, for $v\in[-1,1]$. Let 
	\[
	\hat{f}_{nb,\uptau_{G}(x)}(\uptau_{G}(x)|x):=\frac{1}{b}\int_{\uptau_{G}(x)-b}^{\uptau_{G}(x)}\tilde{k}_{b,\uptau_{G}(x)}^B\left(
	\frac{\uptau_{G}(x)-u}{b}
	\right)\dd{\hat{F}_n^G(u|x)},
	\]
	where the boundary kernel $\tilde{k}^B_{b,t}$ is defined as follows:
	\[
	\tilde{k}_{b,t}^B(v)=
	\begin{cases}
		\phi\rbracket{\frac{a_x-t}{b}}k(v) + \psi\rbracket{\frac{a_x-t}{b}}vk(v)&\quad t\in[a_x,b],\\
		k(v)&\quad t\in(b,\uptau_{G}(x)-b),\\
		\phi\rbracket{\frac{\uptau_{G}(x) - t}{b}}k(v) - \psi\rbracket{\frac{\uptau_{G}(x)-t}{b}}vk(v)&\quad t\in[\uptau_{G}(x)-b,\uptau_{G}(x)].
	\end{cases}
	\]
	So $\tilde{k}^B_{b,\tau_G(x)}(v)=k_B(v)$.
	We have that
	\begin{align}
		\label{eq:sg_normality_split}
		\begin{split}
			&
			\hat{f}_{nb,Y_{x(n_x)}}(Y_{x(n_x)}|x)-f(\uptau_G(x)|x)\\
			&=
			\left\lbrace
			\hat{f}_{nb,Y_{x(n_x)}}(Y_{x(n_x)}|x)-\hat{f}_{nb,\uptau_{G}(x)}(\uptau_G(x)|x)
			\right\rbrace
			+
			\left\lbrace
			\hat{f}_{nb,\uptau_{G}(x)}(\uptau_G(x)|x)-f(\uptau_G(x)|x)
			\right\rbrace
			\\
			&=(A)+(B).
		\end{split}
	\end{align}
	We will first show in Part 1 that $n^{2/5}(A)$ converges to zero in probability and then show separately for \ref{enum:sg_discrete_normality_cond} in Part 2 and \ref{enum:sg_discrete_normality_uncond} in Part 3 that $n^{2/5}(B)$ converges to the required normal distribution. Based on these results, we claim the convergence in \eqref{eq:sg_discrete_normality_uncond_undersmooth} for \ref{enum:sg_discrete_normality_uncond_undersmooth} in Part 4.
	
	\paragraph*{Part 1}
	For $n^{2/5}(A)$, since $Y_{x(n_x)}\leq\uptau_{G}(x)$ and $\hat{F}_n^G(u|x)=\hat{F}_n^G(Y_{x(n_x)}|x)$ for $u\geq Y_{x(n_x)}$, we have that
	\begin{align*}
		\begin{split}
			&\hat{f}_{nb,Y_{x(n_x)}}(Y_{x(n_x)}|x)-\hat{f}_{nb,\uptau_{G}(x)}(\uptau_G(z)|x)\\
			&=
			\frac{1}{b}\int_{Y_{x(n_x)}-b}^{\uptau_G(x)-b}k^B_{b,Y_{x(n_x)}}\left(
			\frac{Y_{x(n_x)}-u}{b}
			\right)\dd{\hat{F}_n^G(u|x)}\\
			&\quad+
			\frac{1}{b}\int_{\uptau_G(x)-b}^{\uptau_G(x)}\left\lbrace
			k^B_{b,Y_{x(n_x)}}\left(
			\frac{Y_{x(n_x)}-u}{b}
			\right)
			- \tilde{k}^B_{b,\uptau_G(x)}\left(
			\frac{\uptau_G(x)-u}{b}
			\right)
			\right\rbrace
			\dd{\hat{F}_n^G(u|x)}\\
			&=
			\frac{1}{b}\int_{Y_{x(n_x)}-b}^{\uptau_G(x)-b}k_{B}\left(
			\frac{Y_{x(n_x)}-u}{b}
			\right)\dd{\hat{F}_n^G(u|x)}\\
			&\quad+
			\frac{1}{b}\int_{\uptau_G(x)-b}^{\uptau_G(x)}\left\lbrace
			k_{B}\left(
			\frac{Y_{x(n_x)}-u}{b}
			\right)
			- k_{B}\left(
			\frac{\uptau_G(x)-u}{b}
			\right)
			\right\rbrace
			\dd{\hat{F}_n^G(u|x)}.
		\end{split}
	\end{align*}
	Since $\phi(0)$ and $\psi(0)$ are bounded, and $\sup_{v\in[-1,1]}|k^{(l)}(v)|<\infty$ for $l=0,1$, with $k^{(0)}(v)=k(v)$ and $k^{(1)}(v)=k'(v)$, it can be shown that each of the two terms on the right-hand side of the above display is $O_P\left(b^{-2}|Y_{z(n_z)}-\uptau_G(z)|\right)$. 
	Thus, 
	$n^{2/5}(A)$ converges to zero in probability by Lemma~\ref{lemma:max_surv_time_discrete_conv}, since
	$n^{2/5}(A)=O_P\left(n^{4/5}|Y_{z(n_z)}-\uptau_G(z)|\right)$. The convergence also holds when the bandwidth $b_x=c_xn_x^{-1/5}$ is used.
	
	\paragraph*{Part 2}
	We show the convergence of $n^{2/5}(B)$ for \ref{enum:sg_discrete_normality_cond}.
	We note that $n_x=\sum_{i=1}^{n}\indicator{X_i=x}$. Therefore, the convergence of the sequence $n_x^{2 / 5}\lbrace
	\hat{f}_{nb}(\uptau_{G}(x)|x) - f(\uptau_{G}(x)|x) 
	\rbrace$ to $N(\mu_x, \sigma_x^2)$, given $\mat{X}_1^n=(X_1,\dots,X_n)$, follows from Theorem 2 of \cite{YM2024}, which is based on Lemma~\ref{lemma:strong_approx_cond}. Since $n_x/n\to\rho_x$ a.s. by the strong law of large numbers, we have that for almost every $\mat{X}_1^n$, $n_x/n\to\rho_x$ as a deterministic sequence. Hence we have that
	\[
	\left.
	n^{2 / 5}\left\lbrace{
		\hat{f}_{nb}(\uptau_{G}(x)|x) - f(\uptau_{G}(x)|x) 
	}\right\rbrace\middle\vert\mat{X}_1^n
	\right.
	\converge[d]
	N(\rho_x^{-2/5}\mu_x, \rho_x^{-4/5}\sigma_x^2)
	\]
	for almost every $\mat{X}_1^n$.
	
	\paragraph*{Part 3}
	Next, we show the convergence of $n^{2/5}(B)$ for \ref{enum:sg_discrete_normality_uncond}. One approach is to use the unconditional version of the strong approximation in Lemma~\ref{lemma:strong_approx_cond}. However, since such a result is not available to our best knowledge, we instead show the convergence based on a strong representation of the conditional Kaplan--Meier estimator, together with results from empirical processes.
	We will divide the proof into two parts. For the first part (Part 3.1), 
	we will first give a proof when the bandwidth $b=cn^{-1/5}$
	is used, which is deterministic. Using this result, we will then show, in Part 3.2, the asymptotic normality of $n^{2/5}(B)$ when the bandwidth $b_x=c_xn_x^{-1/5}$ is used, which is stochastic.
	\paragraph*{Part 3.1}
	Following the proof of Theorem~2 of \cite{YM2024}, we have
	\begin{equation}
		\label{eq:sg_normality_split_disc_cov}
		\begin{split}
			\hat{f}_{nb}(\uptau_G(x)|x) - f(\uptau_G(x)|x)
			&=
			\int_{\uptau_G(x) - b}^{\uptau_G(x)}\frac{1}{b}k_{B}\rbracket{\frac{\uptau_G(x)-u}{b}}\dd{F(u|x)}-f(\uptau_G(x)|x)\\
			&\quad+
			\int_{\uptau_G(x) - b}^{\uptau_G(x)}\frac{1}{b}k_{B}\rbracket{\frac{\uptau_G(x)-u}{b}}\dd{(\hat{F}_{n}-F)(u|x)}\\
			&\quad+
			\int_{\uptau_G(x) - b}^{\uptau_G(x)}\frac{1}{b}k_{B}\rbracket{\frac{\uptau_G(x)-u}{b}}\dd{(\hat{F}_{n}^{G}-\hat{F}_{n})(u|x)}\\
			&=(I)+(II)+(III).
		\end{split}
	\end{equation}
	For $(I)$ in \eqref{eq:sg_normality_split_disc_cov}, since $f_u(\cdot|x)$ is twice continuously differentiable on $[a_x,\uptau_{F_{u}}(x)]$, which contains $[\uptau_G(x)-b, \uptau_G(x)]$, and by the properties of the boundary kernel, we have
	\[
	\begin{split}
		n^{2/5}(I)
		&=
		n^{2/5}\int_{0}^{1}k_{B}(v)\cbracket{-f^{\prime}(\uptau_G(x)|x)bv+\frac{1}{2}f^{\prime\prime}(\xi_{n}^x|x)b^{2}v^{2}}\dd{v}\\
		&\converge
		\frac{1}{2}c^{2}f^{\prime\prime}(\uptau_G(x)|x)\int_{0}^{1}v^{2}k_{B}(v)\dd{v},\quad\text{as}~n\converge\infty,
	\end{split}
	\]
	where $0 < \uptau_{G}(x)-\xi_{n}^x<bv<b\converge 0$ as $n\converge\infty$.
	For $(III)$ in \eqref{eq:sg_normality_split_disc_cov}, using Lemma~\ref{lemma:lcm_kme_kme_unif_dist_order} and the boundedness of $k^\prime$ and hence that of $k_{B}^\prime$, we can show that $n^{2/5}(III)$ converges to zero in probability. For $(II)$ in \eqref{eq:sg_normality_split_disc_cov}, we have that
	\[
	n^{2/5}(II)=\frac{1}{\sqrt{bn^{1/5}}}\int_0^1k_{B}(u)\dd{\hat{W}_n(u|x)},
	\]
	where, for $u\in[0,1]$,
	\[
	\hat{W}_n(u|x)=\sqrt{\frac{n}{b}}\left\lbrace
	\hat{F}_n(\uptau_{G}(x)-bu|x)-\hat{F}_n(\uptau_{G}(x)|x)-{F}(\uptau_{G}(x)-bu|x)+{F}(\uptau_{G}(x)|x)
	\right\rbrace.
	\]
	We will show that $\hat{W}_n(u|x)$ converges weakly to a Gaussian process in $\ell^\infty([0,1])$ and hence $n^{2/5}(II)$ converges to a mean zero normal distribution with variance as in \eqref{eq:sg_discrete_normality_uncond}. Using the strong representation in Lemma~\ref{lemma:strong_rep_brownian_disc_x}\ref{enum:strong_rep_disc_x}, we have
	\begin{align}
		\label{eq:W_hat_split}
		\begin{split}
			&\hat{W}_n(u|x)=\\
			&\sqrt{\frac{n}{b}}\left(1+\frac{\rho_x-n_x/n}{n_x/n}\right)\frac{1-F(\uptau_G(x)|x)}{n\rho_x}\sum_{i=1}^n\indicator{X_i=x}\left\lbrace\tilde{\xi}_i(\uptau_{G}(x)-bu|x)-\tilde{\xi}_i(\uptau_{G}(x)|x)\right\rbrace\\
			&+\sqrt{\frac{n}{b}}\left(1+\frac{\rho_x-n_x/n}{n_x/n}\right)\frac{F(\uptau_G(x)|x)-F(\uptau_G(x)-bu|x)}{n\rho_x(1-F(\uptau_G(x)-bu|x))}\sum_{i=1}^n\indicator{X_i=x}{\xi}_i(\uptau_{G}(x)-bu|x)\\
			&+\sqrt{\frac{n}{b}}\left\lbrace
			R_n(\uptau_G(x)-bu|x)-R_n(\uptau_G(x)|x)
			\right\rbrace,
		\end{split}
	\end{align}
	where $\tilde{\xi}_i(t|x)=\xi_i(t|x)/(1-F(t|x))$ with $\xi_i(t|x)$ defined in Lemma~\ref{lemma:strong_rep_brownian_disc_x}. For the last term on the right-hand side of \eqref{eq:W_hat_split}, by Lemma~\ref{lemma:strong_rep_brownian_disc_x}\ref{enum:strong_rep_disc_x}, we have that $\sqrt{nb^{-1}}\lbrace
	R_n(\uptau_G(x)-bu|x)-R_n(\uptau_G(x)|x)
	\rbrace$ converges to zero in probability. For the second term on the right-hand side of \eqref{eq:W_hat_split}, from the proof of Lemma~\ref{lemma:strong_rep_brownian_disc_x}\ref{enum:brownian_disc_x}, we have ${(n\rho_x)^{-1}}\sum_{i=1}^n\indicator{X_i=x}\xi_i(\uptau_{G}(x)-bu|x)=O_P\left(n^{-1/2}\right)$. By the continuity of $f_u(\cdot|x)$, we have $F(\uptau_G(x)|x)-F(\uptau_G(x)-bu|x)\leq %\overline{p}
	\overline{f}_{u,x}b$, where %$\overline{p}:=\max_{x\in\family{X}}p(x)<1$ and
	$\overline{f}_{u,x}:=\sup_{t\leq\uptau_{G}(x)}f_u(t|x)<\infty$. Together with ${(\rho_x-n_x/n)}/{(n_x/n)}=O_P\left(n^{-1/2}\right)$ and $0\leq F(\uptau_G(x)-bu|x)< F(\uptau_G(x)|x)<1$, the second term on the right-hand side of \eqref{eq:W_hat_split} converges to zero in probability. For the first term on the right-hand side of \eqref{eq:W_hat_split}, we will use Theorem~2.11.23 of \cite{VW1996} to show that the following term:
	\[
	\sqrt{\frac{n}{b}}\frac{1-F(\uptau_G(x)|x)}{n\rho_x}\sum_{i=1}^n\indicator{X_i=x}\left\lbrace\tilde{\xi}_i(\uptau_{G}(x)-bu|x)-\tilde{\xi}_i(\uptau_{G}(x)|x)\right\rbrace
	\]
	converges weakly to a Gaussian process in $\ell^\infty([0,1])$ and hence is the dominating term, since $({\rho_x-n_x/n})/({n_x/n})=O_P\left(n^{-1/2}\right)$. We can rewrite the term in the above display as:
	\begin{align*}
		\frac{1-F(\uptau_{G}(x)|x)}{\rho_x\sqrt{b}}\int
		&\indicator{z=x}\bigg\lbrace
		-\frac{\delta{1}_{(\uptau_{G}(x)-bu,\uptau_{G}(x)]}(y)}{1-H(y{-}|x)}\\
		&+\int_{\uptau_G(x)-bu}^{\uptau_G(x)}\frac{{1}_{(-\infty,y]}(v)}{(1-H(v{-}|x))^2}\dd{H_1(v|x)}
		\bigg\rbrace
		\dd{\sqrt{n}(\mathbb{P}_n-\mathbb{P}_0)(y,\delta,z)},
	\end{align*}
	where $\mathbb{P}_n$ denotes the empirical distribution of $(Y_i,\Delta_i,X_i)$ and $\mathbb{P}_0$ denotes the distribution of $(Y,\Delta,X)$.
	Consider the functions
	\begin{align*}
		f_{n,u}(y,\delta,z)
		=
		\frac{1-F(\uptau_{G}(x)|x)}{\rho_x\sqrt{b}}
		\indicator{z=x}\bigg\lbrace
		&-\frac{\delta{1}_{(\uptau_{G}(x)-bu,\uptau_{G}(x)]}(y)}{1-H(y{-}|x)}\\
		&\quad+\int_{\uptau_G(x)-bu}^{\uptau_G(x)}\frac{{1}_{(-\infty,y]}(v)}{(1-H(v{-}|x))^2}\dd{H_1(v|x)}
		\bigg\rbrace,
	\end{align*}
	for $u\in[0,1]$. Let
	\begin{align*}
		F_n(y,\delta,z)=\frac{1-F(\uptau_{G}(x)|x)}{\rho_x\sqrt{b}}
		\indicator{z=x}\bigg\lbrace&
		\frac{\delta{1}_{(\uptau_{G}(x)-b,\uptau_{G}(x)]}(y)}{1-H(\uptau_G(x){-}|x)}\\
		&\quad
		+\frac{H_1((\uptau_G(x)-b,\uptau_G(x)]|x)}{(1-H(\uptau_G(x){-}|x))^2}
		\bigg\rbrace,
	\end{align*}
	which are envelope functions of $f_{n,u}$. Here for simplicity we use the notation $H_1((\uptau_G(x)-b,\uptau_G(x)]|x)=H_1(\uptau_G(x)|x)-H_1(\uptau_G(x)-b|x)$.
	We will verify the three required conditions in (2.11.21) of \cite{VW1996}.
	\begin{enumerate}[wide, labelwidth=!, labelindent=0pt]
		\item
		For $\int F_n^2(y,\delta,z)\dd{\mathbb{P}_0(y,\delta,z)}=O(1)$, we have
		\begin{equation}
			\label{eq:envelope_sq_disc_x_split}
			\begin{split}
				&\int F_n^2(y,\delta,z)\dd{\mathbb{P}_0(y,\delta,z)}
				=\frac{(1-F(\uptau_G(x)|x))^2}{b\rho_x^2}
				\bigg\lbrace
				\frac{\rho_xH_1((\uptau_{G}(x)-b,\uptau_{G}(x)]|x)}{(1-H(\uptau_G(x){-}|x))^2}\\
				&\qquad+
				\frac{\rho_xH_1^2((\uptau_G(x)-b,\uptau_G(x)]|x)}{(1-H(\uptau_G(x){-}|x))^4}
				+\frac{2\rho_xH_1^2((\uptau_{G}(x)-b,\uptau_{G}(x)]|x)}{(1-H(\uptau_G(x){-}|x))^3}
				\bigg\rbrace.
			\end{split}
		\end{equation}
		By the continuity of $f_u(\cdot|x)$, we have  
		\[
		H_1((\uptau_{G}(x)-b,\uptau_{G}(x)]|x)
		=\int_{\uptau_G(x) - b}^{\uptau_{G}}\left\lbrace1-G(s{-}|x)\right\rbrace\dd{F(s|x)}
		\leq%\overline{p}
		\overline{f}_{u,x}b,
		\]
		where %$\overline{p}=\max_{x\in\family{X}}p(x)<1$ and
		$\overline{f}_{u,x}=\sup_{t\leq\uptau_{G}(x)}f_u(t|x)<\infty$. 
		Together with $1-H(\uptau_G(x){-}|x)>0$, the three terms on the right-hand side of \eqref{eq:envelope_sq_disc_x_split} are bounded and hence $\int F_n^2\dd{\mathbb{P}_0}=O(1)$.
		
		\item 
		Next, we show that $\int F_n^2\indicator{F_n>\eta\sqrt{n}}\dd{\mathbb{P}_0}\converge0$ for all $\eta>0$. Specifically, we have that
		\[
		\begin{split}
			&\int F_n^2\indicator{F_n>\eta\sqrt{n}}\dd{\mathbb{P}_0}\\
			&\leq\frac{(1-F(\uptau_G(x)|x))^2}{b\rho_x^2}
			\bigg\lbrace
			\frac{\rho_xH_1^2((\uptau_G(x)-b,\uptau_G(x)]|x)}{(1-H(\uptau_G(x){-}|x))^4}
			+\frac{2\rho_xH_1^2((\uptau_{G}(x)-b,\uptau_{G}(x)]|x)}{(1-H(\uptau_G(x){-}|x))^3}\\
			&\quad+\int
			\frac{\indicator{z=x}\delta{1}_{(\uptau_{G}(x)-b,\uptau_{G}(x)]}(y)\indicator{F_n>\eta\sqrt{n}}}{(1-H(\uptau_G(x){-}|x))^2}
			\dd{\mathbb{P}_0}
			\bigg\rbrace.
		\end{split}
		\]
		As shown previously that $H_1((\uptau_{G}(x)-b,\uptau_{G}(x)]|x)=O(b)$, the first and second terms on the right-hand side of the above inequality are $O(b)$, which converge to $0$ as $n\converge\infty$. For the last term on the right-hand side of the above inequality, using Cauchy--Schwarz inequality and the upper bound of the first term in \eqref{eq:envelope_sq_disc_x_split}, it suffices to show that ${b^{-1/2}}\int\indicator{F_n>\eta\sqrt{n}}\dd{\mathbb{P}_0}$ converges to zero. By the triangle inequality, we have
		%		\[
		\begin{align*}
			&\frac{1}{\sqrt{b}}\int\indicator{F_n>\eta\sqrt{n}}\dd{\mathbb{P}_0}\\
			&\leq
			\frac{1}{\sqrt{b}}\int
			{1}\left\lbrace{\frac{\delta\indicator{z=x}{1}_{(\uptau_{G}(x)-b,\uptau_{G}(x)]}(y)}{1-H(\uptau_G(x){-}|x)}>\frac{\eta\rho_x\sqrt{nb}}{2(1-F(\uptau_{G}(x)|x))}}\right\rbrace\\
			&\qquad\qquad+
			{1}\left\lbrace{\frac{\indicator{z=x}H_1((\uptau_{G}(x)-b,\uptau_{G}(x)]|x)}{(1-H(\uptau_G(x){-}|x))^2}>\frac{\eta\rho_x\sqrt{nb}}{2(1-F(\uptau_{G}(x)|x))}}\right\rbrace
			\dd{\mathbb{P}_0}.
		\end{align*}
		%		\]
		The first and second integrands on the right-hand side of the above inequality are zero for sufficiently large $n$, since $\sqrt{nb}\converge\infty$ as $n\converge\infty$. Therefore  $\int F_n^2\indicator{F_n>\eta\sqrt{n}}\dd{\mathbb{P}_0}\converge0$ for all $\eta>0$. 
		
		\item 
		We then show that, for every sequence $\delta_n\downarrow0$, $\sup_{|u_1-u_2|<\delta_n}\int(f_{n,u_1}-f_{n,u_2})^2\dd{\mathbb{P}_0}\converge0$. For any $u_1$, $u_2\in[0,1]$ with $|u_1-u_2|<\delta_n$, it can be shown that, for every $\delta_n\downarrow0$,
		\[
		\begin{split}
			\int&(f_{n,u_1}-f_{n,u_2})^2\dd{\mathbb{P}_0}
			\leq\frac{(1-F(\uptau_G(x)|x))^2}{b\rho_x^2}
			\bigg\lbrace
			\frac{b%\overline{p}
				\overline{f}_{u,x}\delta_n\rho_x}{(1-H(\uptau_G(x){-}|x))^2}\\
			&\quad
			+\frac{b^2%\overline{p}^2
				\overline{f}_{u,x}^2\delta_n^2\rho_x}{(1-H(\uptau_G(x){-}|x))^4}
			+\frac{2b^2%\overline{p}
				\overline{f}_{u,x}\delta_n^2\rho_x}{(1-H(\uptau_G(x){-}|x))^3}
			\bigg\rbrace\converge0.
		\end{split}
		\]
	\end{enumerate}
	Next we verify the bracketing entropy condition. Since $\family{F}_n=\set{f_{n,u}}{u\in[0,1]}$ is a class of functions consisting of sums and products of monotone functions, by Theorem~2.7.5 of \cite{VW1996}, the bracketing entropy is bounded as follows:
	\[
	\log N_{[\,]}\left(\epsilon, \family{F}_n, \|\cdot\|_{\mathbb{P}_0,2}\right)\leq C/\epsilon,
	\]
	for some constant $C>0$, where $\|\cdot\|_{\mathbb{P}_0,2}$ is the $L_2$--norm corresponding to $\mathbb{P}_0$. Thus, we have, for every $\delta_n\downarrow0$,
	\[
	\int_0^{\delta_n}\sqrt{\log N_{[\,]}\left(\epsilon\|F_n\|_{\mathbb{P}_0,2}, \family{F}_n, L_2(\mathbb{P}_0)\right)}\dd{\epsilon}
	\leq C_1\sqrt{\delta_n}\converge0,
	\]
	where $C_1>0$ is a constant. Finally, we show that both the sequence of mean functions, $\mathbb{P}_0f_{n,s}$, and the sequence of the covariance functions, $\mathbb{P}_0f_{n,s}f_{n,t}-\mathbb{P}_0f_{n,s}\mathbb{P}_0f_{n,t}$, converge.
	We have, for  any $s\in[0,1]$,
	\[
	\left\vert\int f_{n,s}\dd{\mathbb{P}_0}\right\vert
	\leq
	\frac{%\overline{p}
		\overline{f}_{u,x}(1-F(\uptau_G(x)|x))\sqrt{b}}{1-H(\uptau_G(x){-}|x)}+
	\frac{%\overline{p}
		\overline{f}_{u,x}(1-F(\uptau_G(x)|x))\sqrt{b}}{(1-H(\uptau_G(x){-}|x))^2}\converge0.
	\]
	Furthermore, for any $s,t\in[0,1]$,
	%	\[
	\begin{align*}
		\int &f_{n,s}f_{n,t}\dd{\mathbb{P}_0}=
		\frac{(1-F(\uptau_{G}(x)|x))^2}{b\rho_x^2}
		\int
		\indicator{z=x}\bigg\lbrace
		\frac{\delta{1}_{(\uptau_{G}(x)-b(s\wedge t),\uptau_{G}(x)]}(y)}{(1-H(y{-}|x))^2}\\
		&+\int_{\uptau_G(x)-bs}^{\uptau_G(x)}\frac{{1}_{(-\infty,y]}(u)}{(1-H(u{-}|x))^2}\dd{H_1(u|x)}
		\int_{\uptau_G(x)-bt}^{\uptau_G(x)}\frac{{1}_{(-\infty,y]}(v)}{(1-H(v{-}|x))^2}\dd{H_1(v|x)}\\
		&-\frac{\delta{1}_{(\uptau_{G}(x)-bs,\uptau_{G}(x)]}(y)}{1-H(y{-}|x)}
		\int_{\uptau_G(x)-bt}^{\uptau_G(x)}\frac{{1}_{(-\infty,y]}(u)}{(1-H(u{-}|x))^2}\dd{H_1(u|x)}\\
		&-\frac{\delta{1}_{(\uptau_{G}(x)-bt,\uptau_{G}(x)]}(y)}{1-H(y{-}|x)}
		\int_{\uptau_G(x)-bs}^{\uptau_G(x)}\frac{{1}_{(-\infty,y]}(v)}{(1-H(v{-}|x))^2}\dd{H_1(v|x)}
		\bigg\rbrace\dd{\mathbb{P}_0}.
	\end{align*}
	%	\]
	It can be shown that the sum of the last three terms on the right-hand side of the above equation vanishes, while the first term equals
	\begin{equation}
		\label{eq:W_hat_cov_func_disc_x}
		\frac{(1-F(\uptau_{G}(x)|x))^2}{b\rho_x}\int_{\uptau_G(x)-b(s\wedge t)}^{\uptau_G(x)}\frac{\dd{H_1(u|x)}}{(1-H(u{-}|x))^2}
		\converge
		\frac{(s\wedge t)f(\uptau_G(x)|x)}{\rho_x(1-G(\uptau_G(x){-}|x))}.
	\end{equation}
	Therefore $\hat{W}_n(u|x)$ converges weakly in $\ell^\infty([0,1])$ to a mean zero Gaussian process with the covariance function given on the right-hand side of \eqref{eq:W_hat_cov_func_disc_x}, which has the same distribution as $$\sqrt{\frac{f(\uptau_G(x)|x)}{\rho_x(1-G(\uptau_G(x){-}|x))}}\hat{W}(u),$$ with a Brownian motion $\hat{W}(u)$, $u\in[0,1]$.
	Thus, we have
	\[
	\begin{split}
		n^{2/5}(II)
		&=\frac{1}{\sqrt{bn^{1/5}}}\int_0^1k_{B}(u)\dd{\hat{W}_n(u|x)}\\
		&\converge[d]
		\sqrt{\frac{f(\uptau_G(x)|x)}{c\rho_x(1-G(\uptau_G(x){-}|x))}}\int_0^1k_{B}(u)\dd{\hat{W}(u)}\\
		&\overset{d}{=}
		N\left(0, \frac{f(\uptau_G(x)|x)}{c\rho_x(1-G(\uptau_G(x){-}|x))}\int_0^1k_{B}^2(v)\dd{v}\right),
	\end{split}
	\]
	which completes the proof when the bandwidth $b=cn^{-1/5}$ is used.
	\paragraph*{Part 3.2}
	When the bandwidth $b_x=c_xn_x^{-1/5}$ for some constant $c_x>0$ is used, we note that $n_x/n\converge[a.s.]\rho_x\in(0,1)$ and so $b_x\converge[a.s.]0$ and $b_xn^{1/5}\converge[a.s.]c_x\rho_x^{-1/5}$. Therefore it can be shown that the convergence in \eqref{eq:sg_discrete_normality_uncond} using a similar argument as before. Specifically, for $(I)$ in \eqref{eq:sg_normality_split_disc_cov}, it can be shown that $n^{2/5}(I)\converge[a.s.]\frac{1}{2}\rho_x^{-2/5}c_x^{2}f^{\prime\prime}(\uptau_G(x)|x)\int_{0}^{1}v^{2}k_{B}(v)\dd{v}$, as $n\converge\infty$. 
	For $(II)$ in \eqref{eq:sg_normality_split_disc_cov}, we have
	\[
	n^{2/5}(II)=\int_0^1k_{B}(u)\dd{\tilde{W}_n(u|x)},
	\]
	where \[
	\tilde{W}_n(u|x)
	=\frac{n^{2/5}}{b_x}\left\lbrace
	\hat{F}_n(\uptau_{G}(x)-b_xu|x)-\hat{F}_n(\uptau_{G}(x)|x)-{F}(\uptau_{G}(x)-b_xu|x)+{F}(\uptau_{G}(x)|x)
	\right\rbrace.
	\] Similar to that in \eqref{eq:W_hat_split}, we can rewrite $\tilde{W}_n(u|x)$ as:
	\begin{align}
		\label{eq:W_tilde_split}
		\begin{split}
			\tilde{W}_n(u|x)
			&=
			\frac{n^{2/5}}{b_x}(1-F(\uptau_G(x)|x))\frac{1}{n_x}\sum_{i=1}^n\indicator{X_i=x}\left\lbrace\tilde{\xi}_i(\uptau_{G}(x)-b_xu|x)-\tilde{\xi}_i(\uptau_{G}(x)|x)\right\rbrace\\
			&\qquad+
			\frac{n^{2/5}}{b_x}\frac{F(\uptau_G(x)|x)-F(\uptau_G(x)-b_xu|x)}{1-F(\uptau_G(x)-b_xu|x)}\frac{1}{n_x}\sum_{i=1}^n\indicator{X_i=x}{\xi}_i(\uptau_{G}(x)-b_xu|x)\\
			&\qquad+
			\frac{n^{2/5}}{b_x}\left\lbrace
			R_n(\uptau_G(x)-b_xu|x)-R_n(\uptau_G(x)|x)
			\right\rbrace.
		\end{split}
	\end{align}
	The third term on the right-hand side of \eqref{eq:W_tilde_split} is $O_P(n^{3/5}(\log{n}/n)^{3/4})$ using Lemma~\ref{lemma:strong_rep_brownian_disc_x}\ref{enum:strong_rep_disc_x} and $n_x/n\converge[a.s.]\rho_x$.
	The second term  is $O_P(n^{-1/10})$, since $F(\uptau_G(x)|x)-F(\uptau_G(x)-b_xu|x)=O(b_x)$ together with the result of Lemma~\ref{lemma:strong_rep_brownian_disc_x}\ref{enum:brownian_disc_x}.
	The first term can be rewritten as:
	\begin{equation}
		\label{eq:first_term_W_tilde}
		\frac{n^{4/5}}{\sqrt{c_x}n_x^{4/5}}\sqrt{\frac{n}{b}}\frac{1-F(\uptau_G(x)|x)}{n}\sum_{i=1}^n\indicator{X_i=x}\left\lbrace\tilde{\xi}_i(\uptau_{G}(x)-\rho_{n}^{-1/5}bu|x)-\tilde{\xi}_i(\uptau_{G}(x)|x)\right\rbrace,
	\end{equation}
	where $b=c_xn^{-1/5}$ and $\rho_n=n_x/n$.
	We have shown in Part 1 that
	\[
	\sqrt{\frac{n}{b}}\frac{1-F(\uptau_G(x)|x)}{n\rho_x}\sum_{i=1}^n\indicator{X_i=x}\left\lbrace\tilde{\xi}_i(\uptau_{G}(x)-bu|x)-\tilde{\xi}_i(\uptau_{G}(x)|x)\right\rbrace
	\]
	converges weakly in $\ell^\infty([0,1])$ to a mean zero Gaussian process with the covariance function given on the right-hand side of \eqref{eq:W_hat_cov_func_disc_x}. By the continuous mapping theorem (see the lemma on page 151 of \cite{B1999}), since $\rho_n^{-1/5}\converge[a.s.]\rho_x^{-1/5}$ together with the weak convergence above, the term in \eqref{eq:first_term_W_tilde}
	converges weakly in $\ell^\infty([0,1])$ to a mean zero Gaussian process with the covariance function:
	\[
	\frac{(s\wedge t)f(\uptau_G(x)|x)}{c_x\rho_x^{4/5}(1-G(\uptau_G(x){-}|x))},\quad\text{for any }s,t\in[0,1].
	\]
	Therefore,
	\[
	n^{2/5}(II)\converge[d]
	N\left(0, \frac{f(\uptau_G(x)|x)}{c_x\rho_x^{4/5}(1-G(\uptau_G(x){-}|x))}\int_0^1k_{B}^2(v)\dd{v}\right),
	\]
	which completes the proof when $b_x=c_xn_x^{-1/5}$ is used.
	
	\paragraph*{Part 4}
	We then claim the convergences in \eqref{eq:sg_discrete_normality_uncond_undersmooth} for \ref{enum:sg_discrete_normality_uncond_undersmooth} and begin with $b=cn^{-\kappa}$. Referring to \eqref{eq:sg_normality_split} and Part 1, it can be shown that $n^{(1-\kappa)/2}(A)$ converges to zero in probability, since $n^{(1-\kappa)/2}(A)=O_P\left(n^{1/2+3\kappa/2}|Y_{z(n_z)}-\uptau_G(z)|\right)$ and $\kappa\in(1/5,1/3)$. Such convergence also holds when $b_x=c_xn_x^{-\kappa}$.
	For $n^{(1-\kappa)/2}(B)$, similar to that in Part 3.1, it can be shown that $n^{(1-\kappa)/2}(I)$ converges to zero, since $n^{(1-\kappa)/2}(I)=O_P\left(n^{(1-5\kappa)/2}\right)$ and $\kappa\in(1/5,1/3)$, and also that $n^{(1-\kappa)/2}(III)$ converges to zero in probability, since $n^{(1-\kappa)/2}(III)=O_P\left(n^{(1+\kappa)/2}(\log{n}/n)^{2/3}\right)$ and $\kappa\in(1/5,1/3)$.
	For $(II)$, it can be shown that
	\[
	n^{(1-\kappa)/2}(II)=\frac{1}{\sqrt{bn^\kappa}}\int_0^1k_{B}(u)\dd{\hat{W}_n(u|x)},
	\]
	where $\hat{W}_n$ is defined as in Part 3.1. Then the convergence of $n^{(1-\kappa)/2}(II)$ to the required normal distribution follows from a similar argument in Part 3.1. When $b_x=c_xn_x^{-\kappa}$ is used, 
	the asymptotic normality can be established by using a similar argument in Part 3.2. Specifically, $n^{(1-\kappa)/2}(I)\converge[a.s.]0$ as $n\to\infty$. For $n^{(1-\kappa)/2}(II)$, we note that \eqref{eq:first_term_W_tilde} becomes
	\[
	\frac{n^{1-\kappa}}{\sqrt{c_x}n_x^{1-\kappa}}\sqrt{\frac{n}{b}}\frac{1-F(\uptau_G(x)|x)}{n}\sum_{i=1}^n\indicator{X_i=x}\left\lbrace\tilde{\xi}_i(\uptau_{G}(x)-\rho_{n}^{-\kappa}bu|x)-\tilde{\xi}_i(\uptau_{G}(x)|x)\right\rbrace,
	\]
	where $b=c_xn^{-\kappa}$ and $\rho_n=n_x/n$. Thus, it can be shown that
	\[
	n^{(1-\kappa)/2}(II)\converge[d]
	N\left(0, \frac{f(\uptau_G(x)|x)}{c_x\rho_x^{1-\kappa}(1-G(\uptau_G(x){-}|x))}\int_0^1k_{B}^2(v)\dd{v}\right).
	\]
\end{proof}

\begin{lemma}
	\label{lemma:max_surv_time_discrete_conv}
	Suppose that Assumptions~\ref*{assumption:discrete_cov}--\ref*{assumption:positive_cure_prob} hold. Then,
	for each $x\in\family{X}$, we have that, for any $a\in(0,1)$,
	\[
	n^a\left(
	\uptau_G(x) - Y_{x(n_x)}
	\right)\converge[\mathbb{P}] 0
	\quad\text{as }n\converge\infty.
	\]
\end{lemma}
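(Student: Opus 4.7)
The plan is to prove the strictly stronger statement $\prob{Y_{x(n_x)} = \uptau_G(x)} \to 1$, from which the claim follows at once. Since $C_{xi}\le\uptau_G(x)$ by definition of $\uptau_G$, and $Y_{xi}=\min(T_{xi},C_{xi})\le C_{xi}$, we have $Y_{x(n_x)}\le\uptau_G(x)$ almost surely. Thus on the event $\{Y_{x(n_x)}=\uptau_G(x)\}$ the quantity $\uptau_G(x)-Y_{x(n_x)}$ vanishes, so $n^a(\uptau_G(x)-Y_{x(n_x)})=0$ on an event of probability tending to one and hence converges to zero in probability for any $a>0$.

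The key ingredient is the atom of the censoring distribution at its right endpoint imposed in Assumption~\ref*{assumption:censoring_discrete_cov}. Using the conditional independence of $T$ and $C$ given $X$, together with the representation $Y=\min(T,C)$, I compute
\[
\pi_x:=\prob{Y_{x1}=\uptau_G(x)\mid X_1=x}=\Delta G(\uptau_G(x))\cdot S(\uptau_G(x)|x).
\]
The first factor is positive by Assumption~\ref*{assumption:censoring_discrete_cov}. The second factor equals $1-p(x)F_u(\uptau_G(x)|x)$ by the mixture representation in~\eqref{eq:mcm}, which is strictly positive because $F_u(\uptau_G(x)|x)<1$ by Assumption~\ref*{assumption:positive_cure_prob} and $p(x)\le 1$. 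Hence $\pi_x>0$.

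Conditioning on $\mat{X}_1^n$, the observations in the $x$-subsample are i.i.d.\ with the conditional law of $Y$ given $X=x$, so, using $Y_{xi}\le\uptau_G(x)$ a.s.,
\[
\prob{Y_{x(n_x)}<\uptau_G(x)\mid\mat{X}_1^n}=(1-\pi_x)^{n_x}.
\]
By the strong law of large numbers applied to $\indicator{X_i=x}$, $n_x/n\to\rho_x>0$ almost surely under Assumption~\ref*{assumption:discrete_cov}, so $n_x\to\infty$ almost surely and $(1-\pi_x)^{n_x}\to 0$ almost surely. Dominated convergence then gives $\prob{Y_{x(n_x)}<\uptau_G(x)}\to 0$, which is the required statement. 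I do not foresee any serious obstacle here: the argument reduces to a short calculation once the atom of $Y\mid X=x$ at $\uptau_G(x)$ has been identified, and the only detail requiring a moment of care is verifying the positivity of $S(\uptau_G(x)|x)$, which follows immediately from the mixture-cure form~\eqref{eq:mcm} and Assumption~\ref*{assumption:positive_cure_prob}.
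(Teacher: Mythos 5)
Your proof is correct, and it is a mild but genuine variant of the paper's argument. The paper bounds $\prob{Y_{x(n_x)}<\uptau_G(x)-\epsilon/n^a}$ directly: conditioning on the covariates and averaging over the binomial law of $n_x$ yields $\left(1-\rho_x+\rho_xH(\uptau_G(x)-\epsilon/n^a|x)\right)^n$, which is then shown to decay exponentially because $1-H(\uptau_G(x){-}|x)\geq(1-F_u(\uptau_G(x)|x))(1-G(\uptau_G(x){-}|x))>0$ under Assumptions~\ref*{assumption:censoring_discrete_cov}--\ref*{assumption:positive_cure_prob}. You instead prove the strictly stronger statement $\prob{Y_{x(n_x)}=\uptau_G(x)}\to1$, identifying the atom $\pi_x=\Delta G(\uptau_G(x))\,S(\uptau_G(x)|x)>0$ of the conditional law of $Y$ at the endpoint and then using $(1-\pi_x)^{n_x}\to0$ a.s.\ together with dominated convergence. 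The key positivity is the same in both proofs (indeed $\pi_x=1-H(\uptau_G(x){-}|x)$ since $Y\leq\uptau_G(x)$ a.s.), so the mechanism is identical; what your packaging buys is the cleaner qualitative conclusion that the maximum observed time attains $\uptau_G(x)$ exactly with probability tending to one, so the normalized difference is eventually zero for \emph{any} power $n^a$, $a>0$, not just $a\in(0,1)$. What the paper's version buys is an explicit nonasymptotic exponential bound on the tail probability, which it does not actually need for the lemma. Your computation of $\pi_x$ is sound: continuity of $F_u(\cdot|x)$ kills the contribution of $\{T=\uptau_G(x)\}$, conditional independence factorizes the remaining event, and $S(\uptau_G(x)|x)=1-p(x)F_u(\uptau_G(x)|x)>0$ follows from \ref*{assumption:positive_cure_prob}.
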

\begin{proof}[Proof of Lemma~\ref{lemma:max_surv_time_discrete_conv}]
	Let $\mat{X}_1^n=(X_1,\cdots,X_n)$. For all $\epsilon>0$, we have
	\begin{align*}
		\begin{split}
			\prob{\left\vert
				n^a\left(\uptau_G(x) - Y_{x(n_x)}\right)
				\right\vert>\epsilon}
			&=\prob{
				Y_{x(n_x)}<\uptau_G(x)-{\epsilon}/{n^a}
			}\\
			&=\expect[\sbracket]{
				\prob{Y_{x(n_x)}<\uptau_G(x)-{\epsilon}/{n^a}\middle\vert \mat{X}_1^n
			}}\\
			&=\sum_{k=0}^{n}\binom{n}{k}\rho_x^k(1-\rho_x)^{n-k}H^k\left(\uptau_G(x)-{\epsilon}/{n^a}\middle\vert x\right)\\
			%		&=\sum_{k=0}^{n}\binom{n}{k}\left\lbrace\rho_xH\left(\uptau_G(x)-\frac{\epsilon}{n^a}\middle\vert x\right)\right\rbrace^k(1-\rho_x)^{n-k}\\
			&=\left(\rho_xH\left(\uptau_G(x)-{\epsilon}/{n^a}\middle\vert x\right)+1-\rho_x\right)^n\\
			&=\exp\left\lbrace n\log\left(
			1-\rho_x+\rho_xH(\uptau_G(x)-{\epsilon}/{n^a}\vert x)
			\right)\right\rbrace\\
			&\leq\exp\left\lbrace
			-n\rho_x(1-H(\uptau_G(x)-{\epsilon}/{n^a}\vert x))
			\right\rbrace\\
			&\leq\exp\left\lbrace
			-n\rho_x(1-F_u(\uptau_G(x)|x))(1-G(\uptau_{G}(x){-}|x))
			\right\rbrace\\
			&\converge0\quad\text{as }n\converge\infty.
		\end{split}
	\end{align*}
	Here the second last inequality holds since, under the assumptions that $F_u(\uptau_G(x)|x)<1$ and $1-G(\uptau_{G}(x){-}|x)>0$, we have $$1-H(\uptau_G(x)-{\epsilon}/{n^a}|x)\geq(1-F_u(\uptau_G(x)|x))%(1-p(x))
	(1-G(\uptau_{G}(x){-}|x))>0.$$
	Also note that $1-\rho_x+\rho_xH(\uptau_G(x)-{\epsilon}/{n^a}\vert x)\in(0,1)$ for all $n$.
\end{proof}

\begin{proof}[Proof of Theorem~\ref{theorem:test_discrete_level}]
	For \ref{enum:test_M1_level},
	under $H_0: q_{1-\epsilon}(z) \geq \uptau_{G}(z)$ for some $z\in\family{X}$, we have
	\begin{align}
		\label{eq:test_discrete_level}
		\begin{split}
			&\prob{\bigcap_{x\in\family{X}}\cbracket{
					\hat{f}_{nb}(Y_{x(n_x)}|x)-\frac{\epsilon \hat{F}_{n}(Y_{x(n_x)}|x)}{\tau-Y_{x(n_x)}} 
					< n_x^{-\frac{1-\kappa}2}\xi_{x,\alpha}
				}\middle\vert H_0}\\
			&\leq\prob{
				\hat{f}_{nb}(Y_{z(n_z)}|z)-\frac{\epsilon \hat{F}_{n}(Y_{z(n_z)}|z)}{\tau-Y_{z(n_z)}} 
				< n_z^{-\frac{1-\kappa}2}\xi_{z,\alpha}\middle\vert H_0
			}\\
			&\leq\prob{
				\hat{f}_{nb}(Y_{z(n_z)}|z)-f(\uptau_G(z)|z)-
				\frac{\epsilon\hat{F}_{n}(Y_{z(n_z)}|z)}{\tau-Y_{z(n_z)}} +
				\frac{(\epsilon-\eta_n) {F}(\uptau_G(z)|z)}{\tau-\uptau_G(z)}
				< n_z^{-\frac{1-\kappa}2}\xi_{z,\alpha}\middle\vert H_0
			}.
		\end{split}
	\end{align}
	We further expand the stochastic terms in \eqref{eq:test_discrete_level} as follows and will show that $(I)$ below is the dominating term:
	{
		\allowdisplaybreaks
		\begin{align*}
			&n^{(1-\kappa)/2}\left\lbrace
			\hat{f}_{nb,Y_{z(n_z)}}(Y_{z(n_z)}|z)-f(\uptau_G(z)|z)-
			\frac{\epsilon \hat{F}_{n}(Y_{z(n_z)}|z)}{\tau-Y_{z(n_z)}} +
			\frac{(\epsilon-\eta_n) {F}(\uptau_G(z)|z)}{\tau-\uptau_G(z)}
			\right\rbrace\\
			&=
			n^{(1-\kappa)/2}\left\lbrace
			\hat{f}_{nb,Y_{z(n_z)}}(Y_{z(n_z)}|z)-f(\uptau_G(z)|z)
			\right\rbrace\\
			&\quad-
			\frac{\epsilon n^{(1-\kappa)/2}\lbrace\hat{F}_{n}(Y_{z(n_z)}|z)-F(Y_{z(n_z)}|z)\rbrace}{\tau-Y_{z(n_z)}}-
			\frac{\epsilon n^{(1-\kappa)/2}\lbrace F(Y_{z(n_z)}|z)-F(\uptau_G(z)|z) \rbrace}{\tau-Y_{z(n_z)}}\\
			&\quad-
			\epsilon n^{(1-\kappa)/2}F(\uptau_G(z)|z)\left(
			\frac{1}{\tau-Y_{z(n_z)}}-\frac{1}{\tau-\uptau_{G}(z)}
			\right)-
			\frac{n^{(1-\kappa)/2}\eta_nF(\uptau_G(z)|z)}{\tau-\uptau_G(z)}
			\\
			&=(I)-(II)-(III)-(IV)-(V).
		\end{align*}
	}
	\begin{itemize}[wide,labelwidth=!, labelindent=0pt]
		\item
		For $(II)$, using Lemma~\ref{lemma:strong_rep_brownian_disc_x}\ref{enum:brownian_disc_x}
		and the fact that $\kappa\in(1/5,1/3)$, we have
		\[
		n^{(1-\kappa)/2}\lbrace\hat{F}_{n}(Y_{z(n_z)}|z)-F(Y_{z(n_z)}|z)\rbrace \leq
		n^{(1-\kappa)/2}\sup_{t\in[0,\uptau_{G}(z)]}\vert
		\hat{F}_{n}(t|z)-F(t|z)
		\vert=o_P(1).
		\]
		Also $1/(\tau-Y_{z(n_z)})=O_P(1)$, since $Y_{z(n_z)}\converge \uptau_{G}(z)$ almost surely. Thus, $(II)$ converges to zero in probability.
		\item
		For $(III)$, since $f_u(\cdot|z)$ is continuously differentiable, we have
		\[
		n^{(1-\kappa)/2}\left\vert
		F(Y_{z(n_z)}|z)-F(\uptau_G(z)|z)
		\right\vert\leq
		n^{(1-\kappa)/2}p(z)\sup_t\vert f_u(t|z)\vert\left\vert
		\uptau_G(z)-Y_{z(n_z)}
		\right\vert,
		\]
		which converges to zero in probability by 
		Lemma~\ref{lemma:max_surv_time_discrete_conv} and the assumption that $\sup_t\vert f_u(t|z)\vert<\infty$. Also $1/(\tau-Y_{z(n_z)})=O_P(1)$. Therefore, $(III)$ converges to zero in probability.
		\item
		For $(IV)$, its convergence to zero in probability follows from $1/(\tau-Y_{z(n_z)})=O_P(1)$, the boundedness of $\tau-\uptau_{G}(z)$ and Lemma~\ref{lemma:max_surv_time_discrete_conv}.
		\item
		For $(V)$, its convergence to zero follows from the assumption that $n^{(1-\kappa)/2}\eta_n\converge0$ as $n\converge\infty$.
	\end{itemize}
	The dominant term $(I)$ converges to a normal distribution by Theorem~\ref{thm:sg_discrete_normality} and the $\alpha$-quantile of which is $\xi_{\alpha}$, which implies that the probability in \eqref{eq:test_discrete_level} is asymptotically bounded by $\alpha$.
	
	For \ref{enum:test_M2_level}, 
		recall that $Q_{1-\gamma,n}^\star(x)$ is the $(1-\gamma)$-quantile of the bootstrap estimate $\hat{T}_n^\star(x)$, for $\gamma\in(0,1)$, and $\hat{x}_{\ast,n}=\argmax_{x\in\family{X}}Q_{1-\gamma,n}^\star(x)$.
		We wish to show that the test given in \eqref{eq:test_M2} has an asymptotic level of $\alpha$.
		Let $\family{X}_\ast=\set{x\in\family{X}}{T(x)=\max_{x\in\family{X}}T(x)}$.
		For notational simplicity, we denote $\mathbb{P}_{H_0}(A):=\mathbb{P}(A \mid H_0)$ for any event $A$.
		Under $H_0: q_{1-\epsilon}(z) \geq \uptau_{G}(z)$ for some $z\in\family{X}$, we have
		\begin{align}
			\label{eq:rej_prob_M2_decomp}
			\begin{split}
				&\mathbb{P}_{H_0}\big(\hat{T}_n(\hat{x}_{\ast,n})< {n^{-(1-\kappa)/2}}\xi_{\hat{x}_{\ast,n},\alpha}\big)\\
				&\leq
				\mathbb{P}_{H_0}\big(\hat{T}_n(\hat{x}_{\ast,n})< {n^{-(1-\kappa)/2}}\xi_{\hat{x}_{\ast,n},\alpha},\ \hat{x}_{\ast,n}\in\family{X}_\ast\big)
				+
				\mathbb{P}_{H_0}\big(\hat{x}_{\ast,n}\notin\family{X}_\ast\big)\\
				&=
				\sum_{x\in\family{X}_\ast}
				\mathbb{P}_{H_0}\big(\hat{T}_n(x)< {n^{-(1-\kappa)/2}}\xi_{x,\alpha}\big| \hat{x}_{\ast,n}=x\big)
				\mathbb{P}_{H_0}\big(\hat{x}_{\ast,n}=x\big)
				+
				\mathbb{P}_{H_0}\big(\hat{x}_{\ast,n}\notin\family{X}_\ast\big)\\
				&=
				\mathbb{P}_{H_0}\big(\hat{T}_n(\tilde{x}_\ast)< {n^{-(1-\kappa)/2}}\xi_{\tilde{x}_\ast,\alpha}\big| \hat{x}_{\ast,n}=\tilde{x}_\ast\big)\mathbb{P}_{H_0}\big(\hat{x}_{\ast,n}=\tilde{x}_\ast\big)\\
				&\qquad+
				\sum_{x\in\family{X}_\ast\setminus\{\tilde{x}_\ast\}}\mathbb{P}_{H_0}\big(\hat{T}_n(x)< {n^{-(1-\kappa)/2}}\xi_{x,\alpha}\big| \hat{x}_{\ast,n}=x\big)\mathbb{P}_{H_0}\big(\hat{x}_{\ast,n}=x\big)
				+
				\mathbb{P}_{H_0}\big(\hat{x}_{\ast,n}\notin\family{X}_\ast\big)\\
				&\leq\mathbb{P}_{H_0}\big(\hat{T}_n(\tilde{x}_\ast)< {n^{-(1-\kappa)/2}}\xi_{\tilde{x}_\ast,\alpha}\big)\\
				&\qquad+
				\sum_{x\in\family{X}_\ast\setminus\{\tilde{x}_\ast\}}\mathbb{P}_{H_0}\big(\hat{T}_n(x)< {n^{-(1-\kappa)/2}}\xi_{x,\alpha}\big| \hat{x}_{\ast,n}=x\big)\mathbb{P}_{H_0}\big(\hat{x}_{\ast,n}=x\big)
				+\mathbb{P}_{H_0}\big(\hat{x}_{\ast,n}\notin\family{X}_\ast\big),
				%		&\leq
				%		\max_{x_\ast\in\family{X}_\ast}\mathbb{P}_{H_0}\big(\hat{T}_n(x_\ast)< {{n^{-(1-\kappa)/2}}\xi_{x_\ast,\alpha}}\big)
				%		+
				%		\mathbb{P}_{H_0}\big(\hat{x}_{\ast,n}\notin \family{X}_\ast\big).
			\end{split}
		\end{align}
		for some $\tilde{x}_\ast\in\family{X}_\ast$ such that $\mathbb{P}_{H_0}\big(\hat{x}_{\ast,n}=\tilde{x}_\ast\big)\to1$ and $\mathbb{P}_{H_0}\big(\hat{x}_{\ast,n}=x\big)\to0$ for $x\in\family{X}_\ast\setminus\{\tilde{x}_\ast\}$.
		For the existence of such $\tilde{x}_\ast$, we note that $Q_{1-\gamma,n}^\star(x)$ is asymptotically approximated by $T(x)+r_nR(x)$ for some sequence $r_n\to0$ and some function $R$ that depends on the covariate $x$. 
		Specifically, we have that
		\[
		Q_{1-\gamma,n}^\star(x)=T(x) + Q_{1-\gamma}\left(\hat{T}_n^\star(x) - \tilde{T}_n(x) + \tilde{T}_n(x)-T(x)\right),
		\]
		where $Q_\alpha(Z)$ denotes the $\alpha$-quantile of the law of $Z$. 
		For the second term on the right-hand side of the equality in the above display, we have, conditional on the data, that
		\[
		Q_{1-\gamma}\left(\hat{T}_n^\star(x) - \tilde{T}_n(x) + \tilde{T}_n(x)-T(x)\right)
		=Q_{1-\gamma}\left(\hat{T}_n^\star(x) - \tilde{T}_n(x)\right) + \tilde{T}_n(x)-T(x).
		\]

		For $Q_{1-\gamma}\left(\hat{T}_n^\star(x) - \tilde{T}_n(x)\right)$,
		since the bootstrap is consistent, we have
		\[
		Q_{1-\gamma}\left(n^{(1-\kappa)/2}\lbrace\hat{T}_n^\star(x) - \tilde{T}_n(x)\rbrace\right)\converge[\mathbb{P}]Q_{1-\gamma}\left(N(0, \rho_x^{-(1-\kappa)}\sigma_x^2)\right)
		=\rho_x^{-(1-\kappa)/2}\sigma_xz_{1-\gamma},
		\]
		where $z_\alpha$ is the $\alpha$-quantile of the standard normal distribution.
		Therefore, 
		\[Q_{1-\gamma}\left(\hat{T}_n^\star(x) - \tilde{T}_n(x)\right)\]
		is asymptotically approximated by $n^{-(1-\kappa)/2}\rho_x^{-(1-\kappa)/2}\sigma_xz_{1-\gamma}$.
		For $\tilde{T}_n(x) - T(x)$, it can be shown that it is dominated by $$\hat{f}_{nb_0^x}(\uptau_{G}(x)|x)-f(\uptau_{G}(x)|x),$$ which is of order $O_P\left((b_0^x)^2f^{\prime\prime}(\uptau_{G}(x)|x)\right)$.			
		Thus, $Q_{1-\gamma}\left(\hat{T}_n^\star(x) - \tilde{T}_n(x) + \tilde{T}_n(x) - T(x)\right)$ is asymptotically approximated by $r_nR(x)$ for a sequence $r_n\to0$ and a function $R$ that depends on $x$.
		Hence, $\tilde{x}_\ast$ is the maximizer of $R(x)$ on $\family{X}_\ast$ asymptotically.
		We assume that we are not in the setting where the covariate $X$ has no effect on the density functions of both the survival and censoring times (affecting both $\sigma_x$ and $f^{\prime\prime}(\uptau_{G}(x)|x)$) and all the values of $X$ are equally likely (i.e., $\rho_x$ is the same for all $x$). In such case, $R(x)$ would be the same for all $x$.%
	
	Since, for each $x_\ast\in\family{X}_\ast$, $T(x_\ast)\geq\max_{x\in\family{X}}T(x)\geq0$ under $H_0$ and $\family{X}_\ast\subseteq\family{X}$ is finite, the first term on the right-hand side of the last inequality in \eqref{eq:rej_prob_M2_decomp} converges to $\alpha$ as $n\to\infty$ by a similar argument used in \ref{enum:test_M1_level}. 
	The second term converges to zero by the definition of $\tilde{x}_\ast$.%
	Therefore, it remains to show that the last term converges to zero.
	To show the convergence, we note that
	\begin{align*}
		\prob{\hat{x}_{\ast,n}\notin \family{X}_\ast\bigg| H_0}
		&=\prob{T(\hat{x}_{\ast,n})<\max_{y\in\family{X}}T(y)\bigg| H_0}\\
		&=\prob{\max_{y\in\family{X}}T(y)-T(\hat{x}_{\ast,n})\geq
			\min_{z\notin\family{X}_\ast}\big\vert\max_{x\in\family{X}}T(x)-T(z)\big\vert
			\bigg| H_0}.
	\end{align*}
	Then, it suffices to show that for all $\epsilon>0$,
	\[
	\mathbb{P}_{H_0}\left(
	\max_{y\in\family{X}}T(y)-T(\hat{x}_{\ast,n})\geq\epsilon
	\right)
	:=\prob{\max_{y\in\family{X}}T(y)-T(\hat{x}_{\ast,n})\geq\epsilon\bigg| H_0}
	\to0\quad\text{as }n\to\infty.
	\]
	We proceed with the following inequalities:
	\begin{align*}
		&\mathbb{P}_{H_0}\left(
		\max_{y\in\family{X}}T(y)-T(\hat{x}_{\ast,n})\geq\epsilon
		\right)\\
		&\leq
		\mathbb{P}_{H_0}\left(
		\big\vert
		\max_{y\in\family{X}}T(y)-Q_{1-\gamma,n}^\star(\hat{x}_{\ast,n})
		\big\vert
		\geq\epsilon/2
		\right)
		+
		\mathbb{P}_{H_0}\left(
		\big\vert
		Q_{1-\gamma,n}^\star(\hat{x}_{\ast,n})-T(\hat{x}_{\ast,n})
		\big\vert
		\geq\epsilon/2
		\right)\\
		&\leq
		\mathbb{P}_{H_0}\left(
		\big\vert
		\max_{y\in\family{X}}T(y)-\max_{x\in\family{X}}Q_{1-\gamma,n}^\star(x)
		\big\vert
		\geq\epsilon/2
		\right)
		+
		\mathbb{P}_{H_0}\left(
		\max_{x\in\family{X}}
		\big\vert
		Q_{1-\gamma,n}^\star(x)-T(x)
		\big\vert
		\geq\epsilon/2
		\right),
	\end{align*}
	where the last inequality above follows from the definition of $\hat{x}_{\ast,n}$.
	Since the maximum function is continuous and $\family{X}$ is finite, it is sufficient to show that, 
	for each $x\in\family{X}$, for all $\epsilon>0$:
	\[
	\mathbb{P}\left(
	\big\vert
	Q_{1-\gamma,n}^\star(x)-T(x)
	\big\vert
	>\epsilon/2
	\right)
	\to0\quad\text{as }n\to\infty.
	\]
	Recall that $\hat{f}_{nb_0^x}(\cdot|x)$ is the 
	smoothed Grenander estimator
	used to generate the bootstrap sample.
	Denote by
	\[
	\tilde{T}_n(x)=\hat{f}_{nb_0^x}(Y_{x(n_x)}|x)-\frac{\epsilon\hat{F}_{n}(Y_{x(n_x)}|x)}{\tau-Y_{x(n_x)}},
	\]
	we have that
	\[
	\mathbb{P}\left(
	\big\vert
	Q_{1-\gamma,n}^\star(x)-T(x)
	\big\vert
	>\epsilon/2
	\right)
	\leq
	\mathbb{P}\left(
	\big\vert
	Q_{1-\gamma,n}^\star(x)-\tilde{T}_n(x)
	\big\vert
	>\epsilon/4
	\right)
	+
	\mathbb{P}\left(
	\big\vert
	\tilde{T}_n(x)-T(x)
	\big\vert
	>\epsilon/4
	\right).
	\]
	Here, the second term on the right-hand side of the above inequality converges to zero by the consistency of $\tilde{T}_n(x)$, while the first term converges to zero by the bootstrap consistency.
	Specifically, let $Q_n^{x,\star}(\alpha)$ be the $\alpha$-quantile of $n^{(1-\kappa)/2}\lbrace\hat{T}_n^\star(x)-\tilde{T}_n(x)\rbrace$. Denote the $\alpha$-quantile of the normal distribution $N(\mu, \sigma^2)$ by $\Phi_{\mu,\sigma}^{-1}(\alpha)$. We have, by Theorem~\ref{thm:sg_discrete_normality} and the bootstrap consistency, that
	\[
	Q_n^{x,\star}(\alpha)\converge[\mathbb{P}]\Phi_{\mu,\sigma}^{-1}(\alpha)\in(-\infty,\infty),
	\]
	where $\mu=0$ and $\sigma^2=\rho_x^{-(1-\kappa)}\sigma_x^2$.
	Thus $Q_{1-\gamma,n}^\star(x)-\tilde{T}_n(x)=n^{-(1-\kappa)/2}Q_n^{x,\star}(1-\gamma)\converge[\mathbb{P}]0$ as $n\to\infty$.
	Hence, the proof is completed.
\end{proof}

\begin{prop}
	\label{prop:DKW_ineq_disc_x}
	Suppose that Assumptions~\ref*{assumption:discrete_cov}--\ref*{assumption:censoring_discrete_cov} hold.
	Then, for all $\epsilon>0$, we have for all $x\in\family{X}$
	\begin{equation}
		\label{eq:DKW_ineq_disc_x}
		\prob{\sup_{t\leq\uptau_G(x)}\left\vert
			\hat{H}_n(t\vert x) - H(t\vert x)
			\right\vert > \epsilon
		} \leq 
		2\exp
		\left(
		-\frac{2n\underline{\rho}\epsilon^2}{1+2\epsilon^2}
		\right),
	\end{equation}
	where $\underline{\rho}=\min_{x\in\family{X}}\rho_x$. The result also holds for the estimator $\hat{H}_{k,n}(t\vert x)$ of $H_k(t\vert x)$, $k=0,1$.
\end{prop}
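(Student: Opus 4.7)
The plan is to condition on the covariate sample so that $n_x$ becomes deterministic, apply the classical Dvoretzky--Kiefer--Wolfowitz--Massart inequality to the conditional subsample, and then integrate out the randomness of $n_x \sim \mathrm{Binomial}(n, \rho_x)$.

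First, conditional on $n_x = k$, the observations $\cbracket{Y_{xi} : i = 1, \ldots, k}$ are i.i.d.\ from $H(\cdot|x)$, so Massart's tight form of the DKW inequality yields
\[
\prob{\sup_{t \leq \uptau_G(x)} |\hat{H}_n(t|x) - H(t|x)| > \epsilon \,\middle|\, n_x = k} \leq 2 e^{-2k\epsilon^2}
\]
for every $k \geq 1$; for $k = 0$ the estimator is set to zero by convention and the bound $2 e^0 = 2$ dominates trivially. Taking expectations with respect to $n_x$ and recognising the binomial moment generating function at $-2\epsilon^2$ then gives
\[
\prob{\sup_{t \leq \uptau_G(x)} |\hat{H}_n(t|x) - H(t|x)| > \epsilon} \leq 2 \expect{e^{-2 n_x \epsilon^2}} = 2 \rbracket{1 - \rho_x(1 - e^{-2\epsilon^2})}^n.
\]

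The remaining step is to reshape this binomial expression into the exponential bound in \eqref{eq:DKW_ineq_disc_x}. Since $\rho_x \geq \underline{\rho}$ and $1 - e^{-2\epsilon^2} > 0$, monotonicity in $\rho_x$ gives $1 - \rho_x(1 - e^{-2\epsilon^2}) \leq 1 - \underline{\rho}(1 - e^{-2\epsilon^2})$, and then $\log(1 - u) \leq -u$ for $u \in [0,1)$ yields
\[
2 \rbracket{1 - \underline{\rho}(1 - e^{-2\epsilon^2})}^n \leq 2 \exp\cbracket{-n \underline{\rho}(1 - e^{-2\epsilon^2})}.
\]
The key elementary step is the sharp inequality $1 - e^{-a} \geq a/(1+a)$ for $a \geq 0$, equivalent to $(1+a) e^{-a} \leq 1$ (verified by noting that its derivative $-a e^{-a}$ is non-positive, so $(1+a) e^{-a}$ is decreasing from its value $1$ at $a = 0$). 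Applied with $a = 2\epsilon^2$, this produces the exponent $2n\underline{\rho}\epsilon^2/(1 + 2\epsilon^2)$ and closes the argument.

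The same chain of reasoning applies to $\hat{H}_{k,n}(\cdot|x)$ for $k = 0, 1$: conditionally on $n_x$, it is the empirical process indexed by the class $\cbracket{\indicator{Y_{xi} \leq t, \Delta_{xi} = k} : t \in \mathbb{R}}$, which is a VC subclass of the one-sided intervals, so Massart's bound still applies with constant $2$. I do not anticipate a serious obstacle here; the conditioning argument together with the binomial moment generating function computation is routine, and the only non-cosmetic inequality is $1 - e^{-a} \geq a/(1+a)$, which is precisely what lets the random subsample size $n_x$ be traded for the deterministic rate $n\underline{\rho}$ inside the exponent.
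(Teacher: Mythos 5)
Your proposal is correct and follows essentially the same route as the paper's proof: condition on the covariates, apply the DKW inequality to the subsample of size $n_x$, compute the binomial moment generating function $\expect{e^{-2n_x\epsilon^2}} = (1-\rho_x+\rho_x e^{-2\epsilon^2})^n$, and finish with the elementary inequality $1-e^{-a}\geq a/(1+a)$ together with $\rho_x\geq\underline{\rho}$. The only differences are cosmetic (the order in which $\underline{\rho}\leq\rho_x$ is invoked, and your more explicit treatment of the $n_x=0$ case and of the censoring-indicator variant).
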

\begin{proof}[Proof of Proposition~\ref{prop:DKW_ineq_disc_x}]
	Recall that $\hat{H}_n(t\vert x)=\sum_{i=1}^{n}\indicator{Y_i\leq t}\indicator{X_i = x}/n_x$. 
	Let $\mat{X}_1^n=(X_1,\\\cdots,X_n)$. 
	Using the DKW inequality \citep{DKW1956} for the subsample $\lbrace{(Y_{xi},\Delta_{xi}), i=1,\dots,n_x}\rbrace$, conditionally on $\mat{X}_1^n$, 
	we have for
	$\epsilon>0$
	\[
	\prob{\sup_{t\leq\uptau_G(x)}\left\vert
		\hat{H}_n(t\vert x) - H(t\vert x)
		\right\vert > \epsilon
		\middle\vert\mat{X}_1^n
	}
	\leq
	2\exp\left(
	-2n_x\epsilon^2
	\right).
	\]
	Since $n_x\sim B(n,\rho_x)$, we have
	\[
	\begin{split}
		\prob{\sup_{t\leq\uptau_G(x)}\left\vert
			\hat{H}_n(t\vert x) - H(t\vert x)
			\right\vert > \epsilon
		}
		&\leq
		\expect[\sbracket]{2\exp\left(
			-2n_x\epsilon^2
			\right)}\\
		&=
		2\left(
		1-\rho_x+\rho_x e^{-2\epsilon^2}
		\right)^n\\
		&\leq
		2\exp\lbrace
		-n\rho_x(1-e^{-2\epsilon^2})
		\rbrace\\
		&\leq
		2\exp\left(
		-\frac{2n\underline{\rho}\epsilon^2}{1+2\epsilon^2}
		\right).\\
	\end{split}
	\]
	Here, the last inequality holds since ${x}/{(x+1)} < 1-e^{-x}$ for $x>-1$ and $\underline{\rho}\leq\rho_x$. We can use the same argument the estimator $\hat{H}_{k,n}(t\vert x)$ of $H_k(t\vert x)$, $k=0,1$.
\end{proof}

\begin{lemma}
	\label{lemma:rate_unif_strong_consistency_disc_x}
	Suppose that Assumptions~\ref*{assumption:discrete_cov}--\ref*{assumption:censoring_discrete_cov} hold.
	%	with a finite support $\family{X}$. Let $\rho_x=\prob{X=x}$ be the probability mass function of the covariate $X$ satisfying Assumption (A?)\todo{Assumptions on discrete cov}. 
	%	Let $\overline{\uptau}\geq0$ such that $\sup_xH(\overline{\uptau}|x)<1$.
	%	Suppose Assumption (A?) on the censoring distribution that has a positive mass at $\uptau_G(x)$ is satisfied.
	Then, as $n\converge\infty$,
	\begin{equation}
		\label{eq:rate_unif_strong_consistency_disc_x}
		\max_{x\in\family{X}}\sup_{t\leq\uptau_G(x)}\left\vert
		\hat{H}_n(t\vert x) - H(t\vert x)
		\right\vert = O\left(\sqrt{\log{n}/n}\right),\quad\text{a.s.}
	\end{equation}
	The result also holds for the estimator $\hat{H}_{k,n}(t\vert x)$ of $H_k(t\vert x)$, $k=0,1$.
\end{lemma}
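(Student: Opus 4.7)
The plan is a standard Borel--Cantelli argument built on the exponential tail bound of Proposition~\ref{prop:DKW_ineq_disc_x}. I first fix a constant $M>0$ (to be chosen below) and set $\epsilon_n = M\sqrt{\log n / n}$. Applying Proposition~\ref{prop:DKW_ineq_disc_x} with $\epsilon = \epsilon_n$ yields, for each $x\in\family{X}$,
\[
\prob{\sup_{t\leq\uptau_G(x)}\left\vert \hat{H}_n(t\vert x) - H(t\vert x)\right\vert > \epsilon_n}
\leq 2\exp\!\left(-\frac{2n\underline{\rho}\epsilon_n^2}{1+2\epsilon_n^2}\right)
=2\exp\!\left(-\frac{2\underline{\rho} M^2\log n}{1+2M^2\log n/n}\right).
\]
Since $2M^2\log n /n\to 0$, for every $\delta\in(0,1)$ there exists $n_0$ such that for $n\geq n_0$ the exponent is bounded above by $-2\underline{\rho} M^2(1-\delta)\log n$, giving the bound $2n^{-2\underline{\rho} M^2(1-\delta)}$.

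Next I union bound over $x\in\family{X}$. By Assumption~\ref*{assumption:discrete_cov} the set $\family{X}$ is finite with cardinality $d=|\family{X}|$, so for $n\geq n_0$,
\[
\prob{\max_{x\in\family{X}}\sup_{t\leq\uptau_G(x)}\left\vert \hat{H}_n(t\vert x) - H(t\vert x)\right\vert > \epsilon_n}
\leq 2d\, n^{-2\underline{\rho} M^2(1-\delta)}.
\]
Choosing $M$ large enough so that $2\underline{\rho} M^2(1-\delta) > 1$ (for instance $M^2 > 1/\underline{\rho}$ with $\delta$ sufficiently small) makes the right-hand side summable in $n$. The Borel--Cantelli lemma then gives
\[
\max_{x\in\family{X}}\sup_{t\leq\uptau_G(x)}\left\vert \hat{H}_n(t\vert x) - H(t\vert x)\right\vert \leq M\sqrt{\log n / n} \quad\text{eventually, a.s.},
\]
which is exactly the claim \eqref{eq:rate_unif_strong_consistency_disc_x}.

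Finally, the identical argument applies to $\hat{H}_{k,n}(\cdot|x)$ for $k=0,1$ since Proposition~\ref{prop:DKW_ineq_disc_x} is stated for these estimators as well. There is no real obstacle here: the only point requiring minor care is verifying that $M$ can be chosen large enough so that the union bound over the finite set $\family{X}$ remains summable, which is immediate from $|\family{X}|<\infty$ and $\underline{\rho}>0$.
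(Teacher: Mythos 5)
Your proof is correct and follows essentially the same route as the paper: both apply Proposition~\ref{prop:DKW_ineq_disc_x} with $\epsilon$ of order $\sqrt{\log n/n}$, absorb the vanishing term in the denominator of the exponent (your $(1-\delta)$ factor versus the paper's bound $1+2K_1^2\log n/n<2$), union bound over the finite set $\family{X}$, choose the constant large enough to make the tail probabilities summable, and conclude by Borel--Cantelli. No gaps.
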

\begin{proof}[Proof of Lemma~\ref{lemma:rate_unif_strong_consistency_disc_x}]
	We apply the Borel--Cantelli lemma to show \eqref{eq:rate_unif_strong_consistency_disc_x}. Therefore, it suffices to show that, for some constant $K_1>0$,
	\begin{equation}
		\label{eq:rate_unif_strong_consistency_disc_x_series}
		\sum_{n=1}^\infty
		\prob{\max_{x\in\family{X}}\sup_{t\leq\uptau_G(x)}\left\vert
			\hat{H}_n(t\vert x) - H(t\vert x)
			\right\vert > K_1\sqrt{\log{n}/n}
		}<\infty.
	\end{equation}
	Using Proposition~\ref{prop:DKW_ineq_disc_x} and set $\epsilon=K_1\sqrt{\log{n}/n}$, the summand of the above series is bound above by
	\[
	a_n:=%
	2\vert\family{X}\vert
	\exp\left(
	-\frac{2\underline{\rho}K_1^2\log{n}}{1+2K_1^2m_n}
	\right),
	\]
	where $m_n=\log{n}/n$. Since $2K_1^2m_n\converge 0$ as $n\converge\infty$, there exists an $N_0\in\naturalnumber$ such that $1+2K_1^2m_n < 2$ whenever $n>N_0$. Therefore, we have, for $n>N_0$
	\[
	\begin{split}
		a_n
		&\leq
		2\vert\family{X}\vert
		\exp\left(-\underline{\rho}K_1^2\log{n}\right)\\
		&\leq
		2\vert\family{X}\vert\exp\left(-\underline{\rho}K_1^2\log{n}\right)
		=2\vert\family{X}\vert n^{-\underline{\rho}K_1^2}.
	\end{split}
	\]
	If we set $K_1>0$ such that $\underline{\rho}K_1^2\geq2$ (or equivalently $K_1\geq\sqrt{2/\underline{\rho}}$), then
	\[
	a_n\leq2\vert\family{X}\vert n^{-\underline{\rho}K_1^2}\leq2\vert\family{X}\vert n^{-2}\quad\text{for~}n>N_0.
	\]
	Thus the series in \eqref{eq:rate_unif_strong_consistency_disc_x_series} is convergent, since $\sum_{n=1}^\infty n^{-2}<\infty$.
	By using the same argument we can show that \eqref{eq:rate_unif_strong_consistency_disc_x} also holds for the estimator $\hat{H}_{k,n}(t\vert x)$ of $H_k(t\vert x)$, $k=0,1$.
\end{proof}

From Assumptions~\ref*{assumption:uncured_survival_discrete_cov}--\ref*{assumption:censoring_discrete_cov}, we assume that there is a jump point for the conditional censoring distribution $G(t|x)$ at $\uptau_G(x)$ for all $x\in\family{X}$. We define the continuous part of $H(t|x)$, 
denoted by $H_c(t|x)$, as its left limit at the jump point $\uptau_G(x)$ and $H(t|x)$ otherwise, and
% $H_c(t|x)=H(t|x)-\sum_{s\leq t}\Delta H(s|x)$, where $\Delta H(s|x)=H(s|x)-H(s-|x)$ 
its nonparametric estimator $\hat{H}_{c,n}(t|x)=\sum_{i=1}^{n_x}\indicator{Y_{xi}\leq t, Y_{xi}\neq \uptau_G(x)}/{n_x}$. 
Therefore, under Assumptions~\ref*{assumption:uncured_survival_discrete_cov}--\ref*{assumption:censoring_discrete_cov}, we have, for each $x\in\family{X}$, $|H(t_1|x)-H(t_2|x)|\leq C_x|t_1-t_2|$, for all $t_1$, $t_2\in(0, \uptau_G(x))$, for some finite constant $C_x>0$. Then we define $L_1(t)=C_xt$, which is continuous and non-decreasing.
Also, we define similarly for the continuous part of $H_{k,c}(t|x)$
and
$\hat{H}_{k,c,n}(t|x)=\sum_{i=1}^{n_x}\indicator{Y_{xi}\leq t, \Delta_i=k,Y_{xi}\neq \uptau_G(x)}/{n_x}$ for $k=0,1$.
With the above defined continuous parts, the next lemma is established, which will be applied for showing Lemma~\ref{lemma:strong_rep_remainder_disc_x} eventually.
\begin{lemma}
	\label{lemma:modulus_of_continuity_disc_x}
	Suppose that Assumptions~\ref*{assumption:discrete_cov}--\ref*{assumption:censoring_discrete_cov} hold.
	%	Consider discrete covariate $X$ with a finite support $\family{X}$. Let $\rho_x=\prob{X=x}$ be the probability mass function of the covariate $X$ satisfying Assumption (A?)\todo{Assumptions on discrete cov}. 
	%	Let $\overline{\uptau}\geq0$ such that $\sup_xH(\overline{\uptau}|x)<1$.
	%	Suppose Assumption (A?) on the censoring distribution that has a positive mass at $\uptau_G(x)$ is satisfied.
	Let $a_n$ be a sequence of positive numbers satisfying $a_n\converge0$ as $n\converge\infty$ and $a_n(n/\log{n})>\epsilon>0$, for all $n$ sufficiently large. 
	%	Assume $H_c(t|x)$ satisfies Assumption (A?),\todo{Assumption on the continuous part of $H(t|x)$} 
	Then as $n\converge\infty$,
	\begin{equation}
		\label{eq:modulus_of_continuity_disc_x}
		\begin{split}
			\max_{x\in\family{X}}
			&\sup_{s,t\leq\uptau_G(x)}
			\sup_{\vert L_1(t)-L_1(s)\vert\leq a_n}\left\vert
			\hat{H}_{c,n}(t|x)-H_c(t|x)-\hat{H}_{c,n}(s|x)+H_c(s|x)
			\right\vert\\%
			&=O\left(\sqrt{a_n\log{n}/n}\right)\quad\text{a.s.},
		\end{split}
	\end{equation}
	where $L_1$ is the function defined above.
	The result also holds for the estimator $\hat{H}_{k,c,n}(t\vert x)$ of $H_{k,c}(t\vert x)$, $k=0,1$.
\end{lemma}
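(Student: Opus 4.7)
The plan is to reduce this two-parameter modulus-of-continuity result to a Bernstein-plus-discretization argument applied conditionally on the covariate vector. First I would condition on $\mat{X}_1^n=(X_1,\dots,X_n)$, so that the subsample $\{(Y_{xi},\Delta_{xi}):i=1,\dots,n_x\}$ is i.i.d.\ with distribution $H(\cdot|x)$. Under Assumptions~\ref*{assumption:uncured_survival_discrete_cov}--\ref*{assumption:censoring_discrete_cov} the continuous part $H_c(\cdot|x)$ is Lipschitz with constant $C_x$, so $|L_1(t)-L_1(s)|\le a_n$ implies $|H_c(t|x)-H_c(s|x)|\le a_n$, and hence the conditional variance of the increment $\hat{H}_{c,n}(t|x)-\hat{H}_{c,n}(s|x)$ is at most $a_n/n_x$. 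This variance bound is the key input driving the rate.

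Next, I would discretize $[0,\uptau_G(x)]$ with a grid $0=t_0<t_1<\cdots<t_{N_n}=\uptau_G(x)$ satisfying $L_1(t_{j+1})-L_1(t_j)\le a_n/2$, so that $N_n=O(1/a_n)$. Using monotonicity of $\hat{H}_{c,n}(\cdot|x)$ and $H_c(\cdot|x)$, the supremum over pairs $(s,t)$ with $|L_1(t)-L_1(s)|\le a_n$ reduces, up to an additive error of order $a_n$ absorbed into the claimed rate, to the maximum over the $O(N_n^2)$ grid-point differences
\[
D_{ij}:=\hat{H}_{c,n}(t_j|x)-\hat{H}_{c,n}(t_i|x)-H_c(t_j|x)+H_c(t_i|x),\qquad L_1(t_j)-L_1(t_i)\le 2a_n.
\]
Applying Bernstein's inequality to $n_xD_{ij}$ (a sum of $n_x$ centered bounded summands of total variance at most $a_n n_x$) yields, for any $\lambda>0$ and some constant $K$,
\[
\prob{|D_{ij}|>K\sqrt{a_n\log n/n}\,\middle\vert\,\mat{X}_1^n}\le 2n^{-\lambda}
\]
on the event $\{n_x\ge\rho_x n/2\}$. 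A union bound over the $O(1/a_n^2)\le O((n/\epsilon)^2)$ grid pairs and over the finite set $\family{X}$, combined with Borel--Cantelli for $\lambda$ sufficiently large, then yields the claimed a.s.\ rate.

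The main obstacle is handling the randomness of the subsample size $n_x$, since the conditional Bernstein step naturally gives a rate in terms of $n_x$ rather than $n$. Because $n_x\sim\mathrm{Binomial}(n,\rho_x)$ with $\rho_x\in(0,1)$, the strong law of large numbers gives $n_x/n\to\rho_x$ a.s., so $\log n_x\sim\log n$ and $\sqrt{a_n\log n_x/n_x}=O(\sqrt{a_n\log n/n})$ a.s.\ uniformly over the finite set $\family{X}$. A complementary Hoeffding bound on the binomial together with Borel--Cantelli ensures that $\{n_x\ge\rho_x n/2\}$ holds eventually a.s., legitimizing the conditional step. The extension to $\hat{H}_{k,c,n}$ for $k=0,1$ is identical: its continuous part $H_{k,c}(\cdot|x)$ is dominated by $H_c(\cdot|x)$ in total variation and is therefore also Lipschitz, so the variance bound $a_n/n_x$ persists and the same discretization-plus-Bernstein argument applies.
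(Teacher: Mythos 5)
There is a genuine gap in your discretization step. You place a grid with mesh $L_1(t_{j+1})-L_1(t_j)\le a_n/2$, so that after sandwiching via monotonicity the deterministic remainder is of order $\max_j\vert H_c(t_{j+1}|x)-H_c(t_j|x)\vert=O(a_n)$, and you assert this is ``absorbed into the claimed rate.'' It is not: the target rate is $\sqrt{a_n\log n/n}$, and $a_n=O(\sqrt{a_n\log n/n})$ would require $a_n\le C\log n/n$, whereas the hypothesis $a_n(n/\log n)>\epsilon$ is a \emph{lower} bound on $a_n$. Indeed $a_n\big/\sqrt{a_n\log n/n}=\sqrt{a_n n/\log n}\ge\sqrt{\epsilon}$ and typically diverges (e.g.\ $a_n=n^{-1/2}$ is admissible but then $a_n\gg n^{-3/4}\sqrt{\log n}$). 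So with your coarse grid the deterministic discretization error dominates and the bound fails. The paper's proof avoids exactly this by a two-level partition: each coarse cell of $L_1$-length $\approx a_n$ is subdivided into $b_n^x=O(\sqrt{a_n n/\log n})$ sub-cells, so the fine mesh in the $L_1$ scale is $\bar a_n^x/b_n^x=O(\sqrt{a_n\log n/n})$, which makes the monotonicity remainder of the correct order, while the total number of grid pairs entering the union bound is still only $(\overline m_n-1)(2\overline b_n+1)^2=O(n/\log n)$, polynomial in $n$, so Bernstein plus Borel--Cantelli goes through. Your argument is repaired by replacing the mesh $a_n/2$ with a mesh of order $\sqrt{a_n\log n/n}$ (the count of relevant pairs, those within $L_1$-distance $2a_n$, is then $O(n/\log n)$, still harmless for the union bound).

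The remaining ingredients of your proposal are sound and essentially parallel the paper: the Lipschitz bound on $H_c(\cdot|x)$ giving conditional variance $O(a_n/n_x)$, conditional Bernstein, and a union bound over the finite $\family{X}$. Your treatment of the random subsample size --- restricting to the event $\{n_x\ge\rho_x n/2\}$, which holds eventually a.s.\ by Hoeffding and Borel--Cantelli --- is a legitimate alternative to the paper's route, which instead integrates the conditional Bernstein bound against the Binomial$(n,\rho_x)$ law of $n_x$ via its moment generating function; both yield the same unconditional exponential tail. Your observation that $H_{k,c}(\cdot|x)$ inherits the Lipschitz property from $H_c(\cdot|x)$ is also correct.
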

\begin{proof}[Proof of Lemma~\ref{lemma:modulus_of_continuity_disc_x}]
	While the proof is similar to that in \cite{DA2002} with a different estimator $\hat{H}_{c,n}(t|x)$, we provide a complete proof here for the sake of completeness.
	For each $x\in\family{X}$, we first divide $[0, \uptau_G(x)]$ into $m_n^x=\left[{L_1(\uptau_G(x))}/{a_n}\right]$ sub-intervals by $0=t_0^x<t_1^x<\cdots<t_{m_n}^x=\uptau_G(x)$, such that $L_1(t_{i+1}^x)-L_1(t_i^x)=\bar{a}_n^x:=L_1(\uptau_G(x))/m_n^x$. Define $I_{ni}^x:=\left[t_{i-1}^x,t_{i+1}^x\right]$ for each $i=1,\cdots,m_n^x-1$. 
	We further partition each sub-interval $I_{ni}^x$ into $[t_{ij}^x,t_{i,j+1}^x]$ for $j=-b_n^x,\cdots,b_n^x$, where $b_n^x=O(\sqrt{a_n n/\log{n}})$,
	%	We further divide each sub-interval $I_{ni}^x$ by $t_{ij}^x$ for $j=-b_n^x,\cdots,b_n^x$, where $b_n^x=O(\sqrt{a_n n/\log{n}})$, 
	such that $L_1(t_{i,j+1}^x)-L_1(t_{ij}^x)=\bar{a}_n^x/b_n^x$. Using the property of the integer part function, we can show that $a_n\leq\bar{a}_n^x\leq2a_n$ for large $n$. Therefore, for any $s,t\leq\uptau_G(x)$ satisfying $\vert L_1(t)-L_1(s)\vert\leq a_{n}$, there exist an interval $I_{ni}^x$ such that $s,t\in I_{ni}^x$. Since $\hat{H}_{c,n}(t|x)$ and $H_{c,n}(t|x)$ are non-decreasing, we have that
	\[
	\begin{split}
		\max_{x\in\family{X}}&\sup_{s,t\leq\uptau_G(x)}\sup_{\vert L_1(t)-L_1(s)\vert\leq a_n}\left\vert
		\hat{H}_{c,n}(t|x)-H_c(t|x)-\hat{H}_{c,n}(s|x)+H_c(s|x)
		\right\vert\\
		&\leq
		\max_{x\in\family{X}}\max_{1\leq i\leq m_n^x-1}\,\max_{-b_n^x\leq j,k\leq b_n^x}\left\vert
		\hat{H}_{c,n}(t_{ik}^x|x)-H_c(t_{ik}^x|x)-\hat{H}_{c,n}(t_{ij}^x|x)+H_c(t_{ij}^x|x)
		\right\vert\\
		&\quad
		+2\max_{x\in\family{X}}\max_{1\leq i\leq m_n^x-1}\,\max_{-b_n^x\leq j\leq b_n^x}\left\vert
		{H}_{c}(t_{i,j+1}^x|x)-{H}_{c}(t_{ij}^x|x)
		\right\vert\\
		&=(I)+(II).
	\end{split}
	\]
	For $(II)$, we have for each $x\in\family{X}$,  $|H_c(y_1|x)-H_c(y_2|x)|\leq|L_1(y_1)-L_1(y_2)|$ for all $y_1,y_2\leq\uptau_G(x)$ and for some finite constant $C_x$. Hence, we have
	\[
	(II)\leq2\max_{x\in\family{X}}\bar{a}_n^x/b_n^x=O\left(\sqrt{a_n\log{n}/n}\right).
	\]
	For $(I)$, we will apply the Borel--Cantelli lemma to show that $(I)=O(\sqrt{a_n\log{n}/n})$ almost surely. We first apply Bernstein's inequality to bound the tail probability of $(I)$. Note that
	
	\[
	\begin{split}
		&\hat{H}_{c,n}(t_{ik}^x|x)-H_c(t_{ik}^x|x)-\hat{H}_{c,n}(t_{ij}^x|x)+H_c(t_{ij}^x|x)\\
		&=\frac{1}{n_x}\sum_{l=1}^{n}\indicator{X_l=x}\left(
		\indicator{Y_l\leq t_{ik}^x}-\indicator{Y_l\leq t_{ij}^x}
		-H_c(t_{ik}^x|x)+H_c(t_{ij}^x|x)
		\right)=\sum_{l=1}^nV_{ijkl}^x,
	\end{split}
	\]
	where $V_{ijkl}^x:=\indicator{X_l=x}(
	\indicator{Y_l\leq t_{ik}^x}-\indicator{Y_l\leq t_{ij}^x}
	-H_c(t_{ik}^x|x)+H_c(t_{ij}^x|x))/{n_x}$ and $n_x=\sum_{l=1}^{n}\indicator{X_l=x}\sim B(n, \rho_x)$. Assuming $n_x>0$ and conditioning on $\mat{X}_1^n$, we have $$|V_{ijkl}^x|\leq\frac{1}{n_x}\indicator{X_l=x}(1+\sup_{s,t\leq\uptau_G(x)}\vert L_1(s)-L_1(t)\vert)\leq\frac{1}{n_x}(1+a_n)$$ for all $l$, $\mathbb{E}({V_{ijkl}^x|\mat{X}_1^n})=0$, and
	\[
	\begin{split}
		\sum_{l=1}^n\text{Var}\left(V_{ijkl}^x|\mat{X}_1^n\right)
		&=\sum_{l=1}^n\frac{\indicator{X_l=x}}{n_x^2}\left\lbrace
		H_c(t_{ij}^x|x)(1-H_c(t_{ij}^x|x)) + H_c(t_{ik}^x|x)(1-H_c(t_{ik}^x|x))\right.\\
		&\qquad\qquad\left.
		-2(H_c(t_{ij}^x\wedge t_{ik}^x|x)-H_c(t_{ij}^x|x)H_c(t_{ik}^x|x))\right\rbrace\\
		&\leq
		2\sup_{s,t\leq\uptau_G(x)}\vert L_1(s)-L_1(t)\vert/{n_x}\leq2a_n/n_x.
	\end{split}
	\]
	Using Bernstein's inequality, we have for $\lambda>0$,
	\[
	\prob{
		\left\vert
		\sum_{l=1}^{n}V_{ijkl}^x
		\right\vert>\lambda \middle\vert\mat{X}_1^n}
	\leq
	2\exp\left(
	-\frac{3\lambda^2n_x}{2\lambda(1+a_n)+12a_n}
	\right).
	\]
	Since the previous upper bound also holds when $n_x=0$, we have
	\[
	\begin{split}
		\prob{
			\left\vert
			\sum_{l=1}^{n}V_{ijkl}^x
			\right\vert>\lambda}
		&\leq
		\expect[\sbracket]{
			2\exp\left(
			-\frac{3\lambda^2n_x}{2\lambda(1+a_n)+12a_n}
			\right)
		}\\
		&=2\left(
		1-\rho_x+\rho_x e^{-3\lambda^2/(2\lambda(1+a_n)+12a_n)}
		\right)^n\\
		&\leq
		2\exp\left\lbrace
		-n\rho_x(1-e^{-3\lambda^2/(2\lambda(1+a_n)+12a_n)})
		\right\rbrace\\
		&\leq
		2\exp\left\lbrace
		-3n\underline{\rho}\lambda^2/(2\lambda(1+a_n)+12a_n+3\lambda^2)
		\right\rbrace,
	\end{split}
	\]
	where the last inequality holds since ${x}/{(x+1)} < 1-e^{-x}$ for $x>-1$ and $\underline{\rho}\leq\rho_x$.
	Let $\lambda_n=C\sqrt{a_n\log{n}/n}$ for some positive constant $C$. We have
	\[
	\begin{split}
		&\prob{
			\max_{x\in\family{X}}\max_{1\leq i\leq m_n^x-1}\,\max_{-b_n^x\leq j,k\leq b_n^x}\left\vert
			\hat{H}_{c,n}(t_{ik}^x|x)-H_c(t_{ik}^x|x)-\hat{H}_{c,n}(t_{ij}^x|x)+H_c(t_{ij}^x|x)
			\right\vert > \lambda_n}\\
		&\leq
		\sum_{x\in\family{X}}\sum_{i=1}^{m_n^x-1}\sum_{-b_n^x\leq j,k\leq b_n^x}
		\prob{
			\left\vert
			\sum_{l=1}^{n}V_{ijkl}^x
			\right\vert>\lambda_n}\\
		&\leq
		2|\family{X}|(\overline{m}_n-1)(2\overline{b}_n+1)^2\exp\left\lbrace
		-3n\underline{\rho}\lambda_n^2/(2\lambda_n(1+a_n)+12a_n+3\lambda_n^2)
		\right\rbrace
	\end{split}
	\]
	where $\overline{m}_n=\max_{x\in\family{X}}m_n^x$ and $\overline{b}_n=\max_{x\in\family{X}}b_n^x$.
	We complete the proof by using the Borel--Cantelli lemma, which suffices to show that
	\begin{equation}
		\label{eq:modulus_of_continuity_disc_x_series}
		\sum_{n=1}^\infty			2|\family{X}|(\overline{m}_n-1)(2\overline{b}_n+1)^2\exp\left\lbrace
		-3n\underline{\rho}\lambda_n^2/(2\lambda_n(1+a_n)+12a_n+3\lambda_n^2)
		\right\rbrace<\infty,
	\end{equation}
	for some $C>0$. We note that for $n$ sufficiently large,
	\[
	\frac{3n\underline{\rho}\lambda_n^2}{2\lambda_n(1+a_n)+12a_n+3\lambda_n^2}
	=\frac{3\underline{\rho}C^2\log{n}}{12+2C\sqrt{\log{n}/{na_n}}(1+a_n)+3C^2\log{n}/n}
	\geq
	\frac{3\underline{\rho}C^2\log{n}}{K_1},
	\]
	where $K_1$ is a positive constant. Also, we have that $(\overline{m}_n-1)(2\overline{b}_n+1)^2=O\left(n/\log{n}\right)$. Therefore, we can show that by choosing a suitable constant $C>0$ we have
	\[
	2|\family{X}|(\overline{m}_n-1)(2\overline{b}_n+1)^2\exp\left\lbrace
	-3n\underline{\rho}\lambda_n^2/(2\lambda_n(1+a_n)+12a_n+3\lambda_n^2)
	\right\rbrace
	\leq
	\frac{K_2}{n^2\log{n}},
	\]
	where $K_2$ is a positive constant. Since $\sum_{n=2}^\infty{1}/{(n^2\log{n})}$ is convergent, the series in \eqref{eq:modulus_of_continuity_disc_x_series} is also convergent.
	
	By using the same argument we can show that \eqref{eq:modulus_of_continuity_disc_x} also holds for the estimator $\hat{H}_{k,c,n}(t\vert x)$ of $H_{k,c}(t\vert x)$, $k=0,1$.
\end{proof}

\begin{lemma}
	\label{lemma:strong_rep_remainder_disc_x}
	Suppose that Assumptions~\ref*{assumption:discrete_cov}--\ref*{assumption:censoring_discrete_cov} hold.
	Then as $n\converge\infty$,
	\begin{equation}
		\label{eq:strong_rep_remainder_disc_x}
		\max_{x\in\family{X}}\sup_{t\leq\uptau_G(x)}\left\vert
		\int_{0}^{t}
		\frac{\hat{H}_n(s{-}|x)-H(s{-}|x)}{(1-H(s{-}|x))^2}
		\dd{(\hat{H}_{1,n}-H_1)(s|x)}
		\right\vert
		=O\left(
		\frac{\log{n}}{n}
		\right)^{\frac{3}{4}}\quad\text{a.s.}
	\end{equation}
\end{lemma}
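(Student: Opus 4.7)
The plan is to adapt the strategy used in \cite{DA2002} for the unconditional Kaplan--Meier remainder to the categorical covariate setting, combining the uniform-rate bound of Lemma~\ref{lemma:rate_unif_strong_consistency_disc_x} with the modulus-of-continuity estimate of Lemma~\ref{lemma:modulus_of_continuity_disc_x}. Define $\phi(s|x):=(\hat H_n(s{-}|x)-H(s{-}|x))/(1-H(s{-}|x))^2$. The first step is to note that under Assumption~\ref*{assumption:censoring_discrete_cov} the atom $\Delta G(\uptau_G(x))>0$ ensures $1-H(s{-}|x)\geq (1-F(\uptau_G(x)|x))\Delta G(\uptau_G(x))$ uniformly over $s\leq \uptau_G(x)$; combined with Lemma~\ref{lemma:rate_unif_strong_consistency_disc_x} and the finiteness of $\family{X}$, this gives
\[
\max_{x\in\family{X}}\sup_{s\leq\uptau_G(x)}|\phi(s|x)| = O\left(\sqrt{\log n/n}\right)\ \text{a.s.}
\]

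Fix $x\in\family{X}$ and partition $[0,\uptau_G(x)]$ into blocks $I_i^x=[s_i^x,s_{i+1}^x)$, $i=0,\ldots,N_n^x-1$, with $L_1(s_{i+1}^x)-L_1(s_i^x)\leq a_n$ for a deterministic sequence $a_n\downarrow 0$ (to be optimized) and $\uptau_G(x)$ as a grid point. Replace $\phi(\cdot|x)$ by the step function $\phi_n(\cdot|x)$ taking value $\phi(s_i^x|x)$ on $I_i^x$, and decompose
\[
\int_0^t \phi(s|x)\,d(\hat H_{1,n}-H_1)(s|x) = A_n(t|x) + R_n(t|x),
\]
where $A_n(t|x)=\sum_i \phi(s_i^x|x)\lbrace(\hat H_{1,n}-H_1)(s_{i+1}^x\wedge t|x)-(\hat H_{1,n}-H_1)(s_i^x|x)\rbrace$ and $R_n(t|x)=\int_0^t (\phi-\phi_n)(s|x)\,d(\hat H_{1,n}-H_1)(s|x)$. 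For $A_n$, Lemma~\ref{lemma:modulus_of_continuity_disc_x} applied to $\hat H_{1,n}$ gives each block increment of order $O(\sqrt{a_n\log n/n})$ a.s.; summing $O(1/a_n)$ blocks and using the uniform bound on $\phi$ yields $|A_n(t|x)|=O((\log n/n)/\sqrt{a_n})$ a.s.

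For $R_n$, on each block I split
\[
\phi(s|x)-\phi(s_i^x|x) = \frac{(\hat H_n-H)(s{-}|x)-(\hat H_n-H)(s_i^x{-}|x)}{(1-H(s{-}|x))^2} + (\hat H_n-H)(s_i^x{-}|x)\left\lbrace\frac{1}{(1-H(s{-}|x))^2}-\frac{1}{(1-H(s_i^x{-}|x))^2}\right\rbrace.
\]
The first piece is $O(\sqrt{a_n\log n/n})$ by Lemma~\ref{lemma:modulus_of_continuity_disc_x} applied to $\hat H_n$; the second is $O(\sqrt{\log n/n})\cdot O(a_n)$ by the Lipschitz continuity of $s\mapsto 1/(1-H(s{-}|x))^2$ on $[0,\uptau_G(x)]$ (since $H(s{-}|x)$ coincides with $H_c(s|x)$ there, the only atom of $H$ lying at $\uptau_G(x)$). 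As the total variation of $\hat H_{1,n}-H_1$ is bounded by $2$, integration gives $|R_n(t|x)|=O(\sqrt{a_n\log n/n})$. Balancing $A_n$ and $R_n$ by choosing $a_n=\sqrt{\log n/n}$ produces both terms at the rate $O((\log n/n)^{3/4})$ (and the constraint $a_n(n/\log n)\to\infty$ required by Lemma~\ref{lemma:modulus_of_continuity_disc_x} is satisfied), so that taking the maximum over the finite set $\family{X}$ preserves this almost sure bound.

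The main technical delicacy I expect is the careful bookkeeping near $\uptau_G(x)$: the factor $(1-H(s{-}|x))^{-2}$ would blow up without the atom assumption, and the jumps of $\hat H_n-H$ at $\uptau_G(x)$ must be absorbed by making $\uptau_G(x)$ itself an endpoint of the partition and treating the corresponding block separately. Because $H_1(\cdot|x)$ is continuous under Assumption~\ref*{assumption:uncured_survival_discrete_cov} (as $F$ admits a density), the integrator $d(\hat H_{1,n}-H_1)$ has no $H_1$-atom at $\uptau_G(x)$, and the empirical atom of $\hat H_{1,n}$ at such a point contributes at most $O(n^{-1}\sqrt{\log n/n})$, well below the target rate.
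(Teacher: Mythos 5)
Your proposal is correct and follows essentially the same route as the paper: the paper's proof of this lemma simply invokes Proposition~4.1 of \cite{DA2002}, substituting Lemma~\ref{lemma:rate_unif_strong_consistency_disc_x} for the uniform rate and Lemma~\ref{lemma:modulus_of_continuity_disc_x} for the modulus of continuity, with a partition of $[0,\uptau_G(x)]$ into blocks of $L_1$-mesh $\sqrt{\log n/n}$ — exactly the decomposition into $A_n$ and $R_n$ and the balancing at $a_n=\sqrt{\log n/n}$ that you carry out explicitly. The only cosmetic difference is that the paper forms one common refined grid from the union of the per-$x$ grids before taking the maximum over the finite set $\family{X}$, whereas you partition separately for each $x$; both are valid.
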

\begin{proof}[Proof of Lemma~\ref{lemma:strong_rep_remainder_disc_x}]
	To prove \eqref{eq:strong_rep_remainder_disc_x}, we follow the proof of Proposition~4.1 in \cite{DA2002} by using our Lemmas~\ref{lemma:rate_unif_strong_consistency_disc_x} and \ref{lemma:modulus_of_continuity_disc_x} instead of their Lemmas~4.2 and 4.3, respectively. 
	Also, for each $x\in\family{X}$, we first divide the interval $[0,\uptau_G(x)]$ into $k_n^{x}$ sub-intervals $(t_i^{x}, t_{i+1}^{x}]$, $i=0,\dots,k_n^{x}$ such that $L_1(t_{i+1}^x)-L_1(t_i^x)\leq\sqrt{\log{n}/n}$. We then consider a finer partition with partition points $t_0<t_1<\cdots<t_{k_n}$, which are the ordered points in $\lbrace{t_0^x,\dots,t_{k_n^x}^x}:{x\in\family{X}}\rbrace$. Therefore, $L(t_j)-L(t_{j-1})\leq\sqrt{\log{n}/n}$ and $H(t_j{-}|x)-H(t_{j-1}|x)\leq\sqrt{\log{n}/n}$ for all $j$ and $x\in\family{X}$, and $k_n=O(|\family{X}|\sqrt{n/\log{n}})$.
\end{proof}

\begin{lemma}
	\label{lemma:strong_rep_brownian_disc_x}
	Suppose that Assumptions~\ref*{assumption:discrete_cov}--\ref*{assumption:censoring_discrete_cov} hold.
	Define the stochastic process
	\[
	\xi_i(t|x)=\left(1-F(t|x)\right)\left\lbrace
	\frac{\indicator{Y_i\leq t, \Delta_{i}=1}}{1-H(Y_i{-}|x)}-\int_{0}^t\frac{\indicator{Y_i\geq s}}{(1-H(s{-}|x))^2}\dd{H_1(s|x)}
	\right\rbrace.
	\]
	Then, we have
	\begin{enumerate}[label={(\alph*)}]
		\item\label{enum:strong_rep_disc_x}
		\begin{equation}
			\label{eq:strong_rep_disc_x}
			\hat{F}_n(t|x)-F(t|x)=\frac{1}{n_x}\sum_{i=1}^n\indicator{X_i=x}\xi_i(t|x)+R_n(t|x),
		\end{equation}
		where $\max_{x\in\family{X}}\sup_{t\leq\uptau_G(x)}\left\vert R_n(t|x)\right\vert=O(
		\log{n}/{n}
		)^{\frac{3}{4}}$ a.s.;
		\item\label{enum:brownian_disc_x}
		for each $x\in\family{X}$, $\sqrt{n}\lbrace\hat{F}_n(t|x)-F(t|x)\rbrace$ converges weakly  in $D[0,\uptau_G(x)]$ to a mean zero Gaussian process $W(t|x)$ with covariance function, for $0\leq s,t \leq\uptau_G(x)$
		\[
		\expect[\sbracket]{
			W(s|x)W(t|x)
		}=\frac{(1-F(s|x))(1-F(t|x))}{\rho_x}\int_{0}^{s\wedge t}\frac{\dd{F}(u)}{(1-H(u{-}|x))(1-F(u|x))};
		\]
		\item\label{enum:iid_sum_gaussian_approx}
		for each $x\in\family{X}$,
		\[
		\sup_{t \leq \uptau_{G}(x)}\left\vert
		\frac{1}{n_x}\sum_{i=1}^n\indicator{X_i=x}\xi_i(t|x)-n^{-1/2}{W}(t|x)
		\right\vert
		=O\left(\frac{\log{n}}{n}\right)\quad\text{a.s.},
		\]
		where ${W}(t|x)$ is the mean zero Gaussian process defined in \ref{enum:brownian_disc_x}.
	\end{enumerate}
	
\end{lemma}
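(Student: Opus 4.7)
My plan is to establish the three parts sequentially, with (a) underpinning both (b) and (c). For part (a), I would take the product-integral representation of $\hat{F}_n(t|x)$ in its logarithmic form, $-\log(1-\hat{F}_n(t|x)) = \int_0^t \dd{\hat{H}_{1,n}(s|x)}/(1-\hat{H}_n(s{-}|x))$, and the analogue for $F$, subtract, and linearise using $(1-\hat{H}_n)^{-1} - (1-H)^{-1} = (\hat{H}_n-H)/(1-H)^2 + O((\hat{H}_n-H)^2)$ together with $\log(1-u)=-u+O(u^2)$. This yields
\[
\hat{F}_n(t|x)-F(t|x) = (1-F(t|x))\!\int_0^t \frac{\dd{(\hat{H}_{1,n}-H_1)(s|x)}}{1-H(s{-}|x)} + (1-F(t|x))\!\int_0^t \frac{\hat{H}_n(s{-}|x)-H(s{-}|x)}{(1-H(s{-}|x))^2}\,\dd{H_1(s|x)} + R_n(t|x),
\]
where the quadratic remainders from the two Taylor expansions are uniformly $O(\log n/n)$ a.s.\ by Lemma~\ref{lemma:rate_unif_strong_consistency_disc_x} and the cross term $\int_0^t (\hat{H}_n{-}H)/(1-H)^2\,\dd{(\hat{H}_{1,n}{-}H_1)}$ is exactly $O((\log n/n)^{3/4})$ a.s.\ by Lemma~\ref{lemma:strong_rep_remainder_disc_x}, giving the advertised rate. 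Rewriting both leading integrals as sums over the subsample reproduces the $\xi_i$ representation with normalisation $n_x^{-1}$.

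For part (b), I would verify that the class $\{\indicator{X_1=x}\xi_1(t|x): t\in[0,\uptau_G(x)]\}$ is $\mathbb{P}$-Donsker: the summands are uniformly bounded and the class consists of sums of indicators and integrals of indicators against the fixed, bounded measure $(1-H(s{-}|x))^{-2}\,\dd{H_1(s|x)}$, hence has polynomial uniform covering number. The functional CLT then yields weak convergence of $n^{-1/2}\sum_{i=1}^n \indicator{X_i=x}\xi_i(t|x)$ in $D[0,\uptau_G(x)]$. Using $\dd{H_1(u|x)}=(1-G(u{-}|x))\,\dd{F(u|x)}$ and $1-H(u{-}|x)=(1-F(u|x))(1-G(u{-}|x))$, a direct computation gives
\[
\mathbb{E}\bigl[\indicator{X_1=x}\xi_1(s|x)\xi_1(t|x)\bigr] = \rho_x(1-F(s|x))(1-F(t|x))\int_0^{s\wedge t} \frac{\dd{F(u|x)}}{(1-H(u{-}|x))(1-F(u|x))}.
\]
Writing $\sqrt{n}\{\hat{F}_n-F\} = (n/n_x)\cdot n^{-1/2}\sum_i \indicator{X_i=x}\xi_i + \sqrt{n}R_n$, the remainder is $o(1)$ a.s.\ uniformly by part (a), and $n/n_x\to 1/\rho_x$ a.s.\ converts the covariance above into the claimed $\rho_x^{-1}$-scaled expression.

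For part (c), the tool is a Hungarian-construction (Koml\'os--Major--Tusn\'ady) strong approximation, in its empirical-process form (Burke--Cs\"org\H{o}--Horv\'ath) for classes of functions with polynomial uniform entropy. Applied to the i.i.d.\ functions $\indicator{X_i=x}\xi_i(\cdot|x)$, this provides, on a suitable probability space, a Gaussian coupling $\tilde W(\cdot|x)$ with covariance matching that of $n^{-1/2}\sum_i\indicator{X_i=x}\xi_i(\cdot|x)$ such that $\sup_t |n^{-1/2}\sum_i \indicator{X_i=x}\xi_i(t|x)-\tilde W(t|x)| = O(n^{-1/2}\log n)$ a.s. Since $\tilde W \stackrel{d}{=} \rho_x W$ by comparison of covariances, converting the normalisation from $n^{-1}$ to $n_x^{-1}$ using $n/n_x - 1/\rho_x = O(\sqrt{\log\log n/n})$ a.s.\ by the LIL, and absorbing the (almost surely bounded) fluctuation of $W$, yields the claimed rate $O(\log n/n)$.

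The principal obstacle I anticipate is in part (c): parts (a) and (b) are essentially bookkeeping once the earlier lemmas are in hand, but producing the exact $O(\log n/n)$ uniform-in-$t$ coupling requires an empirical-process Hungarian embedding rather than the simpler one-dimensional KMT, and one must verify the entropy hypotheses of that theorem for the specific $\xi_i(\cdot|x)$ class and handle the discontinuity of $G(\cdot|x)$ at $\uptau_G(x)$. A secondary technical point, arising in both (b) and (c), is to check that replacing the deterministic normalisation $n^{-1}$ by the random $n_x^{-1}$ contributes only $O(\sqrt{\log\log n/n})$ a.s., which is absorbed by the target rates.
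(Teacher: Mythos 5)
Your parts (a) and (b) are sound. Part (a) is essentially the paper's argument: linearize the product-limit representation, identify the leading terms with $\xi_i$, bound the quadratic remainders by Lemma~\ref{lemma:rate_unif_strong_consistency_disc_x} and the cross term by Lemma~\ref{lemma:strong_rep_remainder_disc_x}, which is what sets the $(\log n/n)^{3/4}$ rate. For part (b) you take a genuinely different but valid route: the paper writes $n_x^{-1}\sum_i\indicator{X_i=x}\xi_i$ via integration by parts in terms of the two subprocesses $A_{1n},A_{2n}$ and follows the classical Breslow--Crowley construction, whereas you verify a Donsker property for the class $\{\indicator{X_1=x}\xi_1(t|x)\}$ and compute the covariance directly; both give the same limit, and your covariance bookkeeping (the factor $\rho_x$ from the indicator, converted to $\rho_x^{-1}$ by $n/n_x\to\rho_x^{-1}$) is correct.

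The genuine gap is in part (c). You invoke an \emph{unconditional} Hungarian-construction strong approximation for the function-indexed empirical process of $\indicator{X_i=x}\xi_i(\cdot|x)$ at the rate $O(n^{-1/2}\log n)$, citing the Burke--Cs\"org\H{o}--Horv\'ath framework. But this is precisely the ingredient whose absence the paper points out (``the lack of a strong approximation result for the conditional Kaplan--Meier estimator with categorical covariates''), and you do not supply a concrete theorem or verify its hypotheses; you yourself flag it as the principal obstacle. The paper circumvents it differently: conditionally on $\mat{X}_1^n$ the subsample $\{(Y_{xi},\Delta_{xi})\}$ is a genuine i.i.d.\ sample of size $n_x$, so the one-sample Major--Rejt\H{o}/KMT construction applies and yields a \emph{conditional} exponential tail bound for $\sup_t n_x\vert n_x^{-1}\sum_i\indicator{X_i=x}\xi_i(t|x)-n_x^{-1/2}\rho_x^{1/2}W(t|x)\vert$; this bound is then integrated against the Binomial$(n,\rho_x)$ law of $n_x$ and fed into Borel--Cantelli. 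Separately, the discrepancy between the normalizations $n_x^{-1/2}$ and $(n\rho_x)^{-1/2}$ is controlled by combining Hoeffding's inequality for $n_x$ with the Borell--TIS inequality for $\sup_t W(t|x)$ (whose a.s.\ finiteness must itself be verified via an entropy integral). Your LIL-plus-boundedness treatment of this second piece is fine in spirit, but without either (i) a verified unconditional function-indexed KMT at the stated rate, or (ii) the conditioning-and-deconditioning device, the main coupling step of part (c) is unsupported.
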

\begin{proof}[Proof of Lemma~\ref{lemma:strong_rep_brownian_disc_x}]
	For \ref{enum:strong_rep_disc_x}, we follow the proof of Theorem~3.2 in \cite{DA2002} by using Lemma~\ref{lemma:strong_rep_remainder_disc_x} instead of their Proposition~4.1 and using our Lemma~\ref{lemma:rate_unif_strong_consistency_disc_x} instead of their Lemma~4.2.
	
	For \ref{enum:brownian_disc_x}, from \eqref{eq:strong_rep_disc_x} we have that
	\begin{equation}
		\label{eq:strong_rep_disc_x_decomp}
		\begin{split}
			\frac{1}{n_x}\sum_{i=1}^n\indicator{X_i=x}\xi_i(t|x)
			&=
			\left(1+\frac{\rho_x-n_x/n}{n_x/n}\right)\frac{1}{n\rho_x}\sum_{i=1}^n\indicator{X_i=x}\xi_i(t|x)\\
			%			\frac{1}{n\rho_x}\sum_{i=1}^n\indicator{X_i=x}\xi_i(t|x)+
			%			\frac{\rho_x-n_x/n}{n_x\rho_x}\sum_{i=1}^n\indicator{X_i=x}\xi_i(t|x)\\
			&=\left(1+\frac{\rho_x-n_x/n}{n_x/n}\right)(I),
		\end{split}
	\end{equation}
	Denote by 
	\[A_{1n}(t|x)=\sum_{i=1}^n\indicator{X_i=x}\indicator{Y_i\leq t, \Delta_i=1}/n-\rho_xH_1(t|x)\]
	and
	\[A_{2n}(t|x)=\sum_{i=1}^n\indicator{X_i=x}\indicator{Y_i\geq t}/n-\rho_x(1-H(t{-}|x)),\]
	then using integration by parts, it can be shown that
	\[
	\begin{split}
		\sqrt{n}(I)=\frac{1-F(t|x)}{\rho_x}
		\biggl\lbrace
		&\frac{\sqrt{n}A_{1n}(t|x)}{1-H(t{-}|x)}-
		\int_{0}^t\frac{\sqrt{n}A_{1n}(s|x)}{(1-H(s{-}|x))^2}\dd{H(s{-}|x)}\\
		&\quad
		-\int_{0}^t\frac{\sqrt{n}A_{2n}(s|x)}{(1-H(s{-}|x))^2}\dd{H_1(s|x)}
		\biggl\rbrace.
	\end{split}
	\]
	Following the proof of Theorem~3 in \cite{BC1974}, it can be shown that for each $x\in\family{X}$, $\sqrt{n}(A_{1n}(t|x), \\A_{2n}(t|x))$ converges weakly in $D[0,\uptau_G(x)]\times D[0,\uptau_G(x)]$ to a mean zero bivariate Gaussian process $(W_1(t|x), W_2(t|x))$ with covariance function, for $0\leq s,t \leq\uptau_G(x)$
	\[
	\begin{split}
		&\expect[\sbracket]{W_1(s|x)W_1(t|x)}=\rho_xH_1([0, s\wedge t]|x)-\rho_x^2H_1([0, s]|x)H_1([0, t]|x),\\
		&\expect[\sbracket]{W_2(s|x)W_2(t|x)}=\rho_xH([s\vee t, \infty)|x)-\rho_x^2H([s, \infty)|x)H([t,\infty)|x),\\
		&\expect[\sbracket]{W_1(s|x)W_2(t|x)}=\rho_xH_1([t, s]|x)-\rho_x^2H_1([0,s]|x)H([t,\infty)|x),\\
		%			&\expect[\sbracket]{W_2(s|x)W_1(t|x)}=\rho_xH_1([s, t]|x)-\rho_x^2H_1((-\infty,t]|x)H([s,\infty)|x),\\
	\end{split}
	\]
	where we define $H_1(\emptyset|x)=0$ for the last term. Then following the construction of Theorems~4 and 5 in \cite{BC1974}, it can be shown that $\sqrt{n}(I)$ converges weakly in $D[0,\uptau_G(x)]$ to a mean zero Gaussian process $W(t|x)$ with covariance function, for $0\leq s,t \leq\uptau_G(x)$
	\[
	\expect[\sbracket]{W(s|x)W(t|x)}=
	\frac{(1-F(s|x))(1-F(t|x))}{\rho_x}\int_{0}^{s\wedge t}\frac{\dd{F(u|x)}}{(1-H(u{-}|x))(1-F(u|x))}.
	\]
	For the second term in \eqref{eq:strong_rep_disc_x_decomp}, since $(\rho_x-n_x/n)/(n_x/n)=O_P(n^{-1/2})$ uniformly in $t$ and together with the previous result, this term converges to zero in probability. Also the remainder term $\sqrt{n}R_n(t|x)=o_P(1)$ from the result in \eqref{eq:strong_rep_disc_x} and hence completes the proof of \ref{enum:brownian_disc_x}.
	
	For \ref{enum:iid_sum_gaussian_approx}, we have that
	\begin{align}
		\label{eq:iid_sum_gaussian_approx_split}
		\begin{split}
			&\sup_{t \leq \uptau_{G}(x)}\left\vert
			\frac{1}{n_x}\sum_{i=1}^n\indicator{X_i=x}\xi_i(t|x)-n^{-1/2}{W}(t|x)
			\right\vert\\
			&\quad\leq
			\sup_{t \leq \uptau_{G}(x)}\left\vert
			\frac{1}{n_x}\sum_{i=1}^n{\xi}_i(t|x)-n_x^{-1/2}\rho_x^{1/2}{W}(t|x)
			\right\vert+
			\rho_x^{1/2}\left\vert
			\frac{1}{\sqrt{n_x}}-\frac{1}{\sqrt{n\rho_x}}
			\right\vert\sup_{t \leq \uptau_{G}(x)}\left\vert{W}(t|x)\right\vert\\
			&=(I)+(II).
		\end{split}
	\end{align}
	We will apply the Borel--Cantelli lemma to show that both $(I)$ and $(II)$ are $O(\log{n}/n)$ almost surely in Parts I and II below, respectively. 
	\paragraph*{Part I}
	For $(I)$, using the construction in \cite{MR1988}, we can show that the i.i.d. sum can be approximated by a Gaussian process, conditionally on $\mat{X}_1^n=(X_1,\dots,X_n)$. 
	Denote by 
	$\tilde{A}_{1n}(t|x)=\sum_{i=1}^n\indicator{X_i=x}\indicator{Y_i\leq t, \Delta_i=1}/{n_x}-H_1(t|x)$ and $\tilde{A}_{2n}(t|x)=\sum_{i=1}^n\indicator{X_i=x}\indicator{Y_i\geq t}/{n_x}-(1-H(t{-}|x))$.
	It can be shown, using integration by parts and the continuity of $H(\cdot|x)$ on $(0,\uptau_G(x))$, that
	\[
	\begin{split}
		&\frac{1}{n_x}\sum_{i=1}^n\indicator{X_i=x}\xi_i(t|x)
		=
		(1-F(t|x))
		\biggl\lbrace
		\frac{\tilde{A}_{1n}(t|x)}{1-H(t{-}|x)}\\
		&\quad
		-\int_{0}^t\frac{\tilde{A}_{1n}(s|x)}{(1-H(s{-}|x))^2}\dd{H(s{-}|x)}
		-\int_{0}^t\frac{\tilde{A}_{2n}(s|x)}{(1-H(s{-}|x))^2}\dd{H_1(s|x)}
		\biggl\rbrace.
	\end{split}
	\]
	With the help of the Koml\'{o}s--Major--Tusn\'{a}dy approximation, it can be shown that, conditionally on $\mat{X}_1^n$, $\tilde{A}_{1n}(t|x)$ and $\tilde{A}_{2n}(t|x)$ can be approximated by $S(H_1(t|x))$ and $S(H_1(t|x))-S(1-H_0(t|x))$, respectively, where $S(u)$ is a Brownian bridge. Consequently, the i.i.d. sum can be approximated by a Gaussian process, conditionally on $\mat{X}_1^n$. Specifically, for $x\in\family{X}$, define $T_1^x,T_2^x:\real\rightarrow[0,1]$ by $T_1^x(y)=H_1(y|x)$ and $T_2^x(y)=1-H_0(y|x)$. Define the random variables $U_1^x,\dots,U_{n_x}^x$ based on the subsample $\left\lbrace(Y_{xi},\Delta_{xi}),i=1,\dots,n_x\right\rbrace$ as
	\[
	U_i^x=
	\begin{cases}
		T_1^x(Y_{xi}),& \text{if }\Delta_{xi}=1,\\
		T_2^x(Y_{xi}),& \text{if }\Delta_{xi}=0.
	\end{cases}
	\]
	Then, conditionally on $\mat{X}_1^n$, $U_1^x,\dots,U_{n_x}^x$ form a sequence of independent standard uniform random variables. Consequently, with the help of Theorem~3 of \cite{KMT1975} and following the construction in \cite{MR1988}, we have that
	\[
	\begin{split}
		&\mathbb{P}\left[
		\sup_{t \leq \uptau_{G}(x)}{n_x}
		\left\vert
		\frac{1}{n_x}\sum_{i=1}^n\indicator{X_i=x}\xi_i(t|x) - n_x^{-1/2}\rho_x^{1/2}W(t|x)
		\right\vert
		> K_{1}\log n_x + u
		\middle\vert\mat{X}_1^n
		\right]\\
		&\quad
		<K_2\exp(-K_3u)
	\end{split}
	\]
	for all $u$, where $K_1$, $K_2$ and $K_3$ are positive constants. Therefore, let $\varepsilon>0$, we have the following probability bound:
	\begin{equation*}
		\begin{split}
			&\mathbb{P}\left(
			\sup_{t \leq \uptau_{G}(x)}\left\vert
			\frac{1}{n_x}\sum_{i=1}^n{\xi}_i(t|x)-n_x^{-1/2}\rho_x^{1/2}{W}(t|x)
			\right\vert
			>\varepsilon
			\right)\\
			&\leq
			\expect[\sbracket]{
				K_2\exp\left\lbrace
				-K_3(\varepsilon n_x-K_1\log{n_x})
				\right\rbrace
			}\\
			&\leq
			\expect[\sbracket]{
				K_2\exp\left\lbrace
				-K_3(\varepsilon n_x-K_1\log{n})
				\right\rbrace
			}\\
			&=K_2n^{K_1K_3}\expect[\sbracket]{
				\exp\left(-\varepsilon K_3n_x\right)
			}\\
			%			&=K_2n^{K_1K_3}\left(
			%				1-\rho_x+\rho_x\exp(-\varepsilon K_3)
			%			\right)^n\\
			&\leq
			K_2n^{K_1K_3}\exp\left(
			-\frac{\varepsilon\rho_xK_3n}{1+\varepsilon K_3}
			\right),
		\end{split}
	\end{equation*}
	where the last inequality follows from a similar argument in the proof of Proposition~\ref{prop:DKW_ineq_disc_x}.
	Let $M>0$ and set $\varepsilon=M\log{n}/n$, we have
	\[
	\begin{split}
		&\mathbb{P}\left(
		\sup_{t \leq \uptau_{G}(x)}\left\vert
		\frac{1}{n_x}\sum_{i=1}^n{\xi}_i(t|x)-n_x^{-1/2}\rho_x^{1/2}{W}(t|x)
		\right\vert
		>M\frac{\log{n}}{n}
		\right)\\
		&\leq
		K_2n^{K_1K_3}\exp\left(
		-\frac{\rho_xMK_3\log{n}}{1+MK_3\log{n}/n}
		\right)=:a_n.
	\end{split}
	\]
	Since $\log{n}/n\converge0$ as $n\converge\infty$, there exists an $N_0\in\mathbb{N}$ such that, for $n\geq N_0$
	\[
	a_n
	\leq
	K_2n^{K_1K_3}\exp\left(
	-\frac{1}{2}\rho_xMK_3\log{n}
	\right)
	=K_2n^{K_3(K_1-\rho_xM/2)}.
	\]
	Thus, if $K_3(K_1-\rho_xM/2)<-1$ (or $M>2(K_1+1/K_3)/\rho_x$ equivalently), the series converges:
	\[
	\begin{split}
		&\sum_{n=1}^{\infty}
		\mathbb{P}\left(
		\sup_{t \leq \uptau_{G}(x)}\left\vert
		\frac{1}{n_x}\sum_{i=1}^n{\xi}_i(t|x)-n_x^{-1/2}\rho_x^{1/2}{W}(t|x)
		\right\vert
		>M\frac{\log{n}}{n}
		\right)\\
		&\leq 
		K_2\sum_{n=1}^{\infty}n^{K_3(K_1-\rho_xM/2)}<\infty.
	\end{split}
	\]
	Hence, by the Borel--Cantelli lemma, $(I)=O(\log{n}/{n})$ almost surely. 
	\paragraph*{Part II}
	For $(II)$ in \eqref{eq:iid_sum_gaussian_approx_split}, we note that
	\[
	\begin{split}
		&\rho_x^{1/2}\left\vert
		\frac{1}{\sqrt{n_x}}-\frac{1}{\sqrt{n\rho_x}}
		\right\vert\sup_{t \leq \uptau_{G}(x)}\left\vert{W}(t|x)\right\vert
		=\rho_x^{1/2}n^{-1/2}\left\vert
		\sqrt{{n}/{n_x}}-\rho_x^{-1/2}
		\right\vert\sup_{t \leq \uptau_{G}(x)}\left\vert{W}(t|x)\right\vert\\
		&\leq
		\frac{1}{2}\rho_x^{-1}n^{-1/2}\left(
		\left\vert
		{n_x}/{n}-\rho_x
		\right\vert
		+ O\left(
		\left\vert
		{n_x}/{n}-\rho_x
		\right\vert
		\right)
		\right)
		\sup_{t \leq \uptau_{G}(x)}\left\vert
		W(t|x)
		\right\vert.
	\end{split}
	\]
	Therefore, it suffices to show that $n^{-1/2}\left\vert{n_x}/{n}-\rho_x\right\vert\sup_{t\leq\uptau_{G}(x)}\left\vert W(t|x)\right\vert=O(\log{n}/n)$ almost surely. 
	By the Borell--TIS inequality (Theorem 2.1.1 of \cite{AT2007}), if the mean zero Gaussian process $W(t|x)$ is bounded almost surely on $[0,\uptau_{G}(x)]$, which will be claimed at the end of the proof,
	we have that%
	, for $u>\big\vert\mathbb{E}\lbrack
	\sup_{t\leq\uptau_{G}(x)}W(t|x)
	\rbrack\big\vert$,
	\begin{align}
		\label{eq:borell_ineq}
		\begin{split}
			&\prob{
				\left\vert
				\sup_{t\leq\uptau_{G}(x)}\left\vert W(t|x)\right\vert
				-\mathbb{E}\bigg({
					\sup_{t\leq\uptau_{G}(x)}W(t|x)
				}\bigg)\right\vert>u
			}\\
			&\le
			2\prob{
				\sup_{t\leq\uptau_{G}(x)}W(t|x)-
				\mathbb{E}\bigg({
					\sup_{t\leq\uptau_{G}(x)}W(t|x)
				}\bigg)
				>u
			}
			\le2\exp(-u^2/2\sigma_{\uptau_{G}(x)}^2),
		\end{split}
	\end{align}
	where $\sigma_{\uptau_{G}(x)}^2=\sup_{t \leq \uptau_{G}(x)}\expect{W^2(t|x)}$.
	Here the first inequality in the above display follows by symmetry and $\mathbb{E}\lbrack
	\sup_{t\leq\uptau_{G}(x)}W(t|x)
	\rbrack<\infty$.
	We note that, for our purpose, it is not necessary to consider subtracting $\mathbb{E}\big\lbrack
	\sup_{t\leq\uptau_{G}(x)}\big\vert W(t|x)\big\vert
	\big\rbrack$ in the above display, since $\mathbb{E}\lbrack
	\sup_{t\leq\uptau_{G}(x)}W(t|x)
	\rbrack$ is a constant and we will use the probability bound for sufficiently large $u$.
	Let $M>0$ and $C_1>2\sigma_{\uptau_{G}(x)}^2>0$, we have
	\begin{equation}
		\label{eq:n_x_prob_split}
		\begin{split}
			&\prob{
				n^{-1/2}\left\vert\frac{n_x}{n}-\rho_x\right\vert\sup_{t\leq\uptau_{G}(x)}\left\vert W(t|x)\right\vert
				>\frac{M\log{n}}{n}
			}\\
			&=
			\prob{
				\left\vert\frac{n_x}{n}-\rho_x\right\vert\sup_{t\leq\uptau_{G}(x)}\left\vert W(t|x)\right\vert
				>\frac{M\log{n}}{\sqrt n},\right.\\
				&\left.\qquad\qquad\qquad
				\left\vert
				\sup_{t\leq\uptau_{G}(x)}\left\vert W(t|x)\right\vert
				-\mathbb{E}\bigg({
					\sup_{t\leq\uptau_{G}(x)}W(t|x)
				}\bigg)
				\right\vert
				\le\sqrt{C_1\log{n}}
			}\\
			&\quad
			+
			\prob{
				\left\vert\frac{n_x}{n}-\rho_x\right\vert\sup_{t\leq\uptau_{G}(x)}\left\vert W(t|x)\right\vert
				>\frac{M\log{n}}{\sqrt n},\right.\\
				&\left.\qquad\qquad\qquad
				\left\vert
				\sup_{t\leq\uptau_{G}(x)}\left\vert W(t|x)\right\vert
				-\mathbb{E}\bigg({
					\sup_{t\leq\uptau_{G}(x)}W(t|x)
				}\bigg)
				\right\vert
				>\sqrt{C_1\log{n}}
			}
			=(A)+(B).
		\end{split}
	\end{equation}
	$(A)$ in \eqref{eq:n_x_prob_split} can be bounded using the reverse triangle inequality as follows:
	\[
	\begin{split}
		(A)
		&\le
		\prob{
			\left\vert\frac{n_x}{n}-\rho_x\right\vert
			\big\lbrace
			\big\vert\mathbb{E}\big({
				\sup W(t|x)
			}\big)\big\vert
			+\sqrt{C_1\log n}
			\big\rbrace
			>\frac{M\log{n}}{\sqrt n}
		}\\
		&\le
		\prob{
			\left\vert\frac{n_x}{n}-\rho_x\right\vert
			\cdot\big\vert
			\mathbb{E}\big({
				\sup W(t|x)
			}\big)\big\vert
			>\frac{M\log{n}}{2\sqrt n}
		}+
		\prob{
			\left\vert\frac{n_x}{n}-\rho_x\right\vert
			\sqrt{C_1\log n}
			>\frac{M\log{n}}{2\sqrt n}
		}.
	\end{split}
	\]
	Here and hereafter, we use $\sup$ instead of $\sup_{t\leq\uptau_{G}(x)}$ to simplify the notation.
	Both terms on the right-hand side of the last inequality in the above display can be bounded using Hoeffding's inequality, since $n_x=\sum_{i=1}^n\indicator{X_i=x}\sim B(n,\rho_x)$:
	\[
	\begin{split}
		\prob{
			\left\vert\frac{n_x}{n}-\rho_x\right\vert
			\cdot\big\vert
			\mathbb{E}\big({
				\sup W(t|x)
			}\big)\big\vert
			>\frac{M\log{n}}{2\sqrt n}
		}
		&\le
		2\exp\left(
		-\frac{(M\log n)^2}{2\vert
			\mathbb{E}\big(
			\sup W(t|x)
			\big)\vert^2
		}\right)\\
		&\le
		2n^{-M^2/2\vert
			\mathbb{E}(
			\sup W(t|x)
			)\vert^2},
	\end{split}
	\]
	where the last inequality holds since $\log n \le(\log n)^2$ for $n\ge\exp(1)$. Also, for the other term, we have:
	\[
	\prob{
		\left\vert\frac{n_x}{n}-\rho_x\right\vert
		\sqrt{C_1\log n}
		>\frac{M\log{n}}{2\sqrt n}
	}
	\leq
	2\exp\left(
	-\frac{M^2}{4C_1}\log{n}
	\right)
	=2n^{-M^2/4C_1}.
	\]
	Thus, if $M>\sqrt{2}\vert\mathbb{E}(\sup W(t|x))\vert\vee2\sqrt{C_1}>0$, the series converges:
	\[
	\begin{split}
		&\sum_{n=1}^{\infty}\prob{
			\left\vert\frac{n_x}{n}-\rho_x\right\vert\sup\left\vert W(t|x)\right\vert
			>\frac{M\log{n}}{\sqrt n},~%\right.\\
			%				&\left.\qquad\qquad\qquad
			\left\vert
			\sup\left\vert W(t|x)\right\vert
			-\mathbb{E}\big({
				\sup W(t|x)
			}\big)
			\right\vert
			\le\sqrt{C_1\log{n}}
		}\\
		&\leq
		2+
		2\sum_{n=3}^\infty
		n^{-M^2/2\vert
			\mathbb{E}(
			\sup W(t|x)
			)\vert^2}
		+
		2\sum_{n=1}^{\infty}n^{-M^2/4C_1}<\infty.
	\end{split}
	\]
	$(B)$ in \eqref{eq:n_x_prob_split} can be bounded using the inequality in \eqref{eq:borell_ineq}, for $n>\exp\lbrace\lbrack\mathbb{E}(\sup W(t|x))\rbrack^2/C_1\rbrace$
	\[
	\begin{split}
		(B)
		&\leq
		\prob{
			\left\vert
			\sup\left\vert W(t|x)\right\vert
			-\mathbb{E}\big({
				\sup W(t|x)
			}\big)\right\vert>\sqrt{C_1\log{n}}
		}\\
		&\le2\exp(-C_1\log n/2\sigma_{\uptau_{G}(x)}^2)
		=2n^{-C_1/2\sigma_{\uptau_{G}(x)}^2}.
	\end{split}
	\]
	Since $C_1>2\sigma_{\uptau_{G}(x)}^2$ and with $n_0>\exp\lbrace\lbrack\mathbb{E}(\sup W(t|x))\rbrack^2/C_1\rbrace$, the series converges:
	\[
	\begin{split}
		&\sum_{n=n_0}^\infty
		\prob{
			\left\vert\frac{n_x}{n}-\rho_x\right\vert\sup\left\vert W(t|x)\right\vert
			>\frac{M\log{n}}{\sqrt n},~%\right.\\
			%				&\left.\qquad\qquad\qquad
			\left\vert
			\sup\left\vert W(t|x)\right\vert
			-\mathbb{E}\big({
				\sup W(t|x)
			}\big)
			\right\vert
			>\sqrt{C_1\log{n}}
		}\\
		&\le2\sum_{n=n_0}^\infty
		n^{-C_1/2\sigma_{\uptau_{G}(x)}^2}<\infty.
	\end{split}
	\]
	Thus, by the Borel--Cantelli lemma, $(II)=O(\log{n}/n)$ almost surely. It remains to show that $W(t|x)$ is bounded almost surely on $[0,\uptau_G(x)]$. This can be verified using Theorem 1.4.1 in \cite{AT2007}. Specifically, it suffices to show that, for some $\delta>0$,
	\[
	\int_{\delta}^{\infty}q(\exp(-u^2))\dd{u}<\infty,
	\]
	where $q^2(u)=\sup_{|s-t|\le u}\mathbb{E}\left\vert W(s|x)-W(t|x)\right\vert^2$.
	From the covariance function of $W(t|x)$, it can be shown that
	$\expect{\left\vert W(s|x)-W(t|x)\right\vert^2}\le C_2|s-t|$,
	for some positive constant $C_2$.
	Thus
	\[
	\int_{\delta}^{\infty}q(\exp(-u^2))\dd{u}
	\le C_2\int_{\delta}^{\infty}\exp(-u^2/2)\dd{u}
	=C_2\sqrt{2\pi}(1-\Phi(\delta))
	<\infty,
	\]
	where $\Phi(\cdot)$ is the distribution function of the standard normal distribution. Hence the proof of \ref{enum:iid_sum_gaussian_approx} is completed.
\end{proof}

\begin{lemma}
	\label{lemma:lcm_kme_kme_unif_dist_order}
	Suppose that Assumptions~\ref*{assumption:discrete_cov}--\ref*{assumption:positive_cure_prob} hold.
	Then, we have, for each $x\in\family{X}$,
	\begin{equation}
		\label{eq:lcm_kme_kme_unif_dist_order}
		\sup_{t\in[a_x,\uptau_{G}(x)]}\left\vert
		\hat{F}_n^G(t|x)-\hat{F}_n(t|x)
		\right\vert=O_P\left(\frac{\log{n}}{n}\right)^{2/3},
	\end{equation}
	where $\hat{F}_n$ is the conditional Kaplan--Meier estimator in \eqref{eq:kme} and $\hat{F}_n^G$ is its least concave majorant.
\end{lemma}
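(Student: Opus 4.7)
The plan is to combine three ingredients: (i) the concavity of $F(\cdot|x)$ on the tail region, which follows from the non-increasing assumption on $f_u(\cdot|x)$; (ii) the strong Gaussian approximation of $\hat{F}_n(\cdot|x)$ provided by Lemma~\ref{lemma:strong_rep_brownian_disc_x}; and (iii) a pointwise variational formula for the gap between a function and its least concave majorant. Throughout, I would work on the high-probability event $\{Y_{x(n_x)}\ge\uptau_G(x)-n^{-a}\}$ for some $a>2/3$, which by Lemma~\ref{lemma:max_surv_time_discrete_conv} has probability tending to one; on this event the entire interval of interest lies strictly inside $[a_x,\uptau_{F_u}(x)]$, where $F(\cdot|x)$ is $C^2$ and concave.

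First I would note that $F(t|x)=p(x)F_u(t|x)$, so Assumption~\ref*{assumption:uncured_survival_discrete_cov} together with $p(x)>0$ from Assumption~\ref*{assumption:positive_cure_prob} implies $F(\cdot|x)$ is twice continuously differentiable on $[a_x,\uptau_{F_u}(x)]$ with $|F''(t|x)|=p(x)|f_u^\prime(t|x)|$ both bounded above and bounded away from zero by some constant $2c_0>0$. For any $t\in(a_x,Y_{x(n_x)})$, the gap between $\hat{F}_n(\cdot|x)$ and its LCM admits the variational representation
\[
\hat{F}_n^G(t|x)-\hat{F}_n(t|x)=\sup_{\substack{a,b>0\\ t-a\ge a_x,\, t+b\le Y_{x(n_x)}}}\left\lbrace
\frac{b\hat{F}_n(t-a|x)+a\hat{F}_n(t+b|x)}{a+b}-\hat{F}_n(t|x)
\right\rbrace.
\]
Writing $\hat{F}_n=F+(\hat{F}_n-F)$ and Taylor-expanding $F$ at $t$, the deterministic part contributes at most $-c_0 ab$ (the concavity deficit), so the bracketed quantity is bounded above by $-c_0 ab + E_n(t;a,b)$, where $E_n(t;a,b)$ is the same convex combination applied to $\hat{F}_n-F$ in place of $\hat{F}_n$.

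To control $E_n(t;a,b)$, I would invoke parts \ref{enum:strong_rep_disc_x} and \ref{enum:iid_sum_gaussian_approx} of Lemma~\ref{lemma:strong_rep_brownian_disc_x}, which together yield
\[
\hat{F}_n(s|x)-F(s|x)=n^{-1/2}\rho_x^{-1/2}W(s|x)+r_n(s|x),\quad \sup_{s\le\uptau_G(x)}|r_n(s|x)|=O\bigl((\log n/n)^{3/4}\bigr)\ \text{a.s.},
\]
where $W(\cdot|x)$ is the mean-zero Gaussian process of Lemma~\ref{lemma:strong_rep_brownian_disc_x}\ref{enum:brownian_disc_x}. Its covariance is Lipschitz on $[0,\uptau_G(x)]$, hence by a standard Dudley-type entropy bound the modulus of continuity satisfies $\omega_W(h):=\sup_{|s-s^\prime|\le h}|W(s|x)-W(s^\prime|x)|=O_{\mathbb{P}}(\sqrt{h\log(1/h)})$. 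Setting $h=a+b$ and using the triangle inequality,
\[
|E_n(t;a,b)|\le 2n^{-1/2}\rho_x^{-1/2}\omega_W(h)+2\sup_{s\le\uptau_G(x)}|r_n(s|x)|=O_{\mathbb{P}}\bigl(n^{-1/2}\sqrt{h\log n}\bigr)+O\bigl((\log n/n)^{3/4}\bigr).
\]

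Combining these, the bracketed expression in the variational formula can only be positive when $c_0 ab\le C n^{-1/2}\sqrt{h\log n}+O_{\mathbb{P}}((\log n/n)^{3/4})$. Since $ab\le h^2/4$, at any maximizer one must have $h\gtrsim(\log n/n)^{1/3}$, and substituting back bounds the gap by $O_{\mathbb{P}}((\log n/n)^{2/3})$ uniformly in $t\in[a_x,Y_{x(n_x)}]$. The remaining strip $(Y_{x(n_x)},\uptau_G(x)]$ is absorbed into the same rate by Lemma~\ref{lemma:max_surv_time_discrete_conv} together with the Lipschitz control of $F$ and the uniform bound for $\hat{F}_n-F$. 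The hard part will be a careful treatment of the boundary cases where $a$ or $b$ is forced to be small by the constraints $t-a\ge a_x$ or $t+b\le Y_{x(n_x)}$ (with the latter endpoint being random, and the LCM being a nonlinear, boundary-sensitive operator); I expect to resolve this by first establishing the bound on a shrunken interval $[a_x+\epsilon_n,\uptau_G(x)-\epsilon_n]$ with $\epsilon_n\to 0$ slowly, and then transferring it to $[a_x,\uptau_G(x)]$ via monotonicity of $\hat{F}_n^G$ and the already-established uniform rate on $\hat{F}_n-F$.
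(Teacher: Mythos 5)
Your plan is a genuinely different route from the paper's. The paper proves this lemma in a few lines by invoking Theorem~2.2 of Durot and Lopuha\"a (2014): it verifies their condition (A1) from Assumption~(A3), constructs $L(t|x)$ and the process $B(t|x)$ to verify (A4) and condition (2) via the strong representation in Lemma~\ref{lemma:strong_rep_brownian_disc_x}, and refers to Lemma~4.3 of Lopuha\"a and Musta (2017) for (A2)--(A3). You instead re-derive the Kiefer--Wolfowitz-type bound from scratch: the variational (chord) representation of the LCM gap, the curvature deficit $-c_0ab$ from $F''(\cdot|x)=p(x)f_u'(\cdot|x)$ being bounded away from zero, and the modulus of continuity of the approximating Gaussian process. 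Both routes rest on exactly the same inputs (the $(\log n/n)^{3/4}$ strong approximation and the two-sided bound on $|f_u'|$), so your approach buys self-containedness at the cost of reproving a known general theorem; the paper's buys brevity at the cost of an external citation whose conditions must be matched carefully (which is precisely why the paper needs the auxiliary process $B\circ L$).

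One step of your sketch is logically garbled and would fail as written: from ``$c_0ab\le Cn^{-1/2}\sqrt{h\log n}+\cdots$'' and ``$ab\le h^2/4$'' you cannot conclude anything about $h$ (both are upper bounds on $ab$), and the asserted conclusion ``$h\gtrsim(\log n/n)^{1/3}$'' points the wrong way --- a lower bound on $h$ substituted into $Cn^{-1/2}\sqrt{h\log n}$ gives a lower bound on the gap, not an upper bound. The inequality $ab\gtrsim h^2$ that you would need holds only when $a$ and $b$ are comparable, and the unbalanced case (say $a\ll b$) must be handled separately: there the weight $a/(a+b)$ kills the far increment, and one argues that if the chord value exceeds $M(\log n/n)^{2/3}$ then at least one of the two weighted increments exceeds $M(\log n/n)^{2/3}/2+c_0ab/2$, which forces (via $\omega_{D_n}(a)\le Cn^{-1/2}\sqrt{a\log n}+C(\log n/n)^{3/4}$ and $\Delta_1\ge c_0a(a+b)/2\ge c_0a^2/2$) first $a\lesssim(\log n/n)^{1/3}$ and then a contradiction for $M$ large. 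With that repair, and the boundary/endpoint treatment you already flag, the plan goes through and is uniform in $t$ since everything is phrased through the global modulus of $\hat F_n-F$.
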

\begin{proof}[Proof of Lemma~\ref{lemma:lcm_kme_kme_unif_dist_order}]
	We apply Theorem~2.2 in \cite{DL2014} to show \eqref{eq:lcm_kme_kme_unif_dist_order}, for which it suffices to verify the following conditions.
	Condition~(A1) in \cite{DL2014} is satisfied under our assumptions on the function $t\mapsto f_u(t|x)$ for $t\in[a_x,\uptau_{G}(x)]$. Denote by 
	$$L(t|x)=\int_0^t\frac{\dd{F(s|x)}}{(1-H(s{-}|x))(1-F(s|x))},$$ for $t\in[a_x, \uptau_G(x)]$, which is non-decreasing and continuously differentiable with respect to $t$ on $[a_x, \uptau_G(x)]$ satisfying $0<\inf_{t\in[a_x,\uptau_{G}(x)]}L'(t|x)\leq\sup_{t\in[a_x,\uptau_{G}(x)]}L'(t|x)<\infty$. Therefore condition~(A4) in \cite{DL2014} is satisfied. Let $$B(t|x)={\rho_x^{-1/2}}\lbrace1-F(L^{-1}(t|x)|x)\rbrace\tilde{W}(t)$$ for $t\in[L(a_x|x),L(\uptau_G(x)|x)]$, where $\tilde{W}(t)$ is a Brownian motion. From Lemma~\ref{lemma:strong_rep_brownian_disc_x}, we have
	\[
	\sup_{t\in[a_x,\uptau_{G}(x)]}\left\vert
	\hat{F}_n(t|x)-F(t|x)-n^{-1/2}B\circ L(t|x)
	\right\vert=O_P\left(\frac{\log{n}}{n}\right)^{3/4}.
	\]
	Hence condition~(2) in \cite{DL2014} is satisfied. 
	To show that $B(t|x)$ satisfies conditions (A2) and (A3) in \cite{DL2014} with $\tau=1$, we can apply the same arguments as in the proof of Lemma~4.3 in \cite{LM2017}, which are omitted for brevity.
\end{proof}

\bibliography{references.bib}

\clearpage
\pagebreak
\begin{center}
	\textbf{\large Supplementary Material: \\
		Testing for sufficient follow-up in cure models with categorical
		covariates\\[5pt]}
\end{center}

\setcounter{section}{0}
\setcounter{equation}{0}
\setcounter{figure}{0}
\setcounter{table}{0}
\setcounter{page}{1}
\makeatletter
\renewcommand{\thesection}{S\arabic{section}}
\renewcommand{\theequation}{S\arabic{equation}}
\renewcommand{\thefigure}{S\arabic{figure}}
\renewcommand{\thetable}{S\arabic{table}}

\section{Simulation study}
%\subsection{Simulation settings}
\subsection{Censoring rate}\label{supp_sec:cens_rate}
\subsubsection*{Setting~\ref*{enum:sim_logistic_weibull_unif}}
\begin{table}[H]
	\begin{center}
		\caption{Censoring rate for Setting~\ref*{enum:sim_logistic_weibull_unif} with $\Delta G(\uptau_G(x))=0.01$.\label{supp_tab:sim_cens_rate_logistic_weibull_unif_j_001_rho_comb}}
		\scriptsize
		\begin{tabular}{ccc@{\extracolsep{6pt}}ccc@{\extracolsep{6pt}}ccc}
& \multicolumn{2}{c}{$\uptau_G(x)$} & \multicolumn{3}{c}{$\rho=0.3$} & \multicolumn{3}{c}{$\rho=0.5$} \\ \cline{2-3}\cline{4-6}\cline{7-9}
Scenario & $x=0$ & $x=1$ & all & $x=0$ & $x=1$ & all & $x=0$ & $x=1$ \\ \hline
\multirow{3}{*}{A} & 0.950 & 0.950 & 0.669 & 0.740 & 0.506 & 0.624 & 0.739 & 0.508 \\ 
   & 0.950 & 0.975 & 0.659 & 0.740 & 0.471 & 0.606 & 0.739 & 0.474 \\ 
   & 0.975 & 0.975 & 0.646 & 0.721 & 0.471 & 0.597 & 0.721 & 0.474 \\ 
   \hline
\multirow{8}{*}{B} & 0.975 & 0.990 & 0.636 & 0.721 & 0.439 & 0.581 & 0.721 & 0.442 \\ 
   & 0.975 & 0.999 & 0.624 & 0.721 & 0.397 & 0.559 & 0.721 & 0.398 \\ 
   & 0.990 & 0.975 & 0.634 & 0.704 & 0.471 & 0.589 & 0.704 & 0.474 \\ 
   & 0.990 & 0.990 & 0.625 & 0.704 & 0.439 & 0.573 & 0.704 & 0.442 \\ 
   & 0.990 & 0.999 & 0.612 & 0.704 & 0.397 & 0.551 & 0.704 & 0.398 \\ 
   & 0.995 & 0.975 & 0.628 & 0.696 & 0.471 & 0.584 & 0.695 & 0.474 \\ 
   & 0.995 & 0.990 & 0.619 & 0.696 & 0.439 & 0.568 & 0.695 & 0.442 \\ 
   & 0.999 & 0.990 & 0.609 & 0.682 & 0.439 & 0.562 & 0.682 & 0.442 \\ 
   \hline
\multirow{4}{*}{C} & 0.995 & 0.995 & 0.613 & 0.696 & 0.423 & 0.560 & 0.695 & 0.424 \\ 
   & 0.995 & 0.999 & 0.606 & 0.696 & 0.397 & 0.547 & 0.695 & 0.398 \\ 
   & 0.999 & 0.995 & 0.604 & 0.682 & 0.423 & 0.553 & 0.682 & 0.424 \\ 
   & 0.999 & 0.999 & 0.596 & 0.682 & 0.397 & 0.540 & 0.682 & 0.398 \\ 
   \hline
\end{tabular}

	\end{center}
\end{table}

\subsubsection*{Setting~\ref*{enum:sim_exp_exp}}
\begin{table}[H]
	\begin{center}
		\caption{Censoring rate for Setting~\ref*{enum:sim_exp_exp} with $p(0)=0.7$, $p(1)=0.7$, $\rho\in\{0.3,0.5\}$.\label{supp_tab:sim_cens_rate_exp_exp_exp_p0_07_p1_07_rho_comb}}
		\scriptsize
		\begin{tabular}{ccc@{\extracolsep{12pt}}ccc@{\extracolsep{12pt}}ccc}
& \multicolumn{2}{c}{$\uptau_G(x)$} & \multicolumn{3}{c}{$\rho=0.3$} & \multicolumn{3}{c}{$\rho=0.5$} \\ \cline{2-3}\cline{4-6}\cline{7-9}
Scenario & $x=0$ & $x=1$ & all & $x=0$ & $x=1$ & all & $x=0$ & $x=1$ \\ \hline
\multirow{4}{*}{A} & 0.950 & 0.950 & 0.469 & 0.433 & 0.553 & 0.494 & 0.432 & 0.555 \\ 
   & 0.950 & 0.975 & 0.468 & 0.433 & 0.550 & 0.492 & 0.432 & 0.552 \\ 
   & 0.975 & 0.975 & 0.462 & 0.424 & 0.550 & 0.488 & 0.423 & 0.552 \\ 
   & 0.999 & 0.950 & 0.458 & 0.417 & 0.553 & 0.486 & 0.417 & 0.555 \\ 
   \hline
\multirow{7}{*}{B} & 0.975 & 0.990 & 0.462 & 0.424 & 0.549 & 0.487 & 0.423 & 0.551 \\ 
   & 0.975 & 0.999 & 0.461 & 0.424 & 0.549 & 0.487 & 0.423 & 0.551 \\ 
   & 0.990 & 0.975 & 0.459 & 0.419 & 0.550 & 0.486 & 0.419 & 0.552 \\ 
   & 0.990 & 0.990 & 0.458 & 0.419 & 0.549 & 0.485 & 0.419 & 0.551 \\ 
   & 0.990 & 0.999 & 0.458 & 0.419 & 0.549 & 0.485 & 0.419 & 0.551 \\ 
   & 0.995 & 0.975 & 0.458 & 0.418 & 0.550 & 0.485 & 0.418 & 0.552 \\ 
   & 0.999 & 0.990 & 0.457 & 0.417 & 0.549 & 0.484 & 0.417 & 0.551 \\ 
   \hline
\multirow{4}{*}{C} & 0.995 & 0.995 & 0.457 & 0.418 & 0.549 & 0.484 & 0.418 & 0.551 \\ 
   & 0.995 & 0.999 & 0.457 & 0.418 & 0.549 & 0.484 & 0.418 & 0.551 \\ 
   & 0.999 & 0.995 & 0.457 & 0.417 & 0.549 & 0.484 & 0.417 & 0.551 \\ 
   & 0.999 & 0.999 & 0.457 & 0.417 & 0.549 & 0.484 & 0.417 & 0.551 \\ 
   \hline
\end{tabular}

	\end{center}
\end{table}

\begin{table}[H]
	\begin{center}
		\caption{Censoring rate for Setting~\ref*{enum:sim_exp_exp} with $\rho=0.5$, $p(x)\in\{0.3,0.7\}$.\label{supp_tab:sim_cens_rate_exp_exp_exp_rho_05_p_comb}}
		\scriptsize
		\setlength{\tabcolsep}{5pt}
		\begin{tabular}{ccc@{\extracolsep{6pt}}ccc@{\extracolsep{6pt}}ccc@{\extracolsep{6pt}}ccc}
& \multicolumn{2}{c}{$\uptau_G(x)$} & \multicolumn{3}{c}{$p(0)=p(1)=0.7$} & \multicolumn{3}{c}{$p(0)=0.3$, $p(1)=0.7$} & \multicolumn{3}{c}{$p(0)=p(1)=0.3$} \\ \cline{2-3}\cline{4-6}\cline{7-9}\cline{10-12}
Scenario & $x=0$ & $x=1$ & all & $x=0$ & $x=1$ & all & $x=0$ & $x=1$ & all & $x=0$ & $x=1$ \\ \hline
\multirow{4}{*}{A} & 0.950 & 0.950 & 0.494 & 0.432 & 0.555 & 0.656 & 0.757 & 0.555 & 0.783 & 0.757 & 0.809 \\ 
   & 0.950 & 0.975 & 0.492 & 0.432 & 0.552 & 0.655 & 0.757 & 0.552 & 0.782 & 0.757 & 0.808 \\ 
   & 0.975 & 0.975 & 0.488 & 0.423 & 0.552 & 0.653 & 0.754 & 0.552 & 0.781 & 0.754 & 0.808 \\ 
   & 0.999 & 0.950 & 0.486 & 0.417 & 0.555 & 0.653 & 0.751 & 0.555 & 0.780 & 0.751 & 0.809 \\ 
   \hline
\multirow{7}{*}{B} & 0.975 & 0.990 & 0.487 & 0.423 & 0.551 & 0.652 & 0.754 & 0.551 & 0.780 & 0.754 & 0.807 \\ 
   & 0.975 & 0.999 & 0.487 & 0.423 & 0.551 & 0.652 & 0.754 & 0.551 & 0.780 & 0.754 & 0.807 \\ 
   & 0.990 & 0.975 & 0.486 & 0.419 & 0.552 & 0.652 & 0.752 & 0.552 & 0.780 & 0.752 & 0.808 \\ 
   & 0.990 & 0.990 & 0.485 & 0.419 & 0.551 & 0.651 & 0.752 & 0.551 & 0.780 & 0.752 & 0.807 \\ 
   & 0.990 & 0.999 & 0.485 & 0.419 & 0.551 & 0.651 & 0.752 & 0.551 & 0.779 & 0.752 & 0.807 \\ 
   & 0.995 & 0.975 & 0.485 & 0.418 & 0.552 & 0.652 & 0.751 & 0.552 & 0.779 & 0.751 & 0.808 \\ 
   & 0.999 & 0.990 & 0.484 & 0.417 & 0.551 & 0.651 & 0.751 & 0.551 & 0.779 & 0.751 & 0.807 \\ 
   \hline
\multirow{4}{*}{C} & 0.995 & 0.995 & 0.484 & 0.418 & 0.551 & 0.651 & 0.751 & 0.551 & 0.779 & 0.751 & 0.807 \\ 
   & 0.995 & 0.999 & 0.484 & 0.418 & 0.551 & 0.651 & 0.751 & 0.551 & 0.779 & 0.751 & 0.807 \\ 
   & 0.999 & 0.995 & 0.484 & 0.417 & 0.551 & 0.651 & 0.751 & 0.551 & 0.779 & 0.751 & 0.807 \\ 
   & 0.999 & 0.999 & 0.484 & 0.417 & 0.551 & 0.651 & 0.751 & 0.551 & 0.779 & 0.751 & 0.807 \\ 
   \hline
\end{tabular}

	\end{center}
\end{table}

\subsubsection*{Setting~\ref*{enum:sim_exp_unif_2cov}}
\begin{table}[H]
	\begin{center}
		\caption{Censoring rate for Setting~\ref*{enum:sim_exp_unif_2cov}.\label{supp_tab:sim_cens_rate_exp_unif_2cov}}
		\scriptsize
		\begin{tabular}{ccccc@{\extracolsep{12pt}}ccccc}
\multirow{2}{*}{Scenario} & \multicolumn{4}{c}{$\uptau_G(x)$} & \multicolumn{5}{c}{Censoring rate} \\ \cline{2-5}\cline{6-10}
& $(0,0)$ & $(0,1)$ & $(1,0)$ & $(1,1)$ & all & $(0,0)$ & $(0,1)$ & $(1,0)$ & $(1,1)$ \\ \hline
\multirow{7}{*}{A} & 0.950 & 0.950 & 0.950 & 0.950 & 0.741 & 0.758 & 0.830 & 0.594 & 0.727 \\ 
   & 0.950 & 0.975 & 0.950 & 0.950 & 0.741 & 0.758 & 0.830 & 0.594 & 0.727 \\ 
   & 0.975 & 0.975 & 0.950 & 0.950 & 0.716 & 0.739 & 0.813 & 0.554 & 0.690 \\ 
   & 0.975 & 0.975 & 0.975 & 0.950 & 0.716 & 0.739 & 0.813 & 0.554 & 0.690 \\ 
   & 0.975 & 0.975 & 0.975 & 0.975 & 0.716 & 0.739 & 0.813 & 0.554 & 0.690 \\ 
   & 0.975 & 0.990 & 0.975 & 0.975 & 0.716 & 0.739 & 0.813 & 0.554 & 0.690 \\ 
   & 0.999 & 0.999 & 0.950 & 0.999 & 0.648 & 0.697 & 0.769 & 0.456 & 0.581 \\ 
   \hline
\multirow{7}{*}{B} & 0.975 & 0.990 & 0.999 & 0.999 & 0.716 & 0.739 & 0.813 & 0.554 & 0.690 \\ 
   & 0.975 & 0.999 & 0.999 & 0.999 & 0.716 & 0.739 & 0.813 & 0.554 & 0.690 \\ 
   & 0.990 & 0.990 & 0.990 & 0.990 & 0.689 & 0.722 & 0.796 & 0.515 & 0.650 \\ 
   & 0.990 & 0.995 & 0.995 & 0.990 & 0.689 & 0.722 & 0.796 & 0.515 & 0.650 \\ 
   & 0.990 & 0.999 & 0.990 & 0.990 & 0.689 & 0.722 & 0.796 & 0.515 & 0.650 \\ 
   & 0.990 & 0.999 & 0.995 & 0.990 & 0.689 & 0.722 & 0.796 & 0.515 & 0.650 \\ 
   & 0.999 & 0.999 & 0.990 & 0.999 & 0.648 & 0.697 & 0.769 & 0.456 & 0.581 \\ 
   \hline
\multirow{5}{*}{C} & 0.995 & 0.995 & 0.995 & 0.995 & 0.674 & 0.712 & 0.786 & 0.492 & 0.625 \\ 
   & 0.995 & 0.999 & 0.995 & 0.995 & 0.674 & 0.712 & 0.786 & 0.492 & 0.625 \\ 
   & 0.999 & 0.999 & 0.995 & 0.995 & 0.648 & 0.697 & 0.769 & 0.456 & 0.581 \\ 
   & 0.999 & 0.999 & 0.999 & 0.995 & 0.648 & 0.697 & 0.769 & 0.456 & 0.581 \\ 
   & 0.999 & 0.999 & 0.999 & 0.999 & 0.648 & 0.697 & 0.769 & 0.456 & 0.581 \\ 
   \hline
\end{tabular}

	\end{center}
\end{table}

\clearpage
\subsubsection*{Setting~\ref*{enum:sim_exp_unif}}
\begin{table}[H]
	\begin{center}
		\caption{Censoring rate for Setting~\ref*{enum:sim_exp_unif} with $\rho=0.5$, $p(0)=0.6$, $p(1)=0.6$, $\Delta G(\uptau_G(x))\in\{0,0.01\}$.\label{supp_tab:sim_cens_rate_exp_exp_unif_rho_05_p0_06_p1_06_j_comb}}
		\scriptsize
		\begin{tabular}{ccc@{\extracolsep{12pt}}ccc@{\extracolsep{12pt}}ccc}
& \multicolumn{2}{c}{$\uptau_G(x)$} & \multicolumn{3}{c}{$\Delta G(\uptau_G(x))=0$} & \multicolumn{3}{c}{$\Delta G(\uptau_G(x))=0.01$} \\ \cline{2-3}\cline{4-6}\cline{7-9}
Scenario & $x=0$ & $x=1$ & all & $x=0$ & $x=1$ & all & $x=0$ & $x=1$ \\ \hline
\multirow{5}{*}{A} & 0.950 & 0.950 & 0.590 & 0.590 & 0.590 & 0.588 & 0.588 & 0.588 \\ 
   & 0.950 & 0.975 & 0.574 & 0.590 & 0.559 & 0.573 & 0.588 & 0.557 \\ 
   & 0.975 & 0.975 & 0.558 & 0.558 & 0.559 & 0.557 & 0.557 & 0.557 \\ 
   & 0.990 & 0.950 & 0.559 & 0.529 & 0.590 & 0.558 & 0.528 & 0.588 \\ 
   & 0.999 & 0.950 & 0.538 & 0.486 & 0.590 & 0.537 & 0.486 & 0.587 \\ 
   \hline
\multirow{7}{*}{B} & 0.975 & 0.990 & 0.544 & 0.558 & 0.529 & 0.542 & 0.557 & 0.528 \\ 
   & 0.975 & 0.999 & 0.522 & 0.558 & 0.486 & 0.521 & 0.556 & 0.485 \\ 
   & 0.990 & 0.975 & 0.544 & 0.529 & 0.559 & 0.542 & 0.528 & 0.557 \\ 
   & 0.990 & 0.990 & 0.529 & 0.529 & 0.529 & 0.528 & 0.527 & 0.528 \\ 
   & 0.990 & 0.999 & 0.508 & 0.529 & 0.486 & 0.506 & 0.527 & 0.485 \\ 
   & 0.999 & 0.975 & 0.523 & 0.486 & 0.559 & 0.521 & 0.486 & 0.557 \\ 
   & 0.999 & 0.990 & 0.508 & 0.486 & 0.529 & 0.507 & 0.485 & 0.528 \\ 
   \hline
\multirow{4}{*}{C} & 0.995 & 0.995 & 0.513 & 0.512 & 0.513 & 0.511 & 0.511 & 0.512 \\ 
   & 0.995 & 0.999 & 0.499 & 0.512 & 0.486 & 0.498 & 0.511 & 0.485 \\ 
   & 0.999 & 0.995 & 0.500 & 0.486 & 0.513 & 0.498 & 0.485 & 0.511 \\ 
   & 0.999 & 0.999 & 0.486 & 0.486 & 0.486 & 0.485 & 0.485 & 0.485 \\ 
   \hline
\end{tabular}

	\end{center}
\end{table}

\begin{table}[H]
	\begin{center}
		\caption{Censoring rate for Setting~\ref*{enum:sim_exp_unif} with $p(0)=0.6$, $p(1)=0.6$, $\Delta G(\uptau_G(x))=0.01$, $\rho\in\{0.3,0.5\}$.\label{supp_tab:sim_cens_rate_exp_exp_unif_p0_06_p1_06_j_001_rho_comb}}
		\scriptsize
		\begin{tabular}{ccc@{\extracolsep{12pt}}ccc@{\extracolsep{12pt}}ccc}
& \multicolumn{2}{c}{$\uptau_G(x)$} & \multicolumn{3}{c}{$\rho=0.3$} & \multicolumn{3}{c}{$\rho=0.5$} \\ \cline{2-3}\cline{4-6}\cline{7-9}
Scenario & $x=0$ & $x=1$ & all & $x=0$ & $x=1$ & all & $x=0$ & $x=1$ \\ \hline
\multirow{5}{*}{A} & 0.950 & 0.950 & 0.588 & 0.589 & 0.587 & 0.588 & 0.588 & 0.588 \\ 
   & 0.950 & 0.975 & 0.578 & 0.588 & 0.555 & 0.573 & 0.588 & 0.557 \\ 
   & 0.975 & 0.975 & 0.556 & 0.557 & 0.555 & 0.557 & 0.557 & 0.557 \\ 
   & 0.990 & 0.950 & 0.545 & 0.528 & 0.586 & 0.558 & 0.528 & 0.588 \\ 
   & 0.999 & 0.950 & 0.516 & 0.486 & 0.586 & 0.537 & 0.486 & 0.587 \\ 
   \hline
\multirow{7}{*}{B} & 0.975 & 0.990 & 0.548 & 0.557 & 0.526 & 0.542 & 0.557 & 0.528 \\ 
   & 0.975 & 0.999 & 0.535 & 0.556 & 0.484 & 0.521 & 0.556 & 0.485 \\ 
   & 0.990 & 0.975 & 0.536 & 0.528 & 0.555 & 0.542 & 0.528 & 0.557 \\ 
   & 0.990 & 0.990 & 0.527 & 0.528 & 0.526 & 0.528 & 0.527 & 0.528 \\ 
   & 0.990 & 0.999 & 0.514 & 0.527 & 0.484 & 0.506 & 0.527 & 0.485 \\ 
   & 0.999 & 0.975 & 0.507 & 0.486 & 0.554 & 0.521 & 0.486 & 0.557 \\ 
   & 0.999 & 0.990 & 0.498 & 0.486 & 0.525 & 0.507 & 0.485 & 0.528 \\ 
   \hline
\multirow{4}{*}{C} & 0.995 & 0.995 & 0.511 & 0.512 & 0.509 & 0.511 & 0.511 & 0.512 \\ 
   & 0.995 & 0.999 & 0.503 & 0.512 & 0.484 & 0.498 & 0.511 & 0.485 \\ 
   & 0.999 & 0.995 & 0.493 & 0.486 & 0.509 & 0.498 & 0.485 & 0.511 \\ 
   & 0.999 & 0.999 & 0.485 & 0.486 & 0.484 & 0.485 & 0.485 & 0.485 \\ 
   \hline
\end{tabular}

	\end{center}
\end{table}

\begin{table}[H]
	\begin{center}
		\caption{Censoring rate for Setting~\ref*{enum:sim_exp_unif} with $\rho=0.5$, $\Delta G(\uptau_G(x))=0.01$, $p(x)\in\{0.4,0.6\}$.\label{supp_tab:sim_cens_rate_exp_exp_unif_rho_05_j_001_p_comb}}
		\scriptsize
		\setlength{\tabcolsep}{5.5pt}
		\begin{tabular}{ccc@{\extracolsep{6pt}}ccc@{\extracolsep{6pt}}ccc@{\extracolsep{6pt}}ccc}
& \multicolumn{2}{c}{$\uptau_G(x)$} & \multicolumn{3}{c}{$p(0)=p(1)=0.6$} & \multicolumn{3}{c}{$p(0)=0.6$, $p(1)=0.4$} & \multicolumn{3}{c}{$p(0)=p(1)=0.4$} \\ \cline{2-3}\cline{4-6}\cline{7-9}\cline{10-12}
Scenario & $x=0$ & $x=1$ & all & $x=0$ & $x=1$ & all & $x=0$ & $x=1$ & all & $x=0$ & $x=1$ \\ \hline
\multirow{5}{*}{A} & 0.950 & 0.950 & 0.588 & 0.588 & 0.588 & 0.657 & 0.588 & 0.726 & 0.726 & 0.726 & 0.726 \\ 
   & 0.950 & 0.975 & 0.573 & 0.588 & 0.557 & 0.647 & 0.588 & 0.705 & 0.715 & 0.726 & 0.705 \\ 
   & 0.975 & 0.975 & 0.557 & 0.557 & 0.557 & 0.631 & 0.557 & 0.705 & 0.705 & 0.705 & 0.705 \\ 
   & 0.990 & 0.950 & 0.558 & 0.528 & 0.588 & 0.627 & 0.528 & 0.725 & 0.706 & 0.686 & 0.725 \\ 
   & 0.999 & 0.950 & 0.537 & 0.486 & 0.587 & 0.605 & 0.486 & 0.725 & 0.691 & 0.658 & 0.725 \\ 
   \hline
\multirow{7}{*}{B} & 0.975 & 0.990 & 0.542 & 0.557 & 0.528 & 0.621 & 0.557 & 0.686 & 0.695 & 0.705 & 0.686 \\ 
   & 0.975 & 0.999 & 0.521 & 0.556 & 0.485 & 0.607 & 0.556 & 0.657 & 0.681 & 0.704 & 0.657 \\ 
   & 0.990 & 0.975 & 0.542 & 0.528 & 0.557 & 0.616 & 0.528 & 0.705 & 0.695 & 0.686 & 0.705 \\ 
   & 0.990 & 0.990 & 0.528 & 0.527 & 0.528 & 0.607 & 0.527 & 0.686 & 0.686 & 0.685 & 0.686 \\ 
   & 0.990 & 0.999 & 0.506 & 0.527 & 0.485 & 0.592 & 0.527 & 0.657 & 0.671 & 0.685 & 0.657 \\ 
   & 0.999 & 0.975 & 0.521 & 0.486 & 0.557 & 0.595 & 0.486 & 0.705 & 0.681 & 0.658 & 0.705 \\ 
   & 0.999 & 0.990 & 0.507 & 0.485 & 0.528 & 0.586 & 0.485 & 0.685 & 0.672 & 0.658 & 0.685 \\ 
   \hline
\multirow{4}{*}{C} & 0.995 & 0.995 & 0.511 & 0.511 & 0.512 & 0.593 & 0.511 & 0.675 & 0.675 & 0.675 & 0.675 \\ 
   & 0.995 & 0.999 & 0.498 & 0.511 & 0.485 & 0.584 & 0.511 & 0.657 & 0.666 & 0.675 & 0.657 \\ 
   & 0.999 & 0.995 & 0.498 & 0.485 & 0.511 & 0.580 & 0.485 & 0.674 & 0.666 & 0.658 & 0.674 \\ 
   & 0.999 & 0.999 & 0.485 & 0.485 & 0.485 & 0.571 & 0.485 & 0.657 & 0.657 & 0.658 & 0.657 \\ 
   \hline
\end{tabular}

	\end{center}
\end{table}

\subsection{Simulation results}
\subsubsection*{Setting~\ref*{enum:sim_logistic_weibull_unif}}
\begin{table}[H]
	\begin{center}
		\caption{Rejection rate for Setting~\ref*{enum:sim_logistic_weibull_unif} with $\Delta G(\uptau_G(x))=0.01$, $n=1000$.\label{supp_tab:sim_res_logistic_weibull_unif_j_001_n1000_rho_comb}}
		\scriptsize
		\begin{tabular}{ccc@{\extracolsep{6pt}}cc@{\extracolsep{6pt}}cc@{\extracolsep{6pt}}cc@{\extracolsep{6pt}}cc}
& & & \multicolumn{4}{c}{$\rho=0.3$} & \multicolumn{4}{c}{$\rho=0.5$} \\ \cline{4-7}\cline{8-11}
& \multicolumn{2}{c}{$\uptau_G(x)$} & \multicolumn{2}{c}{\makecell{Rej. rate of\\$H_{0x}$}} & \multicolumn{2}{c}{\makecell{Rej. rate of\\$H_0$}} & \multicolumn{2}{c}{\makecell{Rej. rate of\\$H_{0x}$}} & \multicolumn{2}{c}{\makecell{Rej. rate of\\$H_0$}} \\ \cline{2-3}\cline{4-5}\cline{6-7}\cline{8-9}\cline{10-11}
Scenario & $x=0$ & $x=1$ & $x=0$ & $x=1$ & Method 1 & Method 2 & $x=0$ & $x=1$ & Method 1 & Method 2 \\ \hline
\multirow{3}{*}{A} & 0.950 & 0.950 & 0.008 & 0.014 & 0.000 & 0.016 & 0.004 & 0.012 & 0.000 & 0.006 \\ 
   & 0.950 & 0.975 & 0.008 & 0.030 & 0.000 & 0.010 & 0.004 & 0.022 & 0.000 & 0.008 \\ 
   & 0.975 & 0.975 & 0.016 & 0.026 & 0.000 & 0.016 & 0.016 & 0.020 & 0.000 & 0.026 \\ 
   \hline
\multirow{6}{*}{B} & 0.975 & 0.990 & 0.016 & 0.230 & 0.006 & 0.018 & 0.016 & 0.170 & 0.004 & 0.018 \\ 
   & 0.975 & 0.999 & 0.016 & 0.946 & 0.016 & 0.016 & 0.016 & 0.974 & 0.016 & 0.016 \\ 
   & 0.990 & 0.975 & 0.096 & 0.026 & 0.002 & 0.090 & 0.096 & 0.016 & 0.002 & 0.090 \\ 
   & 0.990 & 0.990 & 0.096 & 0.220 & 0.022 & 0.094 & 0.096 & 0.160 & 0.012 & 0.092 \\ 
   & 0.990 & 0.999 & 0.096 & 0.946 & 0.092 & 0.096 & 0.096 & 0.980 & 0.094 & 0.096 \\ 
   & 0.999 & 0.990 & 0.848 & 0.224 & 0.194 & 0.790 & 0.832 & 0.164 & 0.132 & 0.776 \\ 
   \hline
\multirow{4}{*}{C} & 0.995 & 0.995 & 0.318 & 0.486 & 0.156 & 0.308 & 0.304 & 0.420 & 0.132 & 0.298 \\ 
   & 0.995 & 0.999 & 0.318 & 0.952 & 0.304 & 0.316 & 0.304 & 0.980 & 0.298 & 0.304 \\ 
   & 0.999 & 0.995 & 0.848 & 0.482 & 0.402 & 0.826 & 0.832 & 0.410 & 0.326 & 0.812 \\ 
   & 0.999 & 0.999 & 0.848 & 0.954 & 0.818 & 0.844 & 0.832 & 0.982 & 0.818 & 0.832 \\ 
   \hline
\end{tabular}

	\end{center}
\end{table}

\subsubsection*{Setting~\ref*{enum:sim_exp_exp}}
\begin{table}[H]
	\begin{center}
		\caption{Rejection rate for Setting~\ref*{enum:sim_exp_exp} with $p(0){=}0.7$, $p(1){=}0.7$, $n{=}500$, $\rho{\in}\{0.3,0.5\}$ (M1/M2: Method 1/ Method 2.)\label{supp_tab:sim_res_exp_exp_exp_p0_07_p1_07_n500_rho_comb}}
		\scriptsize
		\begin{tabular}{ccc@{\extracolsep{6pt}}cc@{\extracolsep{6pt}}cc@{\extracolsep{6pt}}cc@{\extracolsep{6pt}}cc}
& & & \multicolumn{4}{c}{$\rho=0.3$} & \multicolumn{4}{c}{$\rho=0.5$} \\ \cline{4-7}\cline{8-11}
& \multicolumn{2}{c}{$\uptau_G(x)$} & \multicolumn{2}{c}{Rej. rate of $H_{0x}$} & \multicolumn{2}{c}{Rej. rate of $H_0$} & \multicolumn{2}{c}{Rej. rate of $H_{0x}$} & \multicolumn{2}{c}{Rej. rate of $H_0$} \\ \cline{2-3}\cline{4-5}\cline{6-7}\cline{8-9}\cline{10-11}
Scenario & $x=0$ & $x=1$ & $x=0$ & $x=1$ & M1 & M2 & $x=0$ & $x=1$ & M1 & M2 \\ \hline
\multirow{4}{*}{A} & 0.950 & 0.950 & 0.038 & 0.014 & 0.000 & 0.026 & 0.018 & 0.018 & 0.000 & 0.018 \\ 
   & 0.950 & 0.975 & 0.038 & 0.062 & 0.002 & 0.030 & 0.018 & 0.044 & 0.002 & 0.024 \\ 
   & 0.975 & 0.975 & 0.048 & 0.058 & 0.002 & 0.044 & 0.036 & 0.038 & 0.000 & 0.036 \\ 
   & 0.999 & 0.950 & 0.876 & 0.016 & 0.014 & 0.018 & 0.802 & 0.016 & 0.014 & 0.016 \\ 
   \hline
\multirow{7}{*}{B} & 0.975 & 0.990 & 0.048 & 0.346 & 0.020 & 0.044 & 0.036 & 0.282 & 0.012 & 0.032 \\ 
   & 0.975 & 0.999 & 0.048 & 0.726 & 0.038 & 0.062 & 0.036 & 0.858 & 0.030 & 0.038 \\ 
   & 0.990 & 0.975 & 0.074 & 0.058 & 0.004 & 0.048 & 0.056 & 0.042 & 0.004 & 0.054 \\ 
   & 0.990 & 0.990 & 0.074 & 0.342 & 0.028 & 0.100 & 0.056 & 0.278 & 0.016 & 0.086 \\ 
   & 0.990 & 0.999 & 0.074 & 0.720 & 0.056 & 0.092 & 0.056 & 0.860 & 0.048 & 0.060 \\ 
   & 0.995 & 0.975 & 0.150 & 0.056 & 0.006 & 0.070 & 0.112 & 0.038 & 0.000 & 0.050 \\ 
   & 0.999 & 0.990 & 0.876 & 0.346 & 0.324 & 0.352 & 0.802 & 0.264 & 0.206 & 0.266 \\ 
   \hline
\multirow{4}{*}{C} & 0.995 & 0.995 & 0.150 & 0.548 & 0.068 & 0.240 & 0.112 & 0.562 & 0.064 & 0.174 \\ 
   & 0.995 & 0.999 & 0.150 & 0.714 & 0.096 & 0.222 & 0.112 & 0.860 & 0.096 & 0.132 \\ 
   & 0.999 & 0.995 & 0.876 & 0.568 & 0.518 & 0.554 & 0.802 & 0.578 & 0.452 & 0.562 \\ 
   & 0.999 & 0.999 & 0.876 & 0.708 & 0.630 & 0.688 & 0.802 & 0.858 & 0.680 & 0.784 \\ 
   \hline
\end{tabular}

	\end{center}
\end{table}

\begin{table}[H]
	\begin{center}
		\caption{Rejection rate for Setting~\ref*{enum:sim_exp_exp} with $p(0){=}0.7$, $p(1){=}0.7$, $n{=}1000$, $\rho{\in}\{0.3,0.5\}$ (M1/M2: Method 1/ Method 2.)\label{supp_tab:sim_res_exp_exp_exp_p0_07_p1_07_n1000_rho_comb}}
		\scriptsize
		\begin{tabular}{ccc@{\extracolsep{6pt}}cc@{\extracolsep{6pt}}cc@{\extracolsep{6pt}}cc@{\extracolsep{6pt}}cc}
& & & \multicolumn{4}{c}{$\rho=0.3$} & \multicolumn{4}{c}{$\rho=0.5$} \\ \cline{4-7}\cline{8-11}
& \multicolumn{2}{c}{$\uptau_G(x)$} & \multicolumn{2}{c}{Rej. rate of $H_{0x}$} & \multicolumn{2}{c}{Rej. rate of $H_0$} & \multicolumn{2}{c}{Rej. rate of $H_{0x}$} & \multicolumn{2}{c}{Rej. rate of $H_0$} \\ \cline{2-3}\cline{4-5}\cline{6-7}\cline{8-9}\cline{10-11}
Scenario & $x=0$ & $x=1$ & $x=0$ & $x=1$ & M1 & M2 & $x=0$ & $x=1$ & M1 & M2 \\ \hline
\multirow{4}{*}{A} & 0.950 & 0.950 & 0.012 & 0.020 & 0.000 & 0.018 & 0.030 & 0.038 & 0.000 & 0.020 \\ 
   & 0.950 & 0.975 & 0.012 & 0.036 & 0.000 & 0.010 & 0.030 & 0.034 & 0.004 & 0.026 \\ 
   & 0.975 & 0.975 & 0.044 & 0.030 & 0.000 & 0.032 & 0.034 & 0.032 & 0.000 & 0.034 \\ 
   & 0.999 & 0.950 & 0.936 & 0.024 & 0.022 & 0.028 & 0.894 & 0.036 & 0.030 & 0.036 \\ 
   \hline
\multirow{7}{*}{B} & 0.975 & 0.990 & 0.044 & 0.218 & 0.010 & 0.044 & 0.034 & 0.142 & 0.006 & 0.032 \\ 
   & 0.975 & 0.999 & 0.044 & 0.896 & 0.040 & 0.046 & 0.034 & 0.936 & 0.032 & 0.036 \\ 
   & 0.990 & 0.975 & 0.092 & 0.034 & 0.008 & 0.058 & 0.064 & 0.034 & 0.002 & 0.046 \\ 
   & 0.990 & 0.990 & 0.092 & 0.222 & 0.018 & 0.086 & 0.064 & 0.136 & 0.012 & 0.066 \\ 
   & 0.990 & 0.999 & 0.092 & 0.896 & 0.084 & 0.092 & 0.064 & 0.930 & 0.060 & 0.064 \\ 
   & 0.995 & 0.975 & 0.184 & 0.034 & 0.012 & 0.080 & 0.172 & 0.032 & 0.008 & 0.058 \\ 
   & 0.999 & 0.990 & 0.936 & 0.222 & 0.208 & 0.244 & 0.894 & 0.150 & 0.140 & 0.156 \\ 
   \hline
\multirow{4}{*}{C} & 0.995 & 0.995 & 0.184 & 0.536 & 0.090 & 0.196 & 0.172 & 0.452 & 0.054 & 0.158 \\ 
   & 0.995 & 0.999 & 0.184 & 0.900 & 0.172 & 0.190 & 0.172 & 0.936 & 0.162 & 0.174 \\ 
   & 0.999 & 0.995 & 0.936 & 0.526 & 0.484 & 0.508 & 0.894 & 0.450 & 0.382 & 0.436 \\ 
   & 0.999 & 0.999 & 0.936 & 0.892 & 0.834 & 0.852 & 0.894 & 0.928 & 0.832 & 0.890 \\ 
   \hline
\end{tabular}

	\end{center}
\end{table}

\begin{table}[H]
	\begin{center}
		\caption{Rejection rate for Setting~\ref*{enum:sim_exp_exp} with $\rho{=}0.5$, $n{=}500$, $p(x){\in}\{0.3,0.7\}$ (M1/M2: Method 1/ Method 2.)\label{supp_tab:sim_res_exp_exp_exp_rho_05_n500_p_comb}}
		\scriptsize
				\setlength{\tabcolsep}{3.5pt}
		\begin{tabular}{ccc@{\extracolsep{6pt}}cc@{\extracolsep{6pt}}cc@{\extracolsep{6pt}}cc@{\extracolsep{6pt}}cccc@{\extracolsep{6pt}}cc@{\extracolsep{6pt}}cc}
& & & \multicolumn{4}{c}{$p(0)=p(1)=0.7$} & \multicolumn{4}{c}{$p(0)=0.3$, $p(1)=0.7$} & \multicolumn{4}{c}{$p(0)=0.3$, $p(1)=0.3$} \\ \cline{4-7}\cline{8-11}\cline{12-15}
& \multicolumn{2}{c}{$\uptau_G(x)$} & \multicolumn{2}{c}{\makecell{Rej. rate of\\$H_{0x}$}} & \multicolumn{2}{c}{\makecell{Rej. rate of\\$H_0$}} & \multicolumn{2}{c}{\makecell{Rej. rate of\\$H_{0x}$}} & \multicolumn{2}{c}{\makecell{Rej. rate of\\$H_0$}} & \multicolumn{2}{c}{\makecell{Rej. rate of\\$H_{0x}$}} & \multicolumn{2}{c}{\makecell{Rej. rate of\\$H_0$}} \\ \cline{2-3}\cline{4-5}\cline{6-7}\cline{8-9}\cline{10-11}\cline{12-13}\cline{14-15}
Scenario & $x=0$ & $x=1$ & $x=0$ & $x=1$ & M1 & M2 & $x=0$ & $x=1$ & M1 & M2 & $x=0$ & $x=1$ & M1 & M2 \\ \hline
\multirow{4}{*}{A} & 0.950 & 0.950 & 0.018 & 0.018 & 0.000 & 0.018 & 0.036 & 0.018 & 0.000 & 0.034 & 0.036 & 0.034 & 0.004 & 0.038 \\ 
   & 0.950 & 0.975 & 0.018 & 0.044 & 0.002 & 0.024 & 0.036 & 0.040 & 0.002 & 0.042 & 0.036 & 0.184 & 0.006 & 0.042 \\ 
   & 0.975 & 0.975 & 0.036 & 0.038 & 0.000 & 0.036 & 0.024 & 0.038 & 0.000 & 0.040 & 0.024 & 0.182 & 0.002 & 0.066 \\ 
   & 0.999 & 0.950 & 0.802 & 0.016 & 0.014 & 0.016 & 0.854 & 0.018 & 0.016 & 0.108 & 0.854 & 0.038 & 0.032 & 0.038 \\ 
   \hline
\multirow{7}{*}{B} & 0.975 & 0.990 & 0.036 & 0.282 & 0.012 & 0.032 & 0.024 & 0.268 & 0.008 & 0.022 & 0.024 & 0.590 & 0.014 & 0.064 \\ 
   & 0.975 & 0.999 & 0.036 & 0.858 & 0.030 & 0.038 & 0.024 & 0.858 & 0.018 & 0.026 & 0.024 & 0.990 & 0.024 & 0.024 \\ 
   & 0.990 & 0.975 & 0.056 & 0.042 & 0.004 & 0.054 & 0.096 & 0.040 & 0.008 & 0.070 & 0.096 & 0.186 & 0.026 & 0.134 \\ 
   & 0.990 & 0.990 & 0.056 & 0.278 & 0.016 & 0.086 & 0.096 & 0.286 & 0.022 & 0.090 & 0.096 & 0.592 & 0.060 & 0.250 \\ 
   & 0.990 & 0.999 & 0.056 & 0.860 & 0.048 & 0.060 & 0.096 & 0.864 & 0.084 & 0.094 & 0.096 & 0.986 & 0.094 & 0.096 \\ 
   & 0.995 & 0.975 & 0.112 & 0.038 & 0.000 & 0.050 & 0.324 & 0.036 & 0.010 & 0.162 & 0.324 & 0.184 & 0.054 & 0.182 \\ 
   & 0.999 & 0.990 & 0.802 & 0.264 & 0.206 & 0.266 & 0.854 & 0.282 & 0.232 & 0.728 & 0.854 & 0.598 & 0.502 & 0.588 \\ 
   \hline
\multirow{4}{*}{C} & 0.995 & 0.995 & 0.112 & 0.562 & 0.064 & 0.174 & 0.324 & 0.572 & 0.166 & 0.302 & 0.324 & 0.792 & 0.250 & 0.508 \\ 
   & 0.995 & 0.999 & 0.112 & 0.860 & 0.096 & 0.132 & 0.324 & 0.860 & 0.276 & 0.314 & 0.324 & 0.990 & 0.322 & 0.326 \\ 
   & 0.999 & 0.995 & 0.802 & 0.578 & 0.452 & 0.562 & 0.854 & 0.566 & 0.466 & 0.782 & 0.854 & 0.810 & 0.684 & 0.800 \\ 
   & 0.999 & 0.999 & 0.802 & 0.858 & 0.680 & 0.784 & 0.854 & 0.864 & 0.728 & 0.808 & 0.854 & 0.986 & 0.842 & 0.884 \\ 
   \hline
\end{tabular}

	\end{center}
\end{table}

\begin{table}[H]
	\begin{center}
		\caption{Rejection rate for Setting~\ref*{enum:sim_exp_exp} with $\rho{=}0.5$, $n{=}1000$, $p(x){\in}\{0.3,0.7\}$ (M1/M2: Method 1/ Method 2.)\label{supp_tab:sim_res_exp_exp_exp_rho_05_n1000_p_comb}}
		\scriptsize
		\setlength{\tabcolsep}{3.5pt}
		\begin{tabular}{ccc@{\extracolsep{6pt}}cc@{\extracolsep{6pt}}cc@{\extracolsep{6pt}}cc@{\extracolsep{6pt}}cccc@{\extracolsep{6pt}}cc@{\extracolsep{6pt}}cc}
& & & \multicolumn{4}{c}{$p(0)=p(1)=0.7$} & \multicolumn{4}{c}{$p(0)=0.3$, $p(1)=0.7$} & \multicolumn{4}{c}{$p(0)=0.3$, $p(1)=0.3$} \\ \cline{4-7}\cline{8-11}\cline{12-15}
& \multicolumn{2}{c}{$\uptau_G(x)$} & \multicolumn{2}{c}{\makecell{Rej. rate of\\$H_{0x}$}} & \multicolumn{2}{c}{\makecell{Rej. rate of\\$H_0$}} & \multicolumn{2}{c}{\makecell{Rej. rate of\\$H_{0x}$}} & \multicolumn{2}{c}{\makecell{Rej. rate of\\$H_0$}} & \multicolumn{2}{c}{\makecell{Rej. rate of\\$H_{0x}$}} & \multicolumn{2}{c}{\makecell{Rej. rate of\\$H_0$}} \\ \cline{2-3}\cline{4-5}\cline{6-7}\cline{8-9}\cline{10-11}\cline{12-13}\cline{14-15}
Scenario & $x=0$ & $x=1$ & $x=0$ & $x=1$ & M1 & M2 & $x=0$ & $x=1$ & M1 & M2 & $x=0$ & $x=1$ & M1 & M2 \\ \hline
\multirow{4}{*}{A} & 0.950 & 0.950 & 0.030 & 0.038 & 0.000 & 0.020 & 0.032 & 0.038 & 0.000 & 0.028 & 0.032 & 0.028 & 0.000 & 0.032 \\ 
   & 0.950 & 0.975 & 0.030 & 0.034 & 0.004 & 0.026 & 0.032 & 0.038 & 0.004 & 0.026 & 0.032 & 0.086 & 0.004 & 0.032 \\ 
   & 0.975 & 0.975 & 0.034 & 0.032 & 0.000 & 0.034 & 0.010 & 0.032 & 0.000 & 0.030 & 0.010 & 0.096 & 0.000 & 0.020 \\ 
   & 0.999 & 0.950 & 0.894 & 0.036 & 0.030 & 0.036 & 0.874 & 0.040 & 0.032 & 0.048 & 0.874 & 0.022 & 0.018 & 0.022 \\ 
   \hline
\multirow{7}{*}{B} & 0.975 & 0.990 & 0.034 & 0.142 & 0.006 & 0.032 & 0.010 & 0.138 & 0.000 & 0.012 & 0.010 & 0.474 & 0.008 & 0.014 \\ 
   & 0.975 & 0.999 & 0.034 & 0.936 & 0.032 & 0.036 & 0.010 & 0.932 & 0.010 & 0.014 & 0.010 & 0.976 & 0.010 & 0.010 \\ 
   & 0.990 & 0.975 & 0.064 & 0.034 & 0.002 & 0.046 & 0.092 & 0.034 & 0.004 & 0.052 & 0.092 & 0.088 & 0.010 & 0.090 \\ 
   & 0.990 & 0.990 & 0.064 & 0.136 & 0.012 & 0.066 & 0.092 & 0.144 & 0.010 & 0.076 & 0.092 & 0.460 & 0.036 & 0.166 \\ 
   & 0.990 & 0.999 & 0.064 & 0.930 & 0.060 & 0.064 & 0.092 & 0.940 & 0.086 & 0.094 & 0.092 & 0.970 & 0.090 & 0.092 \\ 
   & 0.995 & 0.975 & 0.172 & 0.032 & 0.008 & 0.058 & 0.300 & 0.034 & 0.006 & 0.104 & 0.300 & 0.090 & 0.034 & 0.110 \\ 
   & 0.999 & 0.990 & 0.894 & 0.150 & 0.140 & 0.156 & 0.874 & 0.156 & 0.140 & 0.630 & 0.874 & 0.472 & 0.408 & 0.462 \\ 
   \hline
\multirow{4}{*}{C} & 0.995 & 0.995 & 0.172 & 0.452 & 0.054 & 0.158 & 0.300 & 0.436 & 0.132 & 0.270 & 0.300 & 0.750 & 0.226 & 0.388 \\ 
   & 0.995 & 0.999 & 0.172 & 0.936 & 0.162 & 0.174 & 0.300 & 0.922 & 0.270 & 0.298 & 0.300 & 0.970 & 0.288 & 0.300 \\ 
   & 0.999 & 0.995 & 0.894 & 0.450 & 0.382 & 0.436 & 0.874 & 0.458 & 0.394 & 0.798 & 0.874 & 0.748 & 0.654 & 0.734 \\ 
   & 0.999 & 0.999 & 0.894 & 0.928 & 0.832 & 0.890 & 0.874 & 0.938 & 0.822 & 0.864 & 0.874 & 0.972 & 0.848 & 0.878 \\ 
   \hline
\end{tabular}

	\end{center}
\end{table}

\subsubsection*{Setting~\ref*{enum:sim_exp_unif}}
\begin{table}[H]
	\begin{center}
		\caption{Rejection rate for Setting~\ref*{enum:sim_exp_unif} with $\rho=0.5$, $p(0)=0.6$, $p(1)=0.6$, $\Delta G(\uptau_G(x))=0$ and $n\in\lbrace500,1000\rbrace$.\label{supp_tab:sim_res_exp_exp_unif_rho_05_p0_06_p1_06_j000}}
		\scriptsize
		\begin{tabular}{ccc@{\extracolsep{6pt}}cc@{\extracolsep{6pt}}cc@{\extracolsep{6pt}}cc@{\extracolsep{6pt}}cc}
& & & \multicolumn{4}{c}{$n=500$} & \multicolumn{4}{c}{$n=1000$} \\ \cline{4-7}\cline{8-11}
& \multicolumn{2}{c}{$\uptau_G(x)$} & \multicolumn{2}{c}{\makecell{Rej. rate of\\$H_{0x}$}} & \multicolumn{2}{c}{\makecell{Rej. rate of\\$H_0$}} & \multicolumn{2}{c}{\makecell{Rej. rate of\\$H_{0x}$}} & \multicolumn{2}{c}{\makecell{Rej. rate of\\$H_0$}} \\ \cline{2-3}\cline{4-5}\cline{6-7}\cline{8-9}\cline{10-11}
Scenario & $x=0$ & $x=1$ & $x=0$ & $x=1$ & M1 & M2 & $x=0$ & $x=1$ & M1 & M2 \\ \hline
\multirow{5}{*}{A} & 0.950 & 0.950 & 0.002 & 0.008 & 0.000 & 0.006 & 0.014 & 0.016 & 0.000 & 0.024 \\ 
   & 0.950 & 0.975 & 0.002 & 0.016 & 0.000 & 0.004 & 0.014 & 0.008 & 0.000 & 0.018 \\ 
   & 0.975 & 0.975 & 0.020 & 0.016 & 0.000 & 0.034 & 0.010 & 0.012 & 0.000 & 0.016 \\ 
   & 0.990 & 0.950 & 0.104 & 0.006 & 0.000 & 0.006 & 0.062 & 0.016 & 0.002 & 0.016 \\ 
   & 0.999 & 0.950 & 0.892 & 0.006 & 0.006 & 0.010 & 0.944 & 0.010 & 0.010 & 0.010 \\ 
   \hline
\multirow{7}{*}{B} & 0.975 & 0.990 & 0.020 & 0.132 & 0.004 & 0.030 & 0.010 & 0.074 & 0.002 & 0.010 \\ 
   & 0.975 & 0.999 & 0.020 & 0.880 & 0.018 & 0.020 & 0.010 & 0.934 & 0.008 & 0.010 \\ 
   & 0.990 & 0.975 & 0.104 & 0.018 & 0.000 & 0.024 & 0.062 & 0.006 & 0.000 & 0.008 \\ 
   & 0.990 & 0.990 & 0.104 & 0.122 & 0.018 & 0.154 & 0.062 & 0.064 & 0.000 & 0.074 \\ 
   & 0.990 & 0.999 & 0.104 & 0.868 & 0.100 & 0.104 & 0.062 & 0.926 & 0.058 & 0.060 \\ 
   & 0.999 & 0.975 & 0.892 & 0.014 & 0.010 & 0.018 & 0.944 & 0.008 & 0.008 & 0.008 \\ 
   & 0.999 & 0.990 & 0.892 & 0.118 & 0.092 & 0.116 & 0.944 & 0.070 & 0.068 & 0.070 \\ 
   \hline
\multirow{4}{*}{C} & 0.995 & 0.995 & 0.296 & 0.282 & 0.074 & 0.318 & 0.252 & 0.246 & 0.058 & 0.272 \\ 
   & 0.995 & 0.999 & 0.296 & 0.890 & 0.264 & 0.294 & 0.252 & 0.932 & 0.238 & 0.252 \\ 
   & 0.999 & 0.995 & 0.892 & 0.288 & 0.260 & 0.288 & 0.944 & 0.234 & 0.220 & 0.232 \\ 
   & 0.999 & 0.999 & 0.892 & 0.888 & 0.788 & 0.886 & 0.944 & 0.928 & 0.876 & 0.930 \\ 
   \hline
\end{tabular}

	\end{center}
\end{table}
\begin{table}[H]
	\begin{center}
		\caption{Rejection rate for Setting~\ref*{enum:sim_exp_unif} with $\rho=0.5$, $p(0)=0.6$, $p(1)=0.6$, $\Delta G(\uptau_G(x))=0.01$ and $n\in\lbrace500,1000\rbrace$.\label{supp_tab:sim_res_exp_exp_unif_rho_05_p0_06_p1_06_j001}}
		\scriptsize
		\begin{tabular}{ccc@{\extracolsep{6pt}}cc@{\extracolsep{6pt}}cc@{\extracolsep{6pt}}cc@{\extracolsep{6pt}}cc}
& & & \multicolumn{4}{c}{$n=500$} & \multicolumn{4}{c}{$n=1000$} \\ \cline{4-7}\cline{8-11}
& \multicolumn{2}{c}{$\uptau_G(x)$} & \multicolumn{2}{c}{\makecell{Rej. rate of\\$H_{0x}$}} & \multicolumn{2}{c}{\makecell{Rej. rate of\\$H_0$}} & \multicolumn{2}{c}{\makecell{Rej. rate of\\$H_{0x}$}} & \multicolumn{2}{c}{\makecell{Rej. rate of\\$H_0$}} \\ \cline{2-3}\cline{4-5}\cline{6-7}\cline{8-9}\cline{10-11}
Scenario & $x=0$ & $x=1$ & $x=0$ & $x=1$ & M1 & M2 & $x=0$ & $x=1$ & M1 & M2 \\ \hline
\multirow{5}{*}{A} & 0.950 & 0.950 & 0.006 & 0.006 & 0.000 & 0.008 & 0.020 & 0.024 & 0.002 & 0.032 \\ 
   & 0.950 & 0.975 & 0.006 & 0.024 & 0.000 & 0.008 & 0.020 & 0.006 & 0.000 & 0.018 \\ 
   & 0.975 & 0.975 & 0.012 & 0.018 & 0.000 & 0.026 & 0.012 & 0.010 & 0.000 & 0.010 \\ 
   & 0.990 & 0.950 & 0.114 & 0.006 & 0.000 & 0.006 & 0.066 & 0.026 & 0.004 & 0.026 \\ 
   & 0.999 & 0.950 & 0.898 & 0.000 & 0.000 & 0.000 & 0.952 & 0.014 & 0.014 & 0.014 \\ 
   \hline
\multirow{7}{*}{B} & 0.975 & 0.990 & 0.012 & 0.126 & 0.002 & 0.018 & 0.012 & 0.062 & 0.002 & 0.012 \\ 
   & 0.975 & 0.999 & 0.012 & 0.884 & 0.012 & 0.012 & 0.012 & 0.936 & 0.012 & 0.012 \\ 
   & 0.990 & 0.975 & 0.114 & 0.018 & 0.000 & 0.022 & 0.066 & 0.010 & 0.000 & 0.010 \\ 
   & 0.990 & 0.990 & 0.114 & 0.134 & 0.014 & 0.172 & 0.066 & 0.060 & 0.006 & 0.070 \\ 
   & 0.990 & 0.999 & 0.114 & 0.878 & 0.102 & 0.114 & 0.066 & 0.948 & 0.064 & 0.064 \\ 
   & 0.999 & 0.975 & 0.898 & 0.020 & 0.016 & 0.020 & 0.952 & 0.008 & 0.008 & 0.008 \\ 
   & 0.999 & 0.990 & 0.898 & 0.132 & 0.110 & 0.130 & 0.952 & 0.056 & 0.056 & 0.056 \\ 
   \hline
\multirow{4}{*}{C} & 0.995 & 0.995 & 0.350 & 0.300 & 0.096 & 0.364 & 0.278 & 0.260 & 0.062 & 0.282 \\ 
   & 0.995 & 0.999 & 0.350 & 0.882 & 0.314 & 0.350 & 0.278 & 0.934 & 0.266 & 0.278 \\ 
   & 0.999 & 0.995 & 0.898 & 0.290 & 0.254 & 0.286 & 0.952 & 0.260 & 0.248 & 0.258 \\ 
   & 0.999 & 0.999 & 0.898 & 0.884 & 0.792 & 0.886 & 0.952 & 0.932 & 0.886 & 0.940 \\ 
   \hline
\end{tabular}

	\end{center}
\end{table}

\begin{table}[H]
	\begin{center}
		\caption{Rejection rate for Setting~\ref*{enum:sim_exp_unif} with $p(0)=0.6$, $p(1)=0.6$, $\Delta G(\uptau_G(x))=0.01$, $n=1000$ and $\rho\in\lbrace0.3,0.5\rbrace$.\label{supp_tab:sim_res_exp_exp_unif_p0_06_p1_06_j_001_rho_comp}}
		\scriptsize
		\begin{tabular}{ccc@{\extracolsep{6pt}}cc@{\extracolsep{6pt}}cc@{\extracolsep{6pt}}cc@{\extracolsep{6pt}}cc}
& & & \multicolumn{4}{c}{$\rho=0.3$} & \multicolumn{4}{c}{$\rho=0.5$} \\ \cline{4-7}\cline{8-11}
& \multicolumn{2}{c}{$\uptau_G(x)$} & \multicolumn{2}{c}{\makecell{Rej. rate of\\$H_{0x}$}} & \multicolumn{2}{c}{\makecell{Rej. rate of\\$H_0$}} & \multicolumn{2}{c}{\makecell{Rej. rate of\\$H_{0x}$}} & \multicolumn{2}{c}{\makecell{Rej. rate of\\$H_0$}} \\ \cline{2-3}\cline{4-5}\cline{6-7}\cline{8-9}\cline{10-11}
Scenario & $x=0$ & $x=1$ & $x=0$ & $x=1$ & M1 & M2 & $x=0$ & $x=1$ & M1 & M2 \\ \hline
\multirow{5}{*}{A} & 0.950 & 0.950 & 0.014 & 0.014 & 0.000 & 0.020 & 0.020 & 0.024 & 0.002 & 0.032 \\ 
   & 0.950 & 0.975 & 0.014 & 0.012 & 0.000 & 0.018 & 0.020 & 0.006 & 0.000 & 0.018 \\ 
   & 0.975 & 0.975 & 0.020 & 0.012 & 0.000 & 0.024 & 0.012 & 0.010 & 0.000 & 0.010 \\ 
   & 0.990 & 0.950 & 0.076 & 0.010 & 0.000 & 0.014 & 0.066 & 0.026 & 0.004 & 0.026 \\ 
   & 0.999 & 0.950 & 0.968 & 0.014 & 0.014 & 0.014 & 0.952 & 0.014 & 0.014 & 0.014 \\ 
   \hline
\multirow{7}{*}{B} & 0.975 & 0.990 & 0.020 & 0.088 & 0.002 & 0.022 & 0.012 & 0.062 & 0.002 & 0.012 \\ 
   & 0.975 & 0.999 & 0.020 & 0.908 & 0.018 & 0.020 & 0.012 & 0.936 & 0.012 & 0.012 \\ 
   & 0.990 & 0.975 & 0.076 & 0.010 & 0.000 & 0.012 & 0.066 & 0.010 & 0.000 & 0.010 \\ 
   & 0.990 & 0.990 & 0.076 & 0.080 & 0.006 & 0.090 & 0.066 & 0.060 & 0.006 & 0.070 \\ 
   & 0.990 & 0.999 & 0.076 & 0.898 & 0.062 & 0.076 & 0.066 & 0.948 & 0.064 & 0.064 \\ 
   & 0.999 & 0.975 & 0.968 & 0.012 & 0.010 & 0.012 & 0.952 & 0.008 & 0.008 & 0.008 \\ 
   & 0.999 & 0.990 & 0.968 & 0.080 & 0.074 & 0.080 & 0.952 & 0.056 & 0.056 & 0.056 \\ 
   \hline
\multirow{4}{*}{C} & 0.995 & 0.995 & 0.282 & 0.290 & 0.078 & 0.312 & 0.278 & 0.260 & 0.062 & 0.282 \\ 
   & 0.995 & 0.999 & 0.282 & 0.896 & 0.258 & 0.282 & 0.278 & 0.934 & 0.266 & 0.278 \\ 
   & 0.999 & 0.995 & 0.968 & 0.284 & 0.272 & 0.280 & 0.952 & 0.260 & 0.248 & 0.258 \\ 
   & 0.999 & 0.999 & 0.968 & 0.894 & 0.862 & 0.914 & 0.952 & 0.932 & 0.886 & 0.940 \\ 
   \hline
\end{tabular}

	\end{center}
\end{table}

\begin{table}[H]
	\begin{center}
		\caption{Rejection rate for Setting~\ref*{enum:sim_exp_unif} with $\rho=0.5$, $\Delta G(\uptau_G(x))=0.01$, $n=1000$ and $p(x)\in\lbrace0.4,0.6\rbrace$ (M1/M2: Method 1/ Method 2.)\label{supp_tab:sim_res_exp_exp_unif_rho_05_j_001_p_comp}}
		\scriptsize
		\setlength{\tabcolsep}{3.5pt}
		\begin{tabular}{ccc@{\extracolsep{6pt}}cc@{\extracolsep{6pt}}cc@{\extracolsep{6pt}}cc@{\extracolsep{6pt}}cc@{\extracolsep{6pt}}cc@{\extracolsep{6pt}}cc}
& & & \multicolumn{4}{c}{$p(0)=p(1)=0.6$} & \multicolumn{4}{c}{$p(0)=0.6$, $p(1)=0.4$} & \multicolumn{4}{c}{$p(0)=p(1)=0.4$} \\ \cline{4-7}\cline{8-11}\cline{12-15}
& \multicolumn{2}{c}{$\uptau_G(x)$} & \multicolumn{2}{c}{\makecell{Rej. rate of\\$H_{0x}$}} & \multicolumn{2}{c}{\makecell{Rej. rate of\\$H_0$}} & \multicolumn{2}{c}{\makecell{Rej. rate of\\$H_{0x}$}} & \multicolumn{2}{c}{\makecell{Rej. rate of\\$H_0$}}& \multicolumn{2}{c}{\makecell{Rej. rate of\\$H_{0x}$}} & \multicolumn{2}{c}{\makecell{Rej. rate of\\$H_0$}} \\ \cline{2-3}\cline{4-5}\cline{6-7}\cline{8-9}\cline{10-11}\cline{12-13}\cline{14-15}
Scenario & $x=0$ & $x=1$ & $x=0$ & $x=1$ & M1 & M2 & $x=0$ & $x=1$ & M1 & M2 & $x=0$ & $x=1$ & M1 & M2 \\ \hline
\multirow{5}{*}{A} & 0.950 & 0.950 & 0.020 & 0.024 & 0.002 & 0.032 & 0.020 & 0.022 & 0.002 & 0.028 & 0.004 & 0.016 & 0.000 & 0.012 \\ 
   & 0.950 & 0.975 & 0.020 & 0.006 & 0.000 & 0.018 & 0.020 & 0.024 & 0.002 & 0.022 & 0.004 & 0.026 & 0.000 & 0.012 \\ 
   & 0.975 & 0.975 & 0.012 & 0.010 & 0.000 & 0.010 & 0.012 & 0.018 & 0.000 & 0.022 & 0.020 & 0.018 & 0.000 & 0.024 \\ 
   & 0.990 & 0.950 & 0.066 & 0.026 & 0.004 & 0.026 & 0.066 & 0.016 & 0.002 & 0.016 & 0.182 & 0.014 & 0.002 & 0.014 \\ 
   & 0.999 & 0.950 & 0.952 & 0.014 & 0.014 & 0.014 & 0.952 & 0.012 & 0.012 & 0.012 & 0.942 & 0.014 & 0.014 & 0.014 \\ 
   \hline
\multirow{7}{*}{B} & 0.975 & 0.990 & 0.012 & 0.062 & 0.002 & 0.012 & 0.012 & 0.168 & 0.002 & 0.140 & 0.020 & 0.174 & 0.004 & 0.024 \\ 
   & 0.975 & 0.999 & 0.012 & 0.936 & 0.012 & 0.012 & 0.012 & 0.928 & 0.012 & 0.012 & 0.020 & 0.928 & 0.020 & 0.020 \\ 
   & 0.990 & 0.975 & 0.066 & 0.010 & 0.000 & 0.010 & 0.066 & 0.020 & 0.002 & 0.018 & 0.182 & 0.018 & 0.002 & 0.030 \\ 
   & 0.990 & 0.990 & 0.066 & 0.060 & 0.006 & 0.070 & 0.066 & 0.180 & 0.004 & 0.172 & 0.182 & 0.172 & 0.036 & 0.204 \\ 
   & 0.990 & 0.999 & 0.066 & 0.948 & 0.064 & 0.064 & 0.066 & 0.920 & 0.062 & 0.066 & 0.182 & 0.920 & 0.170 & 0.182 \\ 
   & 0.999 & 0.975 & 0.952 & 0.008 & 0.008 & 0.008 & 0.952 & 0.030 & 0.028 & 0.030 & 0.942 & 0.012 & 0.010 & 0.012 \\ 
   & 0.999 & 0.990 & 0.952 & 0.056 & 0.056 & 0.056 & 0.952 & 0.184 & 0.170 & 0.184 & 0.942 & 0.168 & 0.158 & 0.168 \\ 
   \hline
\multirow{4}{*}{C} & 0.995 & 0.995 & 0.278 & 0.260 & 0.062 & 0.282 & 0.278 & 0.436 & 0.114 & 0.408 & 0.450 & 0.438 & 0.200 & 0.468 \\ 
   & 0.995 & 0.999 & 0.278 & 0.934 & 0.266 & 0.278 & 0.278 & 0.924 & 0.256 & 0.372 & 0.450 & 0.926 & 0.416 & 0.450 \\ 
   & 0.999 & 0.995 & 0.952 & 0.260 & 0.248 & 0.258 & 0.952 & 0.454 & 0.430 & 0.452 & 0.942 & 0.426 & 0.402 & 0.424 \\ 
   & 0.999 & 0.999 & 0.952 & 0.932 & 0.886 & 0.940 & 0.952 & 0.922 & 0.876 & 0.916 & 0.942 & 0.924 & 0.866 & 0.954 \\ 
   \hline
\end{tabular}

	\end{center}
\end{table}

\end{document}